\long\def\symbolfootnote[#1]#2{\begingroup%
\def\thefootnote{\fnsymbol{footnote}}\footnote[#1]{#2}\endgroup}
\newcommand{\R}{\ensuremath{\mathbb{R}}}
\newcommand{\eps}{\varepsilon}  
\newcommand{\1}{\ensuremath{\mathbf{1}}}  
 \newcommand\Tau{\mathcal{T}}
\theoremstyle{definition} 
\newtheorem{example}{Example}
\newtheorem{definition}{Definition}
\newtheorem{assumption}{Assumption}
\newtheorem{remark}{Remark}
\newtheorem{assumptionp}{Assumption}
\DeclareMathOperator*{\argmax}{arg\,max}
\DeclareMathOperator*{\argmin}{arg\,min}
\theoremstyle{theorem} 
\newtheorem{theorem}{Theorem}
\newtheorem{lemma}{Lemma}
\newtheorem{corollary}{Corollary}
\begin{document}
\bibliographystyle{chicago}

\begin{center}
	
	\quad
	\vspace{-10mm}

	\Large{\bf{Normalizations and misspecification \\in skill formation models}}\symbolfootnote[1]{Thanks to Antonia Antweiler, Gabriella Conti, Philipp Eisenhauer, Janos Gabler, Hans-Martin von Gaudecker, Lena Janys, Julius Kappenberg, Philipp Ketz, Josh Kinsler, Matt Masten, Mauricio Olivares, Ronni Pavan, Jack Porter and seminar participants at TSE, UCL, PSE, LMU, Zurich, Luxembourg, Tsinghua, UCSD, McMaster, Yale, and the German Economic Association for helpful comments and to Matt Wiswall for many useful discussions. I also thank five referees and the editor for constructive comments that helped to improve the paper. The research was supported by the European Research Council (ERC-2020-STG-949319).}

	\vspace{5mm}

	\normalsize

 	Joachim Freyberger\symbolfootnote[3]{University of Bonn. Email: freyberger@uni-bonn.de. } 
 	
	\vspace{5mm}
	
	 August 2025\symbolfootnote[2]{First version: April 10, 2020}

	\vspace{0.3cm}

\end{center}

\normalsize

\noindent \textbf{Abstract} \medskip

\noindent  An important class of structural models studies the determinants of skill formation and the optimal timing of interventions. In this paper, I provide new identification results for these models and investigate the effects of seemingly innocuous  scale and location  restrictions on parameters of interest. To do so, I first characterize the identified set of all parameters without these additional restrictions and show that important policy-relevant parameters are point identified under weaker assumptions than commonly used in the literature. The implications of imposing standard scale and location restrictions depend on how the model is specified, but they generally impact the interpretation of parameters and may affect counterfactuals. Importantly, with the popular CES production function, commonly used scale restrictions fix identified parameters and lead to misspecification. Consequently, simply changing the units of measurements of observed variables might yield ineffective investment strategies and misleading policy recommendations. I show how existing estimators can easily be adapted  to solve these issues. As a byproduct, this paper also presents a general and formal definition of when restrictions are truly normalizations.

\vspace{0.5cm}



\newpage
 
\section{Introduction}

Structural models are key tools of economists to simulate changes in the economic environment and evaluate and design policies. An important class of such models deals with skill and human capital formation. Human capital formation is a main part of many structural models and is an important driver of economic growth and inequality \cite{MT:16}, making policies that target skill formation particularly vital. This growing literature, originating from the seminal papers of \citeN{CH:08} and \citeN{CHS:10}, estimates production functions of various skills of children, studies how past skills, parental skills, and investments affect future skills, and links the skills to adult outcomes. The results provide valuable insights into determinants of skill formation, timing of investments, and the design of optimal interventions for disadvantaged children.  

A major challenge in these models is that skills are not directly observable, lack a natural scale, and can only be approximated through measurements, such as test scores. The early literature has provided sufficient conditions for identification of parametric and nonparametric versions of these complex models, which is often achieved using a two-step approach. First, the distribution of skills is identified from the measurements, and the production function is then identified in the second step (see e.g. \shortciteN{CHS:10}, \citeN{AMNS:17}, and \citeN{AMN:19}). To obtain point identification in the first step, it is necessary to fix the unknown scales and locations of skills. However, it remains unclear whether these restrictions are still required when combined with common parametric assumptions on the production function in the second step.

This paper presents a new identification analysis for skill formation models and investigate the consequences of seemingly innocuous normalizations. Instead of providing sufficient conditions for point identification using multi-step arguments, I start by pooling all parts of the model and characterize the identified set of all parameters without the scale and location restrictions. This approach reveals which parameters are point identified or partially identified and how the additional restrictions affect the identified set. Notably, I show that many critical features of these models are invariant to scale and location restrictions, point-identified without them, and identified under less restrictive assumptions than previously considered. These results apply to both parametric and nonparametric versions of the model.

The exact implications of imposing scale and location restrictions on parameters and counterfactuals depend on the model specification and the object of interest. Specifically, a restriction could be a harmless normalization with one production function, but could impose strong assumptions with a different one. I thus analyze two popular specifications, namely the trans-log and the CES production functions. For the trans-log case, it turns out that standard scale and location restrictions do not impose additional testable restrictions, simply select an element of the identified set, and many objects of interest are invariant to them. However, they still affect production function parameters and certain counterfactuals. Since such restrictions are often arbitrary and tied to the units of measurement of the data, these features can be hard to interpret, difficult to compare across different studies, and potentially mislead policy recommendations. More importantly, for the CES case, I show that commonly restricted scale parameters are in fact identified and setting them to specific values is typically inconsistent with the data. Consequently, with these restrictions, simply changing the units of a skill measure (e.g. from years to months) can impact estimated dynamics, persistence of skills, effects of parental investments, and optimal investment strategies.   

In addition to these identification results, I demonstrate how existing estimators can be modified to estimate all identified parameters. While some parameters remain difficult to interpret, the modified estimator enables the calculation of point identified features and counterfactuals that are invariant to scale and location restrictions and to the units of measurement. I illustrate these results through Monte Carlo simulations and an empirical application based on the framework and estimator of  \shortciteN{AMN:19}.

A broader takeaway is that researchers should carefully verify whether imposed restrictions are in fact normalizations (as formally defined in Definition \ref{d:normalization}). This is particularly crucial in structural models where identification often proceeds in multiple steps, and supposedly innocuous restrictions in one step may unintentionally affect results in subsequent steps. By focusing on features of the model that are invariant to these restrictions, researchers can ensure that their findings remain interpretable and comparable across different  studies.

\textbf{Literature:} Following the influential work of \citeN{CH:08} and \shortciteN{CHS:10}, a growing body of literature has studied the development of latent variables. For example, \citeN{HP:11} study the determinants of children's cognitive and non-cognitive skills using Indian data, \citeN{FK:14} investigate how time allocation affects both cognitive and non-cognitive development using Australian data, \shortciteN{AMN:19} and \shortciteN{AMNS:17} estimate the effects of health and cognition on human capital with data from India and Ethiopia/Peru, respectively, and \shortciteN{ACDMR:19} study which intervention led to gains in cognitive and socio-emotional skills using Colombian data. This literature provides important insights into the nature of persistence, dynamic complementarities, and the optimal targeting of interventions.\footnote{See also \citeN{CH:07}, \citeN{CH:09}, \citeN{Cunha:11}, \citeN{HPS:13}, \citeN{AJ:16}, \citeN{HAP:17}, \citeN{ACM:2022}  and references therein.}

In these studies, the measurement system has a factor structure that requires scale and location restrictions to identify the distribution of latent variables. In particular, \citeN{CH:08} study a model where the production function is log-linear. They utilize the two-step identification arguments described above, except that they impose the scale and location restrictions using an adult outcome to obtain well-defined units of measurement (``anchoring''). They also discuss which parameters depend on the anchor or its units of measurement. In this model, I demonstrate that certain policy-relevant parameters, such as optimal investment sequences, may depend on the specific scales and units of measurements of the observed variables. I further derive a set of features that is identified under weaker restrictions on the measures and the production function and without anchoring the skills or fixing their scales. Importantly, I also provide identification results for alternaive production technologies, such as the widely used CES production function, where standard restrictions are unnecessary for identification. In these case, imposing them through an initial period normalization or through anchoring leads to misspecification.

Common restrictions are to fix the scales and locations of each latent factor in each time period by setting parameters in the measurement system.\footnote{For example, \citeN{CH:08} and \shortciteN{CHS:10} set the scale of the first skill measure to 1 in each period - see the discussion around equations (7) -- (8) of \citeN{CH:08} and the first paragraph on page 891 of \shortciteN{CHS:10}. In addition, as discussed in footnote 17 of \citeN{CH:08}, their identification results rely on either setting one of the location parameters to $0$ in each time period or setting the mean of the skill to $0$ in each time period. These restrictions have also been used in subsequent papers such as \shortciteN{AMN:19}. The nonparametric identification results in \shortciteN{CHS:10}  impose analogous restrictions in their Assumption (v) of Theorem 2.} If the same values are used across all time periods, Agostinelli and Wiswall (2016a, 2016b, 2024) refer to this as an age-invariance assumption (see Assumption \ref{a:ageinvariant_technology_skills}(a)  for a formal definition or Definition 1 of  Agostinelli and Wiswall (2024)). In important contributions,  Agostinelli and Wiswall (2016a, 2016b, 2024) demonstrate that imposing this assumption can yield a misspecified model when the production function has a known scale and location. They also propose relaxations of production functions with age-invariant measures and show that, even without age-invariance, production function restrictions can yield point identification (see Corollary \ref{c:pointident} below and the related discussion). However, in both cases, they impose scale and location restrictions in the first period, arguing that these are necessary for point identification. While this is true for the trans-log production function used in their application, I show that the scale restriction is not required for the CES production function. Furthermore, I demonstrate for different specifications that while production function parameters, certain counterfactuals, and estimated dynamics may depend on specific scales and units of measurements, many key parameters are invariant to these restrictions, including age-invariance, and are in fact point identified without them.  

\nocite{AW:16a}\nocite{AW:16b}\nocite{AW:22}

In independent research, \citeN{DKP:20} show that with a trans-log production function, anchored treatment effects are invariant to scale and location restrictions and are identified without age-invariance.  Through simulations, they also show that standard restrictions lead to inconsistent estimated treatment effects with the CES production function. In the trans-log case, these results aligns with part 4 of Theorem \ref{th:identfunctions} below. While their proof is specific to the trans-log production function with a log-linear measurement system, my results extend to other cases. Additionally, I show that only skill measures in the first period are needed to identify anchored treatment effects, reducing the data requirements and assumptions considerably. I also consider other policy-relevant features. Additionally, for the CES case, I show why standard restrictions can cause misspecification and describe how existing estimators can be adapted to resolve this issue.

Table 1 of \shortciteN{ACM:2022} provides a selective overview of specifications and estimation methods used in well-cited papers in the literature. It shows that with the exception of \shortciteN{CHS:10}, earlier work predominantly employed the Cobb-Douglas specification, likely due to the simplicity of the corresponding estimator.  Since \shortciteN{AMN:19}, there has been increasing adoption of the (nested) CES production function, facilitated by their computationally simple two-step estimator (compared to the MLE of \shortciteN{CHS:10}). Next to the papers in this table, recent contributions using this approach include \shortciteN{ANT:23}, \shortciteN{BFHD:24},  and \citeN{GG:23}. The trans-log production function proposed by \citeN{AW:22}  is nonnested with the CES specification and can also be estimated using an IV/GMM approach.

The consequences of normalizations have been discussed in various contexts. In factor models, while certain restrictions are necessary for point identification (see e.g. \citeN{AR:56} or \citeN{Madansky:64}), \citeN{Williams:20} shows that certain features, such as variance decompositions, can be identified without them. I combine a factor model with a production function, which provides additional restrictions. Many studies argue and show in specific examples that critical features should not depend on normalizations, see e.g.  \citeN{Freyberger:18} and \shortciteN{KSSS:18}. Similar to this paper, but in a very different context, \citeN{AS:14} discuss restrictions that were considered normalizations, but are restrictive assumptions. \shortciteN{KSS:20} show that certain counterfactuals in dynamic discrete choice models are identified, even when the model itself is not.  \shortciteN{RWZ:10} define a normalization in vector autoregressive models to pin down unidentified signs; see end of Section \ref{s:normal} for more details. Matzkin (1994, 2007) discusses several examples of normalizations, some of which are motivated by economic theory. \citeN{Lewbel:19} provides an informal discussion of normalizations, which is conceptually very similar to the formal definition I provide below. When normalizing restrictions are needed for point identification, there are often multiple ways to impose them when estimating the model. Good choices can then yield particularly convenient restrictions on the parameter space (as in \citeN{GL:19}) or even faster rates of convergence (as in \shortciteN{CKK:15}). See also \shortciteN{HWZ:07} for a discussion on estimation with normalizations. 
		
		\nocite{Matzkin:94}\nocite{Matzkin:07}

\textbf{Structure:} In Section \ref{s:normal}, I provide a formal definition of a normalization, which to the best my knowledge currently does not exist in the literature, as well as illustrative examples. Section \ref{s:skillform} contains the identification analysis of different parametric skill formation models. A nonparametric version is discussed in Appendix \ref{s:genident}. Sections \ref{s:montecarlo} and \ref{s:application} contain the Monte Carlo simulations and the empirical application, respectively. All proofs are in the appendix.

\section{Normalizations}
\label{s:normal}

I begin by providing a formal definition of a normalization, which serves as a basis for the subsequent analysis. I illustrate the definition and potential problems using a probit model and a simple version of the skill formation model.

\subsection{General definition and illustration}
\label{s:normal_def}

Suppose we have a model where $\tau_0 \in \Tau$ denotes the true values of the parameters and $\Tau$ is the parameter space. Here $\tau_0$ could be the coefficients in a regression model or the parameters in a skill formation model. If the model is semiparametric, $\tau_0$ could also contain unknown functions. Let $Z$ contain all observed random variables, such as $Y$ and $X$, with distribution $P(Z)$. For any $\tau \in \Tau$, the model generates a joint distribution of the data $Z$, denoted by $P(Z,\tau)$. Since the model is assumed to be  correctly specified, the true distribution of $Z$ is $P(Z,\tau_{0})$.  The model typically contains certain assumptions, such as functional form or independence assumptions, but suppose that so far none of the normalizations are imposed. The identified set for $\tau_{0}$ is $\Tau_0 = \{ \tau \in \Tau: P(Z,\tau) = P(Z,\tau_0) \}.$  If  $\Tau_0$ is a singleton, $\tau_0$ is point identified. We say that $\tau_1, \tau_2 \in \Tau$ are observationally equivalent if they generate the same distribution of the data: $P(Z,\tau_1) = P(Z,\tau_2)$. Let $g(\tau_0)$ be a function of interest, such as a counterfactual.  The identified set for $g(\tau_{0})$ is $\Tau_{g_0} = \{ g(\tau): \tau \in \Tau_0 \}.$
Notice that $g(\tau_0)$ could be point identified (i.e.  $\Tau_{g_0}$ is a singleton) even if $\tau_0$ is not.

In models with normalizations, $\Tau_0$ is typically not a singleton. A normalization is a restriction of the form $\tau \in \Tau_N$, where $\Tau_N \subseteq \Tau$ is a known set. Hence, a normalization restricts the feasible values of $\tau$, such as setting an element to $1$. I define a restriction to be a normalization with respect to a function $g(\tau_0)$ if it does not change the identified set of $g(\tau_0)$. 
\begin{definition}
	\label{d:normalization}
	The restriction $\tau \in \Tau_N$ is a \textit{normalization} with respect to $g(\tau_0)$ if $\{g(\tau): \tau \in \Tau_0 \cap \Tau_N \} = \{g(\tau): \tau \in \Tau_0 \}$ for all $\tau_0 \in \Tau$.
\end{definition}

Typically, $\Tau_0 \cap \Tau_N$ is a singleton. That is, we achieve point identification with the additional restrictions. Definition \ref{d:normalization} then implies $\{g(\tau_0)\} = \{g(\tau): \tau \in \Tau_0 \cap \Tau_N \} = \{g(\tau): \tau \in \Tau_0 \}$ and thus, that $g(\tau_0)$ is point identified, even without the restriction $\tau \in \Tau_N$. Since these restrictions are often arbitrary, $\tau_0$ is usually not in $ \Tau_N$ in which case the restriction $\tau \in \Tau_N$ is not a normalization with respect to $\tau_0$, but it can be a normalization with respect to particular functions of interest. Moreover, the restriction can be a normalization for some function and not for others. Hence, researchers need to argue that normalizations hold with respect to all functions of interest, such as all counterfactuals. Finally, a normalization cannot impose any additional overidentifying restrictions in the sense that  if $\Tau_0 \neq \emptyset$, then $\Tau_0 \cap \Tau_N \neq \emptyset$.

As a simple example, consider the probit model where $Y = \1(\beta_{0,1} + \beta_{0,2}X \geq U)$, $var(X) > 0$, $U \mid X \sim N(\mu_0, \sigma_0^2)$ and $\sigma_0^2 > 0$. The true parameter vector is $\tau_0 = (\beta_{0,1}, \beta_{0,2},\mu_0,\sigma_0)'$ and $Z = (Y,X)$. Now notice that
$$P(Y=1 \mid X = x) = \Phi \left( \frac{\beta_{0,1}-\mu_0}{\sigma_0} +  \frac{\beta_{0,2}}{\sigma_0}x\right),$$
where $\Phi$ denotes the standard normal cdf. Since $var(X) > 0$, $\frac{\beta_{0,1}-\mu_0}{\sigma_0}$ and $\frac{\beta_{0,2}}{\sigma_0}$ are point identified. It is also well known and easy to see that
$$\Tau_0 = \left\{\tau \in \R^3 \times \R_{>0}: \frac{\beta_{1}-\mu}{\sigma} = \frac{\beta_{0,1}-\mu_0}{\sigma_0} \text{ and } \frac{\beta_{2}}{\sigma}= \frac{\beta_{0,2}}{\sigma_0}    \right\}$$ 
because all values in $\Tau_0$ imply the same joint distribution of $(Y,X)$.

Since $\tau_0$ is not point identified, it is common to set $\mu = 0$ and $\sigma = 1$. Using the previous notation, this means that $\Tau_N = \R^2 \times 0 \times 1$ and $\Tau_0 \cap \Tau_N = \left(\frac{\beta_{0,1}-\mu_0}{\sigma_0} , \frac{\beta_{0,2}}{\sigma_0} , 0, 1\right).$ Clearly, this restriction is not a normalization with respect to $\beta_{0,1}$ or $\beta_{0,2}$, which are typically not objects of interest. In fact, in general $\tau_0 \notin \Tau_0 \cap \Tau_N$ unless $\mu_0 = 0$ and $\sigma_0 = 1$. However, this restriction is a normalization with respect to (potentially counterfactual) probabilities
$$ P(Y = 1 \mid X = x) = \Phi \left( \frac{\beta_{0,1}-\mu_0}{\sigma_0} +  \frac{\beta_{0,2}}{\sigma_0}x\right) $$
or, when $X$ is continuous, marginal effects $\frac{\partial  }{\partial x} P(Y = 1 \mid X = x)$.
Clearly, these features are point identified even though $\tau_0$ is not. 

Normalizations may help to provide useful structural interpretations of other parameters. As an example, suppose there are two covariates, 
$Y = \1(\beta_{0,1} + \beta_{0,2}X_1  + \beta_{0,3}X_2  \geq U)$, and $\beta_{0,2} > 0$ . Instead of setting $\mu = 0$ and $\sigma = 1$, we could impose $\beta_{0,1} = 0$ and $\beta_{0,2} = 1$, which are also normalizations with respect to marginal effects. In addition, we then obtain a model of the form $Y = \1( X_1  + \tilde{\beta}_{0,3}X_2  \geq \tilde{U})$ with $ \tilde{\beta}_{0,3} =  \beta_{0,3}/\beta_{0,2}$, which can then be interpreted as a relative effect (which is also identified without scale and location restrictions).  

In the context of vector autoregressive models, \shortciteN{RWZ:10} define a normalization as a restriction on the parameter space that pins down unidentified signs of coefficients. These restrictions are imposed in addition to other assumptions, such as long run restrictions. Just like above, their restrictions do not impose additional testable assumptions and they can help to provide structural interpretations of certain parameters.  Unlike my definition, it is not clear from theirs whether these restrictions are without loss of generality in the sense that they do not affect functions of interest. If they do, one would have to argue why they are reasonable.

\subsection{Simple skill formation model }
\label{s:examples}

As a more involved example, I now discuss a very simple skill formation model, which imposes very restrictive assumptions, but illustrates the previous definition and points out the types of problems that occur in more general models. 

Let $\theta_{t}$ denote skills at time $t$ and let $I_t$ be investment at time $t$, where $t = 0,1,\ldots,T$. We are interested in the roles of investment and past skills in the development of future skills.  Instead of skills, the data only contains measurements of them, denoted by $Z_{\theta,t,m}$. In this section, I first consider the simplest possible model without measurement error and a single measure $Z_{\theta,t,1}$ that takes the form $Z_{\theta,t,1}  = \lambda_{\theta,t,1} \ln \theta_{t}$,
where $\{\lambda_{\theta,t,1}\}^T_{t=0}$ are unknown parameters with $\lambda_{\theta,t,1} \neq 0$ for all $t$. In other words, in each period, we observe a scaled version of  log-skills. I also  assume that there are three periods ($T=2$) and that investment is observed. Finally, for simplicity, I parameterize the marginal distribution of skills in the initial period as $\ln \theta_{0} \sim N(0, s_0^2)$, which implies that $E[Z_{\theta,0,1}] = 0$ and $Var(Z_{\theta,0,1}) = \lambda_{\theta,0,1}^2 s_0^2$.

I consider two simple production functions without unobserved random variables. First suppose skills evolve based on the Cobb-Douglas production function: 
$$\ln \theta_{t+1} = a_t + \gamma_{1t}\ln \theta_{t} + \gamma_{2t} \ln I_{t}.$$
Define the vector containing the true values of all ten parameters as
$$\tau_0 = (\tau_{0,1}, \tau_{0,2}, \ldots, \tau_{0,10}) = (   \lambda_{\theta,0,1},  \lambda_{\theta,1,1},  \lambda_{\theta,2,1}, a_0, a_1, \gamma_{10},  \gamma_{20}, \gamma_{11},  \gamma_{21},  s^2_0  ).$$ 
Without further assumptions, $\tau_0$ is not point identified. To see the restrictions imposed by the obervables, notice that $\ln \theta_{t} = Z_{\theta,t,1}/ \lambda_{\theta,t,1} $ and therefore
$$Z_{\theta,t+1,1} = \lambda_{\theta,t+1,1} a_t  + \frac{\lambda_{\theta,t+1,1}}{\lambda_{\theta,t,1}} \gamma_{1t}Z_{\theta,t,1} + \lambda_{\theta,t+1,1}\gamma_{2t} \ln I_{t}.$$
Hence, the joint distribution of $\{Z_{\theta,t+1,1},Z_{\theta,t,1}, I_{t}\}^1_{t=0}$ identifies (a) the scaled intercepts of the production function $ \lambda_{\theta,t+1,1} a_t$, (b) the scaled slope coefficients  $\frac{\lambda_{\theta,t+1,1}}{\lambda_{\theta,t,1}} \gamma_{1t}$ and $\lambda_{\theta,t+1,1}\gamma_{2t}$, and (c) the scale variance $ \lambda_{\theta,0,1}^2 s_0^2$. The identified set can be shown to consist of all parameters that imply the same values of these identified features as the true parameters. Formally, 
\begin{align*}
	\Tau_0 = \Big\{ \tau \in \Tau : \; &  \tau_2 \tau_{4} = \tau_{0,2} \tau_{0,4}, \tau_3 \tau_{5} = \tau_{0,3} \tau_{0,5}, \\
	&   \frac{\tau_2}{\tau_1} \tau_6 =  \frac{\tau_{0,2}}{\tau_{0,1}} \tau_{0,6}  ,  \frac{\tau_3}{\tau_2} \tau_8 =  \frac{\tau_{0,3}}{\tau_{0,2}} \tau_{0,8},  \tau_2 \tau_{7} = \tau_{0,2} \tau_{0,7}  ,   \tau_3 \tau_{9} = \tau_{0,3} \tau_{0,9},  \\
	&   \tau_{1}^2 \tau_{10} = \tau_{0,1}^2 \tau_{0,10}     \Big\}	
\end{align*}
where the three rows correspond to the three sets of features above. To achieve point identification, we could set $ \lambda_{\theta,t,1} = 1$ for all $t$, in which case the intersection of $\Tau_0$  and the additional restrictions is the singleton $	\big\{ \big(  1,   1,   1,   \tau_{0,2} \tau_{0,4},   \tau_{0,3} \tau_{0,5},   \frac{\tau_{0,2}}{\tau_{0,1}} \tau_{0,6},   \tau_{0,2} \tau_{0,7}, \frac{\tau_{0,3}}{\tau_{0,2}} \tau_{0,8},  \tau_{0,3} \tau_{0,9},  \tau_{0,1}^2 \tau_{0,10} \big) \big\}.	$

As a simple numerical example, suppose $Z_{\theta,t,1} = 12 \ln \theta_{t}$ and
\begin{eqnarray*}
	\ln \theta_{t+1} &=& 0.5 \ln \theta_{t} +  0.5 \ln I_t
\end{eqnarray*}
for all $t$, and $\ln \theta_{0} \sim N(0,1) $. Here $\tau_0 = \left(12,12,12,0,0,0.5,0.5,0.5,0.5,1\right)$. If we set $\lambda_{\theta,t,1} = 1$, even though the true value is $12$, we essentially treat  $ \ln \tilde{\theta}_{t} \equiv Z_{\theta,t,1} = 12  \ln \theta_{t}$ as the skills and the corresponding production function is
$$\ln \tilde{\theta}_{t+1} =  0.5 \ln \tilde{\theta}_{t} +  6 \ln I_t. $$
The intersections of the identified set and set of additional restrictions is then the singleton $\left\{ \left(1,1,1,0,0,0.5,6,0.5,6,144\right) \right\}$
which would be the parameter estimated in practice (instead of $\tau_0$). The coefficients in front of investment are thus hard to interpret. For example, using $\tau_0$ (or setting $\lambda_{\theta,t,1} = 12$) one might conclude that increasing investment by 1\% increases skills by $0.5\%$, but with the  restriction $\lambda_{\theta,t,1} = 1$ that effect changes to $6\%$.  Hence, $\lambda_{\theta,t,1} = 1$ is not a normalization with respect to these parameters. The coefficient in front of log-skills is invariant to the scale restrictions, but only because here $\lambda_{\theta,t,1} = \lambda_{\theta,t+1,1} $ for all $t$. Notice that $Z_{\theta,t,1}$ is simply a scaled version of $\ln \theta_{t} $. If we had an alternative measure with a different scale (say 	$Z_{\theta,t,2}  = \lambda_{\theta,t,2} \ln \theta_{t}$), we would generally obtain different parameters if we replaced $Z_{\theta,t,1}$ with $Z_{\theta,t,2}$. This alternative measure could for example result from changing the units of measurements of  $Z_{\theta,t,1}$, such as using years instead of months of education.  

Even though the production function parameters are not identified, there are potential interpretations that adapt to the units of measurements. For example, the identified parameter $\lambda_{\theta,t+1,1}\gamma_{2t}$  tells us the effect of a one unit increase in $\ln(I_t)$ on skills, measured in the units of $Z_{\theta,t+1,1}$ (see Section \ref{s:application} for a specific example). It can also be shown that
\begin{align*}
  F_{\ln\theta_{t+1}}( a_t + \gamma_{1t}\ln Q_{\alpha}(\theta_{t}) + \gamma_{2t} \ln i_{t}) &= F_{\lambda_{\theta,t+1,1}\ln\theta_{t+1}}( \lambda_{\theta,t+1,1} (a_t +  \gamma_{1t}\ln Q_{\alpha}(\theta_{t}) +  \gamma_{2t} \ln i_{t}))  \\
& \hspace{-8mm} =     F_{Z_{\theta,t+1,1}}\left( \lambda_{\theta,t+1,1}a_t + \frac{\lambda_{\theta,t+1,1}}{\lambda_{\theta,t,1}}\gamma_{1t}  Q_{\alpha}(Z_{\theta,t,1}) + \lambda_{\theta,t+1,1}\gamma_{2t} \ln i_{t}\right)
\end{align*}
where $i_{t}$ is a fixed level of investment, $F_{\ln\theta_{t+1}}$ is the cdf of $\ln\theta_{t+1}$, and $Q_{\alpha}(\theta_{t})$ is the $\alpha$-quantile of $\theta_{t}$. Since the right hand side only depends on identified parameters, similar to marginal effects in the probit model, we can identify how exogenous changes in someones investment affects her rank in the skill distribution at time $t+1$ for a given skill quantile at time $t$.

As another example, first combine the production functions from two periods, write
$$\ln \theta_{2} = a_1 + \gamma_{11}a_0 +  \gamma_{11} \gamma_{10}\ln \theta_{0} +  \gamma_{11}\gamma_{20} \ln I_{0} +  \gamma_{21} \ln I_{1},$$
and  for a given investment sequence $(i_0,i_1)$ define 
$$\ln \theta_{2}(i_0,i_1) = a_1 + \gamma_{11}a_0 +  \gamma_{11} \gamma_{10}\ln \theta_{0} +  \gamma_{11}\gamma_{20} \ln i_{0} +  \gamma_{21} \ln i_{1}$$
We can rewrite this equation in terms of identified features as
\begin{align*}
	\lambda_{\theta,2,1} \ln \theta_{2}(i_0,i_1) &= \lambda_{\theta,2,1} a_1 + \lambda_{\theta,2,1}\gamma_{11}a_0  	+\left( \frac{\lambda_{\theta,2,1}}{\lambda_{\theta,1,1}}  \gamma_{11} \right)\left( \frac{\lambda_{\theta,1,1}}{\lambda_{\theta,0,1}}  \gamma_{10}\right) \lambda_{\theta,0,1}  \ln \theta_{0} \\
	& \quad  +  \left(\frac{\lambda_{\theta,2,1}}{\lambda_{\theta,1,1}} \gamma_{20}\right) \left(\lambda_{\theta,1,1} \gamma_{11}\right) \ln i_{0} +  \left(\gamma_{21}\lambda_{\theta,2,1}\right) \ln i_{1}
\end{align*} 
showing that we can identify the distribution of $\lambda_{\theta,2,1} \ln \theta_{2}(i_0,i_1)$, which we can interpret as a counterfactual measure for a given investment sequence. Using this result, we can also identify the sequence of investment which maximizes expected log-skills. That is, consider $g(\tau_0) \in \R^2$ defined as
\begin{align*}
	g(\tau_0) &= \argmax_{(i_{0},i_{1}) \in \mathcal{I}} E \left[\ln \theta_{2}(i_0,i_1)   \right],  
\end{align*}
where $\mathcal{I}$ is a set of feasible investments. Since the maximizer is invariant to changes in scales and locations of the objective function,  $g(\tau_0) = \argmax_{(i_{0},i_{1}) \in \mathcal{I}} E \left[\lambda_{\theta,2,1} \ln \theta_{2}(i_0,i_1)  \right] $,
showing that $	g(\tau_0)$ is identified. All the identified features above are invariant to the restriction $\lambda_{\theta,t,1} = 1$, implying that $\lambda_{\theta,t,1} = 1$ is a normalization with respect to these features.  

Next to primitive parameters, some counterfactuals reported in applications are not be invariant to $\lambda_{\theta,t,1} = 1$ either. As an example, suppose the production function for $\ln \theta_1$ also includes an additive interaction term, $\gamma_{30}\ln \theta_0 \ln I_0$, in which case it can be shown that
\begin{align*}
	  \lambda_{\theta,2,1} \ln \theta_{2}(i_0,i_1) &=  \zeta_0 +  \zeta_1  \ln \theta_{0}  + \zeta_3 \ln i_{0} +  \zeta_4 \ln i_{1} + \zeta_5 \ln{\theta_0}\ln i_{0}  
\end{align*} 
for identified parameters $\{\zeta_j\}^5_{j=1}$. Using $ \ln \theta_{0}  \sim N(0,s_0^2)$ it follows that
$$E[\theta_{2}(i_0,i_1)^{\lambda_{\theta,2,1}}] = \exp\left(\zeta_0  + \zeta_3 \ln i_{0} +  \zeta_4 \ln i_{1} + \left(   \zeta_1  + \zeta_5 \ln i_{0} \right)^2 s_0^2   \right)   $$
but 
$$E[\theta_{2}(i_0,i_1)] = \exp\left( \frac{1}{\lambda_{\theta,2,1}} \left(\zeta_0  + \zeta_3 \ln i_{0} +  \zeta_4 \ln i_{1} + \left(   \zeta_1  + \zeta_5 \ln i_{0} \right)^2 \frac{s_0^2}{\lambda_{\theta,2,1}} \right) \right)   $$
which implies that generally $\argmax_{(i_{0},i_{1}) \in \mathcal{I}} E \left[  \theta_{2}(i_0,i_1)   \right] \neq \argmax_{(i_{0},i_{1}) \in \mathcal{I}} E[\theta_{2}(i_0,i_1)^{\lambda_{\theta,2,1}}]     $
Hence, using different scaled versions of log-skills (or different fixed values of $\lambda_{\theta,t,1}$) results in different optimal investment sequences when studying the skill level. 

Finally, consider the CES production function
\[
\theta_{t+1} =  ( \gamma_{1t} \theta_{t}^{\sigma_{t}} + \gamma_{2t} I_{t}^{\sigma_{t}} )^{1/\sigma_{t}} 
\]
The parameter vector is now $\tau_0 = (   \lambda_{\theta,0,1},  \lambda_{\theta,1,1},  \lambda_{\theta,2,1}, \sigma_0, \sigma_1, \gamma_{10},  \gamma_{20}, \gamma_{11},  \gamma_{21}, s^2_0  )$ and we can write the production function in terms of observables as
\[
\exp(Z_{\theta,t+1,1}) =  ( \gamma_{1t} \exp(Z_{\theta,t,1})^{\sigma_{t}/\lambda_{\theta,t,1}} + \gamma_{2t} I_{t}^{\sigma_{t}} )^{\lambda_{\theta,t+1,1}/\sigma_{t}} 
\]
Due to the functional form restrictions of the CES production function, it can be shown that $\tau_0 $ is point identified without any additional restrictions.

Thus, the commonly imposed scale restrictions  ($\lambda_{\theta,t,1} = 1$ for all $t$) lead to misspecification unless the true values are all $1$. Consequently, imposing this restriction yields different conclusions for different scaled versions of log-skills (or different fixed values of $\lambda_{\theta,t,1}$) irrespective of the counterfactual. Similar as in the trans-log case, many important features are identified without additional restrictions and are invariant to scaling of  the measures.

These results generalize to more complicated settings as discussed in the next section.

\section{Skill formation models}
\label{s:skillform}

\subsection{Model}
\label{s:trans_log}

I now discuss issues arising from normalizations in a general class of skill formation models. As before, $\theta_{t}$ and $I_t$ denote skills and investment at time $t$, respectively. Now neither skills nor investment are directly observed and we denote the observed measurements by $Z_{\theta,t,m}$ and $Z_{I,t,m}$, respectively. Specifically, I consider the model:
\begin{eqnarray}
\theta_{t+1} &=& f(\theta_{t},I_{t},\delta_{t}, \eta_{\theta,t}) \hspace{46mm} t = 0, \ldots, T-1 \label{eq:prod_fn} \\
Z_{\theta,t,m} &=& \mu_{\theta,t,m} + \lambda_{\theta,t,m} \ln \theta_{t} + \eps_{\theta,t,m} \hspace{26mm} t = 0, \ldots, T, m = 1,2 \label{eq:measurement_eq} \\
Z_{I,t,m} &=& \mu_{I,t,m} + \lambda_{I,t,m} \ln I_{t} + \eps_{I,t,m} \hspace{26mm} t = 0, \ldots, T-1, m = 1,2  \label{eq:measurement_eq_invest}  
\end{eqnarray}
The first equation describes the production technology with a production function $f$ that depends on skills and investment at time $t$, a parameter vector $\delta_{t}$, and an unobserved shock $\eta_{\theta,t}$. The second and the third equation describe the measurement system for unobserved (latent) skills $\theta_{t}$ and unobserved investment $I_t$, respectively. Observed investment is a special case with $ \mu_{I,t,m} = 0$, $\lambda_{I,t,m} = 1$, and $\eps_{I,t,m} = 0$ for all $m$ and $t$ in which case $Z_{I,t,m} = \ln I_{t}$.

Next, I introduce two equations to allow for endogenous investment and anchoring at an adult outcome. If investment is exogenous, in the sense that $\eta_{\theta,t}$ is independent of $I_{t}$, then these equations are not needed for the main identification results. That is, let
\begin{eqnarray}
	\ln I_t &=& \beta_{0t} + \beta_{1t} \ln \theta_{t}  + \beta_{2t} \ln Y_{t}   + \eta_{I,t} \hspace{26mm} t = 0, \ldots, T-1  \label{eq:investment}   \\
	Q &=& \rho_{0} + \rho_{1} \ln \theta_{T} + \eta_Q \label{eq:anchor_eq}
\end{eqnarray}
Here  $Y_t$ is parental income (or another exogenous variable that affects investment) and $Q$ is an adult outcome, such as earnings or education. An adult outcome does not necessarily have to be available and we can simply use a skill measure in period $T$ in its place.

In summary, the observed variables are income $\{Y_t\}_{t=0}^{T-1}$, the measures  $\{Z_{\theta,t,m}\}_{t=0,\ldots,T, m= 1,2}$ and $\{ Z_{I,t,m} \}_{t=0,\ldots,T-1, m= 1,2}$,  and the adult outcome $Q$, but we neither observe  skills $\{\theta_{t}\}^{T}_{t=0}$ nor investment  $\{I_t\}_{t=0}^{T-1}$. We also do not observe any of the errors/shocks in the five equations. The parameters  are $\{\mu_{\theta,t,m},\lambda_{\theta,t,m}\}_{t=0,\ldots,T,m=1,2}$, $\{\mu_{I,t,m},\lambda_{I,t,m}\}_{t=0,\ldots,T-1,m=1,2}$, $\{\delta_t\}^{T-1}_{t=0}$, $\{ \beta_{0t}, \beta_{1t}, \beta_{2t}\}^{T-1}_{t=0}$, and $(\rho_0,\rho_1)$.

In the following analysis, I consider the two most commonly used forms for the production technology in the empirical literature, namely the trans-log production function with
	\begin{equation}
		\label{eq:translog}
	\ln \theta_{t+1} = a_t + \gamma_{1t}\ln \theta_{t} + \gamma_{2t} \ln I_{t} + \gamma_{3t}\ln \theta_{t} \ln I_{t}  + \eta_{\theta,t} 
  \end{equation}
	 and parameter vector $\delta_{t} = (a_t, \gamma_{1t},\gamma_{2t},\gamma_{3t})$ and the CES production function with
	\begin{equation}
	\label{eq:ces}
	\theta_{t+1} =  ( \gamma_{1t} \theta_{t}^{\sigma_{t}} + \gamma_{2t} I_{t}^{\sigma_{t}} )^{\psi_t/\sigma_{t}} \exp(\eta_{\theta,t})
  \end{equation}
	 and parameter vector $\delta_{t} = (\gamma_{1t},\gamma_{2t},\sigma_{t},\psi_t)$. When $\gamma_{3t} = 0$, the trans-log reduces to the Cobb-Douglas production function.

I now state several additional assumptions that are common in the literature.

\begin{assumption}\label{a:baseline}  \qquad 
	
	\begin{enumerate}[(a)]
		
		\item $\{\{\eps_{\theta,t,m}\}_{t=0,\ldots,T, m= 1,2}, \{\eps_{I,t,m} \}_{t=0,\ldots,T-1, m= 1,2}, \eta_Q\}$ are jointly independent and independent of $\{\{\theta_{t}\}^{T}_{t=0},\{I_t\}_{t=0}^{T-1}\}$ conditional on $\{Y_t\}_{t=0}^{T-1}$.

		\item All random variables have bounded first and second moments.

		\item $E[\eps_{\theta,t,m}] = E[\eps_{I,t,m}] = E[\eps_Q] =  0$ for all $t$ and $m$.
		  
		\item $\lambda_{\theta,t,m}, \lambda_{I,t,m} \neq 0$ for all $t$ and $m$.
		
		\item For all $t \in \{0,\ldots,T\}$, $cov(\ln \theta_t,\ln I_s) \neq 0$ for some $s \in \{0,\ldots, T-1\}$ or $cov(\ln \theta_t,\ln \theta_s) \neq 0$ for some $s \in \{0, \ldots, T\} \backslash t$ . For all $t \in \{0,1,\ldots,T-1\}$, $cov(\ln I_t,\ln \theta_s) \neq 0$ for some $s \in \{0, \ldots, T\}$ or $cov(\ln I_t,\ln I_s) \neq 0$ for some $s \in \{0, \ldots, T-1\} \backslash t$.
		
		\item For all $t$ and $m$ the real zeros of the characteristic functions of $\eps_{\theta,t,m}$ are isolated and are distinct from those of its derivatives. Identical conditions hold for the characteristic functions of $\eps_{I,t,m}$ and $\eta_{Q}$.
		
		\item The support of $(\theta_{t},I_{t},Y_t)$ includes an open ball in $\R^3$ for all $t$.
		
		\item $E[\eta_{I,t} \mid \theta_{t}, Y_t] = 0$ and $E[\eta_{\theta,t} \mid \theta_{t},\eta_{I,t} , Y_t] = \kappa_t \eta_{I,t}$ for all $t$.

	\end{enumerate}
	 
\end{assumption}

Part (a) imposes common independence assumptions on the measurement errors. Importantly, $I_t$ and $\theta_t$ are not independent and $I_t$ may be endogenous and contemporaneously correlated with $\eta_{\theta,t}$. Part (b) is a standard restriction, part (c) is needed because all measurement equations contain an intercept, and part (d) ensures that the skills actually affect the measures. Part (e) requires that skills and investment are correlated in some time periods. Sufficient conditions are that $cov(\ln \theta_{t+1}, \ln \theta_t) \neq 0$ and $cov(\ln I_{t+1}, \ln I_t) \neq 0$ for all $t$. Under parts (a) and (d) zero covariances of the latent variables are identified because, for example, $cov(\ln \theta_{t}, \ln \theta_s) = 0$ if and only if $cov( Z_{\theta,t,1}, Z_{\theta,s,1}) = 0$. Notice that I only require two measures in each period. One can drop part (e) by assuming that three measures are available.  Part (f) contains weak regularity conditions needed for nonparametric identification of the distributions of skills and investment and that hold for most common distributions.    Part (g) is a mild support condition that ensures sufficient variation of $(\theta_{t},I_{t},Y_t)$. Part (h) implies that $Y_t$ can serve as an instrument with identification based on a control function argument, as in \shortciteN{AMN:19}. Linearity of the conditional mean function can be relaxed to allow for more flexible functional forms. Exogenous investment is a special case with $\kappa_{t} = 0$.

Under parts (a)--(f) of Assumption \ref{a:baseline} we get the following result.
\begin{lemma}
	\label{l:identjoint}
	Suppose that parts (a)--(f) of Assumption \ref{a:baseline} hold. Then the joint distribution of 
	$$ \left( \{\mu_{\theta,t,m} + \lambda_{\theta,t,m} \ln \theta_{t}\}_{t=0,\dots, T,m =1,2},  \{ \mu_{I,t,m} + \lambda_{I,t,m} \ln I_{t} \}_{t=0,\dots, T-1,m =1,2}, \rho_0 + \rho_1 \ln \theta_{T} \right) $$ 
	is point identified conditional on $\{Y_t\}_{t=0}^{T-1}$.
\end{lemma}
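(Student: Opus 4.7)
The plan is to treat the measurement and anchor equations \eqref{eq:measurement_eq}, \eqref{eq:measurement_eq_invest}, and \eqref{eq:anchor_eq} as a multivariate linear factor model with mutually independent additive noise, and to identify the joint conditional distribution of the denoised signals through a Kotlarski-type deconvolution argument.

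First, collect all observables into a vector $Z$, stack the denoised quantities listed in the lemma (that is, $\mu_{\theta,t,m}+\lambda_{\theta,t,m}\ln \theta_{t}$, $\mu_{I,t,m}+\lambda_{I,t,m}\ln I_{t}$, and $\rho_{0}+\rho_{1}\ln \theta_{T}$) into a vector $L$, and collect the noise terms $\eps_{\theta,t,m}$, $\eps_{I,t,m}$, and $\eta_{Q}$ into a vector $E$, so that $Z=L+E$. By Assumption \ref{a:baseline}(a), conditional on $\{Y_t\}$, $E$ is independent of $L$ and its components are mutually independent; by (c), $E[E_j]=0$ for every component, so $E[L_j\mid Y]=E[Z_j\mid Y]$ is immediately identified. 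Consequently the conditional characteristic function factorizes as $\phi_{Z\mid Y}(\tau)=\phi_{L\mid Y}(\tau)\prod_{j}\phi_{E_{j}}(\tau_{j})$, reducing the task to identifying each noise marginal $\phi_{E_j}$.

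For each noise marginal I would invoke a Kotlarski-type argument in the spirit of \shortciteN{CHS:10} and \shortciteN{AMN:19}. Every latent factor has two noisy measurements whose loadings are nonzero by (d), and by (e) there is another latent factor correlated with it whose measurements serve as auxiliary observables. Ratios of observable cross-covariances between the two pairs of measurements identify the relative loadings; for instance, $\lambda_{\theta,t,2}/\lambda_{\theta,t,1}$ is pinned down by $cov(Z_{\theta,t,2},Z')/cov(Z_{\theta,t,1},Z')$ for a suitably correlated auxiliary observable $Z'$, with existence of such a $Z'$ guaranteed by (e) and with bounded second moments ensured by (b). After this rescaling, each pair of measurements of a common latent factor takes the standard Kotlarski form $X_{1}=F+U_{1}$, $X_{2}=F+U_{2}$, and the classical inversion identifies the marginals of $U_{1}$ and $U_{2}$, hence of the original noise terms. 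The anchor noise $\eta_{Q}$ is handled analogously by pairing $Q$ with one of the period-$T$ skill measures; the characteristic-function regularity in (f) guarantees that these deconvolutions are valid.

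Once every $\phi_{E_j}$ has been identified, dividing the factorized expression yields $\phi_{L\mid Y}$ and therefore pins down the joint conditional distribution of $L$. The hard part is the Kotlarski step without any externally imposed scale restriction on the latent factors: the classical result presumes the two measurements load identically on a common factor, whereas here they load with an unknown ratio. Identifying that ratio from observable cross-covariances is the critical device that allows the deconvolution to proceed without fixing any scale a priori, and it is precisely this device that makes clear why part (e) is the substantive content of the assumption beyond the usual independence and regularity requirements.
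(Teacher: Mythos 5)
Your proposal is correct and follows essentially the same route as the paper: identify the loading ratios $\lambda_{\theta,t,2}/\lambda_{\theta,t,1}$ from ratios of cross-covariances with an auxiliary measurement guaranteed by parts (d) and (e), rescale each pair of measurements into the standard Kotlarski form to recover the error marginals (the paper invokes Lemma 1 of Evdokimov and White, 2012), and then divide the factorized conditional characteristic function of the observables by the product of the error characteristic functions to recover the joint law of the signals. The only remark worth making is that the final division step relies on the isolated-zeros condition in part (f) together with continuity of characteristic functions, which the paper's own proof also uses despite the lemma being stated under (a)--(e) only; you flag (f) in passing, so this is not a gap in your argument.
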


The proof follows from an extension of Kotlarski's Lemma \cite{EW:12}. Lemma \ref{l:identjoint} shows that under Assumption \ref{a:baseline},  the distribution of linear combinations of log-skills and investments is identified. However, Assumption \ref{a:baseline} does not imply identification of the parameters in equations (\ref{eq:prod_fn})--(\ref{eq:measurement_eq_invest}), such as $\delta_t$. Below I discuss additional assumptions, which have been used in the literature, to achieve point identification of  two sets of parameters: (i) the primitive parameters of the model (\ref{eq:prod_fn})--(\ref{eq:anchor_eq}):   $\{\mu_{\theta,t,m},\lambda_{\theta,t,m}\}_{t=0,\ldots,T,m=1,2}$, $\{\mu_{I,t,m},\lambda_{I,t,m}\}_{t=0,\ldots,T-1,m=1,2}$, $\{\delta_t\}^{T-1}_{t=0}$, $\{ \beta_{0t}, \beta_{1t}, \beta_{2t}\}^{T-1}_{t=0}$, and $(\rho_0,\rho_1)$  and (ii) ``policy relevant'' parameters, such as how changes in investment or income affect  $Q$.  I consider various combinations of  assumptions and I discuss in what sense restrictions are normalizations.

There are two separate issues concerning identification and normalizations in this model. First, the previous literature has focused on sufficient conditions for point identification, which includes scale and location restrictions. However, it is unclear whether these restrictions are necessary. If they instead restrict identified parameters, the model may be misspecified and estimators are generally inconsistent. Second, even if the restrictions simply select an element of the identified set, it is important to understand which  features are invariant to arbitrary scale and location restrictions. Although the implications in the CES case are more interesting, I start with the more transparent trans-log case where only the second issue arises. As pointed out before, whether or not a restriction is a normalization depends on the model, and parameters may be invariant in some settings, but not in others.

\subsection{Trans-log production function}

In this section I consider the trans-log production function given in equation (\ref{eq:translog}). I first introduce additional assumptions that are commonly used in the literature.

\begin{assumption}\label{a:normalization} 
	$\lambda_{\theta,0,1}=1$ and $\mu_{\theta,0,1} = 0$. 
\end{assumption}

\begin{assumption}\label{a:ageinvariant_technology_skills} \quad
	
 \begin{enumerate}[\label=(a)]
 	
 	\item $\lambda_{\theta,t,1}=\lambda_{\theta,t+1,1}$ and $\mu_{\theta,t,1} = \mu_{\theta,t+1,1}$ for all $t = 0, \ldots, T-1$  
 	\item $a_t = 0$ and	$ \gamma_{1t} + \gamma_{2t} + \gamma_{3t} = 1$ for all $t=0,\ldots,T-1$.
 \end{enumerate}

\end{assumption}

\begin{assumption}\label{a:ageinvariant_technology_investment}\quad
	 \begin{enumerate}[\label=(a)]
		\item 	  $\lambda_{I,t,1} = 1$ and $\mu_{I,t,1} = 0$ for all $t = 0, \ldots, T-1$.
		\item 	$\beta_{0t} = 0$ and	$\beta_{1t} + \beta_{2t} = 1$ for all $t=0,\ldots,T-1$.
	\end{enumerate}

\end{assumption}

Assumption \ref{a:normalization} is usually thought of as a normalization, which is commonly imposed since log-skills are only identified up to scale and location. Here, I impose the restrictions on the first measure, which is without loss of generality. Instead of fixing the intercept and the slope coefficient in equation (\ref{eq:measurement_eq}) for $t=0$, one could set $\rho_{0} =0 $ and $\rho_{1}=1$ and thus ``anchor'' the skills at $Q$. Assumption \ref{a:normalization} anchors the skills at $Z_{\theta,0,1}$, but analogous issues discussed here arise with anchoring at $Q$ (see Section \ref{s:anchor_tl}). Without such an assumption, the parameters are not point identified. Assumption \ref{a:ageinvariant_technology_skills}(a) states that the skill measures are age-invariant (using the terminology of \citeN{AW:22} -- see their Definition 1 and footnote 10).  Assumption \ref{a:ageinvariant_technology_skills}(b) imposes restrictions on the technology, which \citeN{AW:22} refer to as a known scale and location assumption in a more general context.  Assumption \ref{a:ageinvariant_technology_investment}(a) says that an investment measure is age-invariant. Assumption \ref{a:ageinvariant_technology_investment}(b) states constant return to scale in equation (\ref{eq:investment}), which is a strong assumption and used for point identification without age-invariant investment measures. If investment is observed (i.e. $Z_{I,t,m} = \ln I_t$), Assumption \ref{a:ageinvariant_technology_investment}(a) is automatically satisfied.

I now characterize the identified set of the primitive parameters under Assumption \ref{a:baseline} only. I then discuss point identification under different combinations of Assumptions \ref{a:baseline}--\ref{a:ageinvariant_technology_investment} and show that several policy relevant parameters are invariant to the restrictions in Assumption \ref{a:normalization} and are in fact point identified under Assumption \ref{a:baseline}. Finally, I  illustrate why Assumption \ref{a:normalization} is generally not a normalization for the primitive as well as some policy relevant parameters.

\subsubsection{Identification}
\label{s:identification_tl}

Define $\tilde{I}_t = \exp(\mu_{I,t,1})I_{t}^{\lambda_{I,t,1}}$ and  $\tilde{\theta}_t = \exp(\mu_{\theta,t,1})\theta_{t}^{\lambda_{\theta,t,1}}$ so that
$$\ln \tilde{\theta}_{t} = \mu_{\theta,t,1} + \lambda_{\theta,t,1} \ln \theta_{t} \qquad \text{ and } \qquad \ln \theta_{t} = \frac{\ln \tilde{\theta}_t -  \mu_{\theta,t,1}}{\lambda_{\theta,t,1} }.$$
We can then rewrite the production function in terms of $\tilde{\theta}_t$ and $\tilde{I}_t$  because
$$\frac{\ln \tilde{\theta}_{t+1} -  \mu_{\theta,t+1,1}}{\lambda_{\theta,t+1,1} } = a_t + \gamma_{1t} \frac{\ln \tilde{\theta}_t -  \mu_{\theta,t,1}}{\lambda_{\theta,t,1} } +  \gamma_{2t} \frac{\ln \tilde{I}_t -  \mu_{I,t,1}}{\lambda_{I,t,1} } +  \gamma_{3t} \frac{\ln \tilde{\theta}_t -  \mu_{\theta,t,1}}{\lambda_{\theta,t,1} }\frac{\ln \tilde{I}_t -  \mu_{I,t,1}}{\lambda_{I,t,1} } + \eta_{\theta,t} .$$
After rearranging, we can then rewrite equations (\ref{eq:prod_fn})--(\ref{eq:anchor_eq}) as 
\begin{eqnarray}
\hspace{8mm} \ln \tilde{\theta}_{t+1} &=& \tilde{a}_t + \tilde{\gamma}_{1t} \ln \tilde{\theta}_{t} +  \tilde{\gamma}_{2t} \ln \tilde{I}_t  +  \tilde{\gamma}_{3t} \ln \tilde{\theta}_{t} \ln \tilde{I}_t + \tilde{\eta}_{\theta,t} \hspace{8mm} t = 0, \ldots, T-1 \label{eq:prod_fn_rw}\\
Z_{\theta,t,m} &=& \tilde{\mu}_{\theta,t,m} + \tilde{\lambda}_{\theta,t,m} \ln \tilde{\theta}_{t} + \eps_{\theta,t,m} \hspace{38mm} t = 0, \ldots, T, m = 1,2  \label{eq:measurement_eq_rw} \\
Z_{I,t,m} &=& \tilde{\mu}_{I,t,m} +\tilde{\lambda}_{I,t,m} \ln \tilde{I}_{t} + {\eps}_{I,t,m} \hspace{38mm} t = 0, \ldots, T-1, m = 1,2  \label{eq:measurement_eq_invest_rw}\\
\ln \tilde{I}_t &=& \tilde{\beta}_{0t} + \tilde{\beta}_{1t} \ln \tilde{\theta}_{t}  + \tilde{\beta}_{2t} \ln Y_{t}   + \tilde{\eta}_{I,t} \hspace{32mm} t = 0, \ldots, T-1  \label{eq:investment_rw} \\
Q &=& \tilde{\rho}_{0} + \tilde{\rho}_{1} \ln \tilde{\theta}_{T} + \eta_Q   \label{eq:anchor_eq_rw}
\end{eqnarray}
where  
$$ \tilde{\gamma}_{1t}  =  \frac{\lambda_{\theta,t+1,1}}{\lambda_{\theta,t,1}} \left(  \gamma_{1t}  - \frac{\mu_{I,t,1}}{\lambda_{I,t,1}}\gamma_{3t}  \right), \;\;   \tilde{\gamma}_{2t}  = \frac{ \lambda_{\theta,t+1,1} }{\lambda_{I,t,1}} \left( \gamma_{2t}  - \frac{\mu_{\theta,t,1}}{\lambda_{\theta,t,1}}\gamma_{3t} \right), \;\;  \tilde{\gamma}_{3t}  = \frac{\lambda_{\theta,t+1,1}}{\lambda_{\theta,t,1} \lambda_{I,t,1}}\gamma_{3t}$$
and expressions for all other parameters in (\ref{eq:prod_fn_rw})--(\ref{eq:anchor_eq_rw}) are provided in Appendix \ref{s:app_additional_parameters}. Importantly, by construction, $\tilde{\mu}_{\theta,t,1} = \tilde{\mu}_{I,t,1} = 0$ and $\tilde{\lambda}_{\theta,t,1} = \tilde{\lambda}_{I,t,1} = 1$ and Lemma \ref{l:identjoint} implies that the joint distribution of $(\{\ln \tilde{\theta}_{t}\}^T_{t=0}, \{\ln \tilde{I}_{t}\}^T_{t=0})$ is point identified conditional on $\{Y_t\}_{t=0}^{T-1}$, which then yields identification of all parameters in (\ref{eq:prod_fn_rw})--(\ref{eq:anchor_eq_rw}). These parameters are functions of the primitive parameters in (\ref{eq:prod_fn})--(\ref{eq:anchor_eq}). The next theorem, which characterizes the identified set, states that the identified set consists of all primitive parameters that imply the same values in (\ref{eq:prod_fn_rw})--(\ref{eq:anchor_eq_rw}) as the true parameters, which is very similar to the probit model.

\begin{theorem}
	\label{th:identparams_tl}
Suppose Assumption \ref{a:baseline}	holds. 

\begin{enumerate}
	\item The identified set of  $\{\mu_{\theta,t,m},\lambda_{\theta,t,m}\}_{t=0,\ldots,T,m=1,2}$, $\{\mu_{I,t,m},\lambda_{I,t,m}\}_{t=0,\ldots,T-1,m=1,2}$, $(\rho_0,\rho_1)$,  $\{ \beta_{0t}, \beta_{1t}, \beta_{2t}\}^{T-1}_{t=0}$, $\{a_t,\gamma_{1t},\gamma_{2t}, \gamma_{3t}\}^{T-1}_{t=0}$ consists of all vectors that yield the same values of
	$\{\tilde{\mu}_{\theta,t,m},\tilde{\lambda}_{\theta,t,m}\}_{t=0,\ldots,T,m=1,2}$, $\{\tilde{\mu}_{I,t,m},\tilde{\lambda}_{I,t,m}\}_{t=0,\ldots,T-1,m=1,2}$, $(\tilde{\rho}_0,\tilde{\rho}_1)$, $\{ \tilde{\beta}_{0t}, \tilde{\beta}_{1t}, \tilde{\beta}_{2t}\}^{T-1}_{t=0}$, and $\{\tilde{a}_t,\tilde{\gamma}_{1t},\tilde{\gamma}_{2t}, \tilde{\gamma}_{3t}\}^{T-1}_{t=0}$ as the true parameter vectors.
	
	\item Let  $\{\bar{\mu}_{\theta,t,1},\bar{\lambda}_{\theta,t,1}\}_{t=0}^{T}$, $\{\bar{\mu}_{I,t,1},\bar{\lambda}_{I,t,1}\}_{t=0}^{T-1}$, be fixed constants with $\bar{\lambda}_{\theta,t,1},\bar{\lambda}_{I,t,1}\neq0$ for all $t$.  If in addition $\{{\mu}_{\theta,t,1},{\lambda}_{\theta,t,1}\}_{t=0}^{T} = \{\bar{\mu}_{\theta,t,1},\bar{\lambda}_{\theta,t,1}\}_{t=0}^{T}$ and $\{{\mu}_{I,t,1},{\lambda}_{I,t,1}\}_{t=0}^{T-1} = \{\bar{\mu}_{I,t,1},\bar{\lambda}_{I,t,1}\}_{t=0}^{T-1}$, then
	the identified set is a singleton.

\end{enumerate}

\end{theorem}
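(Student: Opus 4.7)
The plan is to exploit the reparametrization (\ref{eq:prod_fn_rw})--(\ref{eq:anchor_eq_rw}): by construction, the finite-dimensional primitive parameters enter the observable model only through the tildes, while the latent distributions (which are nonparametric nuisance) can be rescaled affinely to absorb the change of coordinates. Thus any two primitive parameter vectors that induce the same tildes can be made to generate the same observational distribution, which delivers one inclusion in Part 1. The nontrivial direction is to show that every tilde parameter is itself point identified from the observable distribution, so that two primitives in the identified set must agree on all tildes.

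To identify the tildes I would process the rewritten equations in the order measurement, anchor, investment, production, drawing in each step on the joint distribution that Lemma \ref{l:identjoint} supplies for $(\ln \tilde{\theta}_0,\ldots,\ln \tilde{\theta}_T, \ln \tilde{I}_0,\ldots,\ln \tilde{I}_{T-1}, Q)$ conditional on $\{Y_t\}$. For equations (\ref{eq:measurement_eq_rw}) and (\ref{eq:measurement_eq_invest_rw}) at $m=2$, $(\tilde{\lambda}, \tilde{\mu})$ are the OLS coefficients of $Z_{\theta,t,2}$ on $\ln \tilde{\theta}_t$ and $Z_{I,t,2}$ on $\ln \tilde{I}_t$, well-defined under Assumption \ref{a:baseline}(a), (c), (d), with the nondegeneracy supplied by (e). The anchor parameters $(\tilde{\rho}_0,\tilde{\rho}_1)$ follow from the analogous projection of $Q$ on $\ln \tilde{\theta}_T$. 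For the investment equation, $(\tilde{\beta}_{0t},\tilde{\beta}_{1t},\tilde{\beta}_{2t})$ are identified by projecting $\ln \tilde{I}_t$ on $(\ln \tilde{\theta}_t,\ln Y_t)$, with part (g) ruling out collinearity. Finally, for the production function tildes I would apply the control-function argument of \shortciteN{AMN:19}: since $\tilde{\eta}_{\theta,t} = \lambda_{\theta,t+1,1}\eta_{\theta,t}$ and $\tilde{\eta}_{I,t} = \lambda_{I,t,1}\eta_{I,t}$ are scalar multiples of the original shocks, a tower-property argument using Assumption \ref{a:baseline}(h) yields $E[\tilde{\eta}_{\theta,t}\mid \ln \tilde{\theta}_t, Y_t]=0$ and $E[\tilde{\eta}_{\theta,t}\mid \ln \tilde{\theta}_t,\tilde{\eta}_{I,t}, Y_t]$ linear in $\tilde{\eta}_{I,t}$, which together with $Y_t$ as instrument pins down $(\tilde{a}_t,\tilde{\gamma}_{1t},\tilde{\gamma}_{2t},\tilde{\gamma}_{3t})$.

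For Part 2, once $\{\mu_{\theta,t,1},\lambda_{\theta,t,1}\}_{t=0}^T$ and $\{\mu_{I,t,1},\lambda_{I,t,1}\}_{t=0}^{T-1}$ are fixed, the primitives are recovered by inverting the explicit formulas in Section \ref{s:identification_tl}. The inversions cascade: $\lambda_{\theta,t,m} = \tilde{\lambda}_{\theta,t,m}\lambda_{\theta,t,1}$ and $\mu_{\theta,t,m} = \tilde{\mu}_{\theta,t,m} + (\lambda_{\theta,t,m}/\lambda_{\theta,t,1})\mu_{\theta,t,1}$ deliver the measurement parameters, with analogous expressions for investment; $\rho_1 = \tilde{\rho}_1 \lambda_{\theta,T,1}$, then $\rho_0$; the production coefficients are recovered in the order $\gamma_{3t},\gamma_{2t},\gamma_{1t},a_t$ from their defining formulas; and $\beta_{1t},\beta_{2t},\beta_{0t}$ follow from the $\tilde{\beta}$ formulas. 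Each step is a simple algebraic solve whose only denominators are $\lambda_{\theta,t,1},\lambda_{\theta,t+1,1},\lambda_{I,t,1}$, all nonzero by hypothesis, so the identified set collapses to a singleton.

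The main obstacle I anticipate is not algebraic but the verification that Assumption \ref{a:baseline} transfers faithfully to the reparametrized system, since the conditioning sets that appear in (h) refer to $\theta_t$ and $\eta_{I,t}$ while the control-function argument must work with $\ln \tilde{\theta}_t$ and $\tilde{\eta}_{I,t}$. Because the reparametrization is an affine transformation of the log latents together with a scalar rescaling of the shocks, all independence, mean-zero, covariance, and support conditions translate to analogous conditions on the tilde variables via the tower property, and the identification argument goes through without additional input. The remainder is bookkeeping of the same kind already displayed above the theorem statement.
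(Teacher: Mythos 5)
Your proposal is correct and follows essentially the same route as the paper's proof: identify the tilde parameters from the joint distribution delivered by Lemma \ref{l:identjoint} via projections (measurement, anchor, investment) and the control-function argument of Assumption \ref{a:baseline}(h) for the production tildes, establish the converse inclusion by affinely redefining the latent skills and investments so that any parameter vector with the same tildes is observationally equivalent, and obtain Part 2 by inverting the explicit tilde formulas once the first-measure scales and locations are pinned down. The only cosmetic difference is that the paper recovers the $m=2$ measurement loadings inside the proof of Lemma \ref{l:identjoint} via covariance ratios rather than as a separate projection step, which is equivalent.
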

 
Part 2 of the theorem shows that the parameters are indeed not point identified under Assumption \ref{a:baseline} only and that the sources of underidentification are the ambiguous scales and locations of skills and investments.  For example, without additional assumptions, equations (\ref{eq:prod_fn})--(\ref{eq:anchor_eq}) and (\ref{eq:prod_fn_rw})--(\ref{eq:anchor_eq_rw})
 are observationally equivalent, and we cannot distinguish between the skills $\theta_t$ and $\tilde{\theta}_t$ and the corresponding production functions. Even if investment was observed (and $\mu_{I,t,1} = 0$ for all $t$), we can then only identify $(\lambda_{\theta,t+1,m}/\lambda_{\theta,t,m})\gamma_{1t}$, but not $\lambda_{\theta,t,m}$ and $\gamma_{1t}$ separately. Hence, we cannot distinguish between changes in the quality of the measurements  ($\lambda_{\theta,t+1,m}/\lambda_{\theta,t,m}$) and changes in the technology (${\gamma}_{1t}$). For example, suppose $Z_{\theta,t,m}$ are test scores. We then cannot distinguish between all children getting smarter or tests becoming easier.  Similarly, we can at best identify $\gamma_{2t}$ up to scale, even with observed investment. 

The following corollary states that all parameters are point identified under additional assumptions. These results are an extension of those in \citeN{AW:22}, who assume that investment is exogenous (in the sense that it is uncorrelated with $ \eta_{\theta,t}$). 

\begin{corollary}
\label{c:pointident}
Suppose Assumptions \ref{a:baseline} and \ref{a:normalization} hold. Suppose either Assumption \ref{a:ageinvariant_technology_skills}(a) or Assumption \ref{a:ageinvariant_technology_skills}(b) holds. Suppose either Assumption \ref{a:ageinvariant_technology_investment}(a) or Assumption \ref{a:ageinvariant_technology_investment}(b) holds. Then all parameters are point identified. 
\end{corollary}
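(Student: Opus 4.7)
The plan is to leverage Theorem \ref{th:identparams_tl}: by that theorem, all ``tilde'' parameters in the reparametrization (\ref{eq:prod_fn_rw})--(\ref{eq:anchor_eq_rw}) are identified under Assumption \ref{a:baseline}. Inspecting the explicit formulas relating tildes to primitives, one sees that once the sequences $\{\lambda_{\theta,t,1}, \mu_{\theta,t,1}\}_{t=0}^T$ and $\{\lambda_{I,t,1}, \mu_{I,t,1}\}_{t=0}^{T-1}$ are known, every remaining primitive parameter---the second-measure loadings and intercepts, the production function coefficients $\{a_t,\gamma_{1t},\gamma_{2t},\gamma_{3t}\}$, the investment equation coefficients $\{\beta_{0t},\beta_{1t},\beta_{2t}\}$, and the anchor coefficients $(\rho_0,\rho_1)$---can be recovered by inverting the corresponding tilde relations, since each inversion reduces to a linear equation in the unknown. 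So the proof reduces to identifying the four scale-and-location sequences.

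I would proceed by induction on $t$. The base case is immediate from Assumption \ref{a:normalization}, which pins down $\lambda_{\theta,0,1}=1$ and $\mu_{\theta,0,1}=0$. For the inductive step, suppose $(\lambda_{\theta,t,1}, \mu_{\theta,t,1})$ is known. First I identify $(\lambda_{I,t,1}, \mu_{I,t,1})$: under Assumption \ref{a:ageinvariant_technology_investment}(a) they equal $(1,0)$ by assumption; under (b), the constant-returns restrictions $\beta_{0t}=0$ and $\beta_{1t}+\beta_{2t}=1$ combined with the tilde-beta formulas yield an explicit linear expression for $\lambda_{I,t,1}$ (obtained by substituting $\beta_{1t}$, $\beta_{2t}$ from the tilde formulas into the unit-sum constraint), after which the $\tilde{\beta}_{0t}$ expression determines $\mu_{I,t,1}$. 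Next I identify $(\lambda_{\theta,t+1,1}, \mu_{\theta,t+1,1})$: under Assumption \ref{a:ageinvariant_technology_skills}(a) they inherit the period-$t$ values; under (b), the restrictions $a_t=0$ and $\gamma_{1t}+\gamma_{2t}+\gamma_{3t}=1$ combined with the three $\tilde{\gamma}$ relations produce a single linear equation for $\lambda_{\theta,t+1,1}$ via the linear combination that eliminates $\gamma_{1t},\gamma_{2t},\gamma_{3t}$, after which the $\tilde{a}_t$ formula determines $\mu_{\theta,t+1,1}$.

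The main obstacle is verifying that the (b) inversions are well-defined and yield unique solutions. This is most subtle in the combined (b)+(b) case, because the tilde-gamma formulas involve $\mu_{I,t,1}/\lambda_{I,t,1}$ and $\mu_{\theta,t,1}/\lambda_{\theta,t,1}$, so the induction must respect the order (investment scales first, then next-period skill scales) and must rely on $\lambda_{\theta,t,1}, \lambda_{I,t,1} \neq 0$ to avoid degeneracy---a condition preserved along the induction given Assumption \ref{a:baseline}(d) and the identified nonzero $\tilde{\lambda}$'s. Once this verification is complete, the induction terminates at $t=T$, and the anchor coefficients $(\rho_0,\rho_1)$ follow from the $\tilde{\rho}_0,\tilde{\rho}_1$ relations using the now-identified $(\lambda_{\theta,T,1}, \mu_{\theta,T,1})$. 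Plugging the identified scale-and-location sequences back into the tilde-to-primitive formulas recovers all remaining parameters, yielding full point identification.
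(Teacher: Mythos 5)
Your proposal is correct and follows essentially the same route as the paper: both rest on Theorem \ref{th:identparams_tl} to identify the tilde parameters, then recursively pin down the scale-and-location sequences $\{\lambda_{\theta,t,1},\mu_{\theta,t,1},\lambda_{I,t,1},\mu_{I,t,1}\}$ from the initial-period restriction together with the (a) or (b) constraints, handling investment scales at time $t$ before the period-$t+1$ skill scales, and finally invoke the second part of Theorem \ref{th:identparams_tl} to recover all remaining parameters. The paper simply writes out the four assumption combinations as separate cases rather than as a single induction with case splits, but the substance is identical.
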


The corollary also immediately implies that Assumptions \ref{a:ageinvariant_technology_skills}(a) and \ref{a:ageinvariant_technology_skills}(b) together impose additional testable restrictions, which is one of the main contributions of Agostinelli and Wiswall (2016a, 2016b, 2024). Contrarily, as shown in Theorem \ref{th:obseq} in the appendix and illustrated in examples below, if the model is correctly specified and Assumption \ref{a:baseline} holds, then there always exist sets of parameters which are consistent with the data and satisfy Assumptions \ref{a:baseline}, \ref{a:normalization}, either \ref{a:ageinvariant_technology_skills}(a) or \ref{a:ageinvariant_technology_skills}(b), and either \ref{a:ageinvariant_technology_investment}(a) or \ref{a:ageinvariant_technology_investment}(b). These different sets of assumptions therefore impose no additional restrictions on the distribution of observables. While different sets of assumptions yield point identification and are observationally equivalent, the estimated primitive parameters are usually quite different, as illustrated in Section \ref{s:nonivariant_tl}.

 \subsubsection{Invariant parameters}
 \label{s:invariant_tl}

I now show that many interesting features are point identified under Assumption \ref{a:baseline} only because they can all be rewritten in terms of the identified parameters in (\ref{eq:prod_fn_rw})--(\ref{eq:anchor_eq_rw}), as in Section \ref{s:examples}. They include summaries of the productions functions and effects of investment and income on skills and adult outcomes. These features do not constitute an exhaustive list and there may be many others.  To state the formal results, recall that $Q_{\alpha}(\theta_t)$ is the $\alpha$ quantile of the skill distribution at time $t$ and $F_{\ln(\theta_{t+1})}(\cdot)$ is the cdf of log-skills at time $t+1$. Define $s_{1t}(\alpha_1,\alpha_2,\alpha_3) =  a_t + \gamma_{1t} \ln Q_{\alpha_1}(\theta_t) +  \gamma_{2t} \ln  Q_{\alpha_2}(I_t)  +  \gamma_{3t} \ln Q_{\alpha_1}(\theta_t) \ln  Q_{\alpha_2}(I_t) +  Q_{\alpha_3}(\eta_{\theta,t}) $ which are the log-skills in period $t+1$ for specific quantiles of the inputs in period $t$. Similarly, let $s_{2t}(\alpha_1,\alpha_2,\alpha_3,y) = a_t + \gamma_{1t} \ln Q_{\alpha_1}(\theta_t) +  \gamma_{2t} \ln  I_t(y)   +  \gamma_{3t} \ln Q_{\alpha_1}(\theta_t) \ln  I_t(y)  +  Q_{\alpha_2}(\eta_{\theta,t})$, where $\ln I_t(y) = \beta_{0t} + \beta_{1t} \ln Q_{\alpha_1}(\theta_t)  + \beta_{2t} \ln y    +  Q_{\alpha_3}(\eta_{I,t}) $, which are log-skills and log-investment, that depend on a specific exogenously set value of $Y$.

\begin{theorem}
	\label{th:identfunctions}
	Suppose Assumption \ref{a:baseline}	holds.  
	
	\begin{enumerate}
	 	
		\item 		$F_{\ln \theta_{t+1}}(s(\alpha_1,\alpha_2,\alpha_3) )$ and  $\mu_{\theta,t+1,m} +  \lambda_{\theta,t+1,m}s_{1t}(\alpha_1,\alpha_2,\alpha_3) +  Q_{\alpha_4}(\eps_{\theta,t+1,m})$ are point identified for all $\{\alpha_j\}^4_{j=1}\in (0,1)$. $\frac{\partial \lambda_{\theta,t+1 ,m} \ln \theta_{t+1}}{\partial \lambda_{\theta,t ,m'} \ln \theta_t } \mid_{ I_t = Q_{\alpha_2}(I_t)  }$ and $\frac{\partial \lambda_{\theta,t+1 ,m} \ln \theta_{t+1}}{\partial \lambda_{I,t ,m'} \ln I_t } \mid_{ \theta_t = Q_{\alpha_1}(\theta_t)  }$ are point identified for all $m,m'$.

		\item $F_{\ln \theta_{t+1}}(s_{2t}(\alpha_1,\alpha_2,\alpha_3,y) )$ and 	$\mu_{\theta,t+1,m} +  \lambda_{\theta,t+1,m}s_{2t}(\alpha_1,\alpha_2,\alpha_3,y) +  Q_{\alpha_4}(\eps_{\theta,t+1,m})$  are point identified for all $\{\alpha_j\}^4_{j=1}\in (0,1)$.  
		
		\item $  P\left(Q \leq q   \mid \theta_s= Q_{\alpha_1}(\theta_s), \{I_t = Q_{\alpha_{2t}}(I_t) \}^{T-1}_{t=0}, \{\eta_{\theta,t} = Q_{\alpha_{3t}}(\eta_{\theta,t})   \}^{T-1}_{t=s}  \right)$ is point identified for all $\alpha_1,\{\alpha_{2t},\alpha_{3t} \}^{T-1}_{t=s} \in (0,1)$.
		
 		\item $  P\left(Q \leq q   \mid \theta_s= Q_{\alpha}(\theta_s), \{Y_t = y_t \}^{T-1}_{t=s}     \right)$ is point identified for all $\alpha \in (0,1)$.

	\item Suppose Assumption \ref{a:ageinvariant_technology_skills}(a) also holds. Then $\gamma_{1t} + \gamma_{3t}\ln Q_{\alpha}(I_t)$ is point identified for all $\alpha$ and  $P(\gamma_{1t} + \gamma_{3t}\ln I_t \leq q)$ is point identified for all $q \in \R$.

	\end{enumerate}

\end{theorem}

The function $F_{\ln \theta_{t+1}}(a_t + \gamma_{1t} \ln Q_{\alpha_1}(\theta_t) +  \gamma_{2t} \ln  Q_{\alpha_2}(I_t)  +  \gamma_{3t} \ln Q_{\alpha_1}(\theta_t) \ln  Q_{\alpha_2}(I_t) +  Q_{\alpha_3}(\eta_{\theta,t}) )$ measures how changes in skills and investment changes the relative standing in the skill distribution. For example, consider an individual with $\theta_t = Q_{0.1}(\theta_t)$, meaning that the person is at lowest $10\%$ of the skill distribution at time $t$. Then, given investment $I_t = Q_{0.25}(I_t)$ and a median production function shock, $ \eta_{\theta,t} = Q_{0.5}(\eta_{\theta,t})$,  
$F_{\ln \theta_{t+1}}(s(0.1,0.25,0.5))$ tells us the relative rank (or the quantile) in the skill distribution at time $t+1$. We can then for example vary the investment quantile and analyze how future skill ranks are affected. This feature is identified because I show in the appendix that
\begin{eqnarray*}
	&& \hspace{-1cm} F_{\ln  {\theta}_{t+1}}\left(    {a}_t +  {\gamma}_{1t}  Q_{\alpha_1} \ln( \theta_{t})   +  {\gamma}_{2t}  Q_{\alpha_2} (\ln I_{t})   +  {\gamma}_{3t}  Q_{\alpha_1} \ln( \theta_{t})  Q_{\alpha_2} (\ln I_{t}) +   Q_{\alpha_3}(   \eta_{\theta,t})     \right) \\
	&=&     F_{\ln \tilde{\theta}_{t+1}}( \tilde{a}_t + \tilde{\gamma}_{1t} \ln Q_{\alpha_1}(\tilde{\theta}_{t}) +  \tilde{\gamma}_{2t} \ln Q_{\alpha_2}(\tilde{I}_{t})  +  \tilde{\gamma}_{3t} \ln Q_{\alpha_1}(\tilde{\theta}_{t})\ln Q_{\alpha_2}(\tilde{I}_{t})  + Q_{\alpha_3}(\tilde{\eta}_{\theta,t})    )   
\end{eqnarray*}
Thus, one can estimate the model based on (\ref{eq:prod_fn_rw})--(\ref{eq:anchor_eq_rw}) and calculate the  feature using $\tilde{\theta}_{t}$ instead of ${\theta}_{t}$. The estimand then corresponds to the true effect, even if Assumptions \ref{a:normalization}--\ref{a:ageinvariant_technology_investment} do not hold.

Once we know the rank at time $t+1$ and fix investment and production shock quantiles in that period, we can identify the skill rank at time $t+2$. Using recursive arguments, we can identify the relative rank in period $T$, given investment and production shock quantiles in all period and a skill quantile in period $0$. We could then make statements such as: ``A person at lowest $10\%$ of the skill distribution in period $0$ would end up at the $30\%$ quantile of the original skill distribution in the final period with a particular investment strategy and median production function shocks.'' These statements would allow comparisons of investment strategies, assessing heterogeneous effects, and choosing optimal investments.

Instead of considering ranks of skills, one can interpret the effects in the units of any of the measures. For example, changing investment from $Q_{\alpha_2}(I_t)$ to  $Q_{\alpha_2'}(I_t)$ changes skills at time $t+1$ such that skill measure $m$ changes by $\lambda_{\theta,t+1,m}(s_1(\alpha_1,\alpha_2',\alpha_3) -  s_1(\alpha_1,\alpha_2,\alpha_3))$ (holding everything else equal). We can also identify that a change in investment, which corresponds to a 1 unit increase in investment measure $m'$ at time $t$, affects skills at time $t+1$ in a way that skill measure $m$ changes by $\frac{\partial \lambda_{\theta,t+1 ,m} \ln \theta_{t+1}}{\partial \lambda_{I,t ,m'} \ln I_t } \mid_{ \theta_t = Q_{\alpha_1}(\theta_t)  }$ (see Section \ref{s:application} for specific examples). 

The second part is very similar, but it also considers exogenous changes in income.

Instead of considering the skill rank in the final period, we could also look at the distribution of the adult outcome $Q$ (or, alternatively, one of the skill measures in the final period). Notice that it is important to condition on the production function shocks, because investment in endogenous. We can either fix a quantile or average over its marginal distribution since we obtain identification for all quantiles. The fourth part focuses on the effect of income on skills. Again, we can point identify averages and features of the distribution such as $\int E\left(Q \mid \theta_s = \theta, \{Y_t = y_t \}^{T-1}_{t=s}   \right) f_{\theta_s}(\theta) d\theta$ 
 (which differs from $E\left(Q \mid \{Y_t = y_t \}^{T-1}_{t=s}  \right)$ due to a potential dependence between $ \{Y_t = y_t \}^{T-1}_{t=s} $ and   skills). Also notice that 
\begin{align*}
	\int E\left( Q \mid \theta_s = \theta, \{Y_t = y_t \}^{T-1}_{t=s}   \right)  f_{\theta_s}(\theta)  d\theta & = \rho_0 + \rho_1 \int E\left(\ln \theta_T \mid \theta_s = \theta, \{Y_t = y_t \}^{T-1}_{t=s}   \right)  f_{\theta_s}(\theta)  d\theta
\end{align*}
Thus, we can identify the sequence $\{Y_t = y_t \}^{T-1}_{t=s}$ that maximizes the conditional expected value of  $\ln \theta_T$. For this sequence, we do not necessarily need to observe an adult outcome because we can instead use one of the skill measures in period $T$. To identify these features, one only has to identify the joint distribution of $(Q,Y_1, \ldots, Y_{T-1},\tilde{\theta}_s)$. For example, when $s = 0$, we do not require any skill measures in periods $1,2,\ldots,T$.\footnote{\label{f:DKP}Part 4 of Theorem \ref{th:identfunctions} also implies identification of $$\int\int\left( \frac{\partial}{\partial y_s} E(Q \mid \theta_s = \theta, \{Y_t = y_t \}^{T-1}_{t=s})\right)f_{\theta_s,Y_s, \ldots, Y_{T-1}}(\theta,y_s, \ldots, y_{T-1}) d\theta dy_s \cdots d y_{T-1} $$
	which \shortciteN{DKP:20} refer to as ``anchored treatment effects''. \shortciteN{DKP:20} show that these effects are identified without age-invariant measures. My results show that one in fact does not even need any measures of investments or measures of skills in periods $s+1,2,\ldots, T$ (and therefore also no assumptions on the measurement systems, including age-invariance and independence). In addition, my arguments are not specific to the trans-log production function and carry over to other settings.}  Importantly objects such as $ \int E\left(\theta_T \mid \theta_s = \theta, \{Y_t = y_t \}^{T-1}_{t=s}   \right)  f_{\theta_s}(\theta)  d\theta$ are not point identified without the scale and location restrictions and are sensitive to the specific values used (see Example \ref{e:levels_mc} below).

\begin{remark}
\label{r:DKP_2}
To summarize the production technology, \citeN{DKP:20} show that standardizing skills can lead to features that are invariant to scale and location restrictions and age-invariance. In particular, they show identification of the distribution of $\left(\frac{\partial \ln \theta_{t+1} }{\partial \ln  I_t}\right)/ std(\ln \theta_{t+1}) $. One can then make statements such as ``increasing investment by 1\%, increases log-skills by $x \times std(\ln \theta_{t+1})$''. Part 1 of Theorem \ref{th:identfunctions} offers an alternative interpretation in terms of ranks or units of the measures.
\end{remark}

\begin{remark}
	\label{r:AW}
	Instead of using a two-step approach, where the distribution of a linear combination of skills and investment is identified first, \citeN{AW:22} substitute measures into the production function equation and use IV arguments  with exogenous investment (i.e. $I_t$ and $\eta_{\theta,t}$ are independent in (\ref{eq:investment})). Aside from exogenous investment, there are no substantial differences between the required assumptions, as both approaches are based on the joint distribution of the measures. My main contributions are to study the roles of the scale and location restrictions on the parameters of the model, to show which restrictions select an element of the identified set, and to provide features that are invariant to them and are identified without age-invariant measures and restrictions on the production function. 
\end{remark}

\subsubsection{Non-invariant parameters}
\label{s:nonivariant_tl}

As shown above, under Assumption \ref{a:baseline},  there exist different sets of observationally equivalent parameters that cannot be distinguished using the data. I now illustrate with two examples that the resulting primitive parameters and counterfactuals might differ considerably.

\begin{example}
	\label{e:rescale}
 For simplicity, I assume that investment is observed and exogenous. In this case, Assumption \ref{a:ageinvariant_technology_investment}(a) holds. I first consider a data generating process (DGP) satisfying Assumptions \ref{a:baseline}, \ref{a:ageinvariant_technology_skills}(a), and \ref{a:ageinvariant_technology_skills}(b), but not Assumption \ref{a:normalization}. I then construct two alternative sets of parameters, both of which are observationally equivalent to the original DGP. One of these sets of parameters satisfies Assumptions \ref{a:baseline}, \ref{a:normalization}, and \ref{a:ageinvariant_technology_skills}(a) and the other set satisfies Assumptions \ref{a:baseline}, \ref{a:normalization}, and \ref{a:ageinvariant_technology_skills}(b). Specifically, first assume that
	\begin{eqnarray*}
		\ln \theta_{t+1} &=& \frac{1}{2} \ln \theta_{t} +  \frac{1}{2} \ln I_t + \eta_{\theta,t} \\
		\tilde{Z}_{\theta,t,1} &=&  \ln \theta_{t} + \tilde{\eps}_{\theta,t,1}   
	\end{eqnarray*}
	Here, I set $a_t = \mu_{\theta,t,m}  = \rho_{0} =  0$ and focus on the scale restriction in Assumption \ref{a:normalization} only. For brevity, I omit equations for investment, additional measurements, and the adult outcome.  
	
	Measures often do not have a natural scale. For example, we could divide all test scores by its standard deviation or we could measure education in months rather than years. One would then hope that changing the units of measurement does not affect the economic interpretation of the results. As a specific example, suppose we estimate the model using a scaled version of the measures, namely $Z_{\theta,t,1}= 12 \tilde{Z}_{\theta,t,1}$. Then for all $t$
	\begin{eqnarray*}
		\ln \theta_{t+1} &=& \frac{1}{2} \ln \theta_{t} +  \frac{1}{2} \ln I_t + \eta_{\theta,t} \\
		Z_{\theta,t,1} &=& 12 \ln \theta_{t} + {\eps}_{\theta,t,1} 
	\end{eqnarray*}
	where ${\eps}_{\theta,t,1}  = 12 \tilde{\eps}_{\theta,t,1} $. Without knowing the true DGP, one would typically impose assumptions that yield point identification when estimating the model. By Corollary \ref{c:pointident}, one could use either Assumption \ref{a:ageinvariant_technology_skills}(a) or \ref{a:ageinvariant_technology_skills}(b) next to Assumptions \ref{a:baseline} and \ref{a:normalization}. 
	
	I first construct parameters satisfying Assumptions \ref{a:baseline}, \ref{a:normalization}, and \ref{a:ageinvariant_technology_skills}(b) using Theorem \ref{th:obseq} in the appendix, which implies that there are $\{\bar{\theta}_{t}\}^T_{t=0}$ such that 
	\begin{eqnarray*}
		\ln \bar{\theta}_{t+1} &=&  \bar{\gamma}_{1t} \ln \bar{\theta}_{t} +  (1- \bar{\gamma}_{1t})  \ln I_t + \bar{\eta}_{\theta,t} \\
		Z_{\theta,t,1} &=&   \bar{\lambda}_{\theta,t,1} \ln \bar{\theta}_{t} + \eps_{\theta,t,1}  
	\end{eqnarray*}
	where  
	\begin{eqnarray*}
		(\bar{\gamma}_{10},\bar{\gamma}_{11},\bar{\gamma}_{12},\bar{\gamma}_{13},\bar{\gamma}_{14}, \ldots ) &=& (0.077, 0.351, 	0.435, 	0.470, 0.485,\ldots) \\
	(\bar{\lambda}_{\theta,0,1}, \bar{\lambda}_{\theta,1,1},\bar{\lambda}_{\theta,2,1},\bar{\lambda}_{\theta,3,1},\bar{\lambda}_{\theta,4,1}, \ldots ) &=& (1, 	6.5,	9.25,	10.625, 	11.3125, \ldots	)
	\end{eqnarray*}
			Moreover $\bar{\gamma}_{1t} \rightarrow {\gamma}_{1t} = 1/2$ and $\bar{\lambda}_{\theta,t,1}\rightarrow {\lambda}_{\theta,t,1} = 12$ as $t \rightarrow \infty$. Imposing the restriction $\bar{\lambda}_{\theta,0,1}= 1$ means that we estimate a model with alternative skills $\ln \bar{\theta}_{0} = 12 \ln \theta_{0}$ and that we obtain different parameters and skill distributions. Although these two models suggest very different dynamics, they generate identical measurements.	Since the true DGP is unknown, the primitive parameters are therefore hard to interpret. For example the coefficient in front of investment in the original model might be interpreted as: ``increasing investment by 1\%, increases skills in the next period by 0.5\% and the effect is the same for all time periods'' (see e.g. \citeN{AW:22} for these interpretations). Contrarily, one might interpret the coefficients in the alternative model with  Assumption \ref{a:normalization} as: ``increasing investment by $1\%$ in period $0$, increases skills in period $1$ by $0.923\%$ and increasing investment by $1\%$ in period $4$, increases skills in period $5$ by $0.515\%$'', suggesting that investment is more beneficial in early periods. To obtain interpretable parameters, notice that a one unit increase in $\ln(I_t)$ leads to an increase in skills that corresponds to a $\lambda_{\theta,t+1,1}\gamma_{2t} = 6$ units increase in $Z_{\theta,t+1,1}$ or a $1/2$ unit increase in $\tilde{Z}_{\theta,t+1,1}$. This effect  does not depend on the set of restrictions and its interpretation adapts to the units of measurement.

	The consequences of imposing Assumptions \ref{a:baseline}, \ref{a:normalization} and \ref{a:ageinvariant_technology_skills}(a)     are discussed in Section \ref{s:examples}, which shows that the production function becomes $	\ln \tilde{\theta}_{t+1} =  \frac{1}{2}  \ln \tilde{\theta}_{t} +  6  \ln I_t  + \tilde{\eta}_{\theta,t}$.

\end{example}

\begin{example}
	\label{e:levels_mc}
	 I now use a more involved DGP from Section \ref{s:montecarlo}, which is based on the simulations of \shortciteN{AMN:19}.\footnote{In Section \ref{s:montecarlo} I use a CES production function, as do \shortciteN{AMN:19}, but for this numerical examples I use a trans-log production function that leads to similar observed data.} In this setup $T=2$ and Assumptions \ref{a:baseline}, \ref{a:normalization}, \ref{a:ageinvariant_technology_skills}(a), and \ref{a:ageinvariant_technology_investment}(a) hold. Importantly, one skill and one investment measure have loadings of $1$.  Now suppose $Y_t$ represents income, we can exogenously increase the sum of income of each individual in periods $0$ and $1$  by two standard deviations, and we want to distribute that income optimally across the two periods. Denote the skills in period $2$ as a function of income by $\theta_2(Y_0 + w x, Y_1 + (1-w)x)$, where $Y_0$ and $Y_1$ are the original incomes, $x$ is the additional amount to be distributed, and $w$ is the share invested in period $0$. The left panel of Figure \ref{f:mean_share_translog} shows 
	$$\frac{E[\theta_2(Y_0 + w x, Y_1 + (1-w)x) ] - E[\theta_2(Y_0, Y_1)]}{sd(\theta_2(Y_0, Y_1))}$$ 
	as a function of $w$. That is, the $y$-axis shows the increase in the mean measured in standard deviations. The black line shows the effect using the true parameters, leading to an optimal income share of around $32\%$ in period $0$. Next, I multiply all skill measures $Z_{\theta,t,1}$ by $s_{\theta}$ and reestimate the model. 	Scaling the measures has identical effects to imposing $\lambda_{\theta,0,1} = 1$, when the data is generated with a loading of $s_{\theta}$.  The implied effects of increasing income for $s_{\theta} = 2/3$ and $s_{\theta} = 2$ can be seen in the left panel of Figure \ref{f:mean_share_translog} as well. Depending on the scale, we obtain inefficient optimal investment choices or erroneous benefits of investment.   Intuitively, the reason is that we now maximize $E[\tilde{\theta}_t(Y_0 + w x, Y_1 + (1-w)x)] = E[\theta_t(Y_0 + w x, Y_1 + (1-w)x)^{s_{\theta}}]$ which is not invariant to $s_{\theta}$. While such counterfactuals could be relevant if a welfare function is a function of the skill level, they are not identified without fixing the scales and locations  and then depend on the units of measurements of the data.

\begin{figure}[t!]
	\begin{center}
	\caption{Mean response for different weights}
	\label{f:mean_share_translog} 
	\includegraphics[trim=3cm 7.5cm 3cm 6.5cm, width=13.7cm]{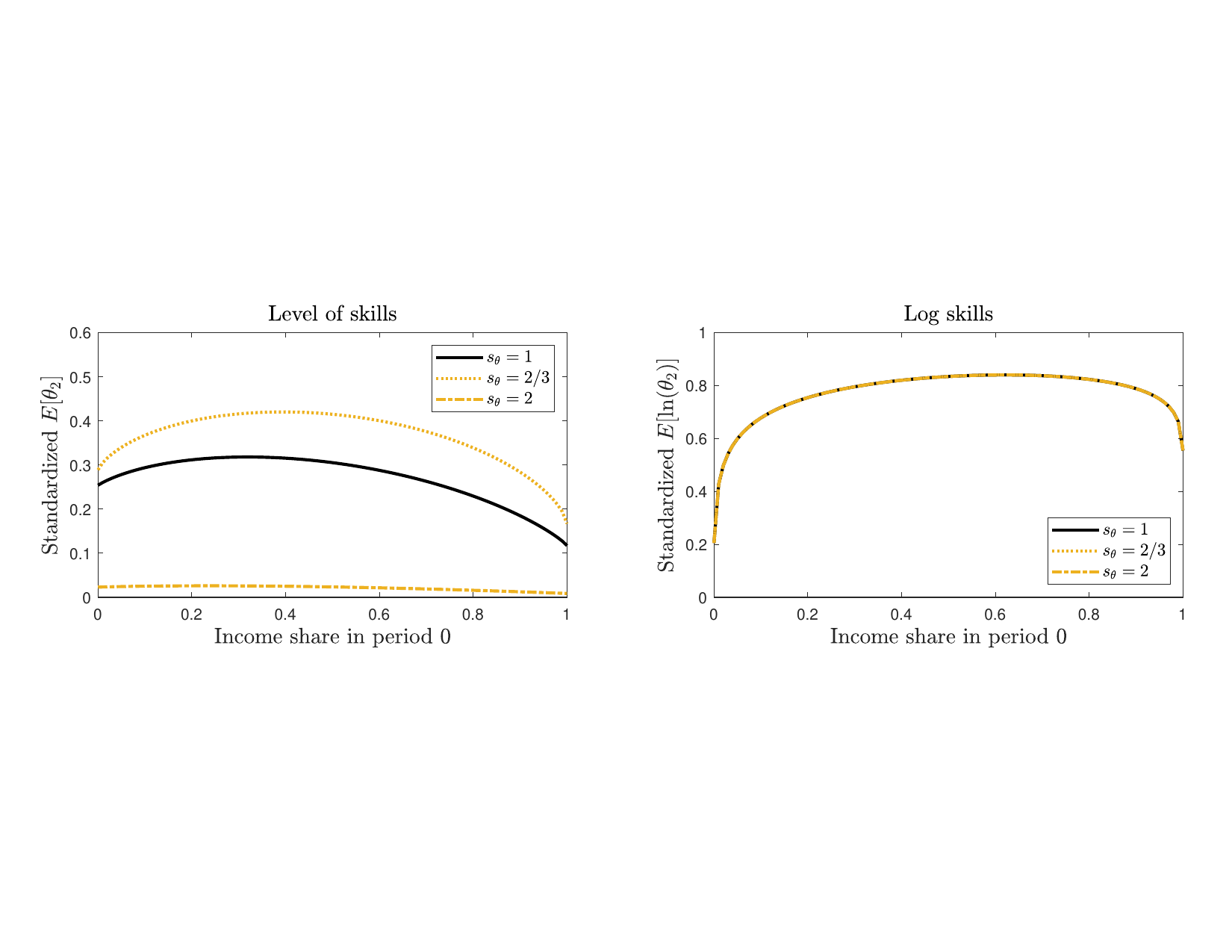}
	\end{center}
		\newgeometry{textwidth=16.4cm}
	{\footnotesize \begin{singlespace}  Notes: This figure shows changes in standardized means of the level (left panel) and the log (right panel) of skills for different income transfers and different scales of log-skills.  \end{singlespace}}
\end{figure}

	The right panel contains results for the log-skills rather than the level. As discussed below Theorem \ref{th:identfunctions}, in this case, the optimal income sequence does not depend on the scale of the measures, because the production function is sufficiently flexible in log-skills. In addition,  dividing by the standard deviation, implies that the objective   is  invariant to the scale.

\end{example}

\subsubsection{Anchoring }
\label{s:anchor_tl}

Consider equations (\ref{eq:prod_fn_rw})--(\ref{eq:anchor_eq_rw}) and recall that $\tilde{\mu}_{\theta,t,0} = \tilde{\mu}_{I,t,0} = 0$ and $\tilde{\lambda}_{\theta,t,0} = \tilde{\lambda}_{I,t,0} = 1$, and the parameters in this system of equations are point identified by Theorem \ref{th:identparams_tl}. Now define $\tilde{\vartheta}_{t}$ such that $\ln  \tilde{\vartheta}_{t} = \tilde{\rho}_0 + \tilde{\rho}_1 \ln  \tilde{\theta}_{t}$. We then get
\begin{eqnarray*}
	\ln \tilde{\vartheta}_{t+1} &=&  \tilde{\rho}_0 + \tilde{\rho}_1 \tilde{a}_t - \tilde{\rho}_0 \tilde{\gamma}_{1t}  + \tilde{\gamma}_{1t} \ln \tilde{\vartheta}_{t} +  \tilde{\rho}_1 \tilde{\gamma}_{2t} \ln I_t  + \tilde{\gamma}_{3t} \ln \tilde{\vartheta}_{t} \ln I_t +  \tilde{\rho}_1 \tilde{\eta}_{\theta,t} \\
	Z_{\theta,t,m} &=& \tilde{\mu}_{\theta,t,m} - \frac{ \tilde{\rho}_0 \tilde{\lambda}_{\theta,t,m}}{\tilde{\rho}_1} + \frac{\tilde{\lambda}_{\theta,t,m}}{\tilde{\rho}_1}\ln \tilde{\vartheta}_{t} + \eps_t \\
	Z_{I,t,m} &=& \tilde{\mu}_{I,t,m} +\tilde{\lambda}_{I,t,m} \ln \tilde{I}_{t} + {\eps}_{I,t,m} \hspace{26mm}   \\
	\ln \tilde{I}_t &=& \tilde{\beta}_{0t} - \frac{ \tilde{\rho}_0 \tilde{\beta}_{1t}}{\tilde{\rho}_1} + \frac{\tilde{\beta}_{1t} }{\tilde{\rho}_1} \ln \tilde{\vartheta}_{t}  + \tilde{\beta}_{2t} \ln Y_{t}   + \tilde{\eta}_{I,t}    \\	
	Q &=& \ln  \tilde{\vartheta}_{T}  + \eta_Q
\end{eqnarray*}
\citeN{CH:08} estimate a model for $\ln  \tilde{\vartheta}_{t}$ which anchors the skills at $Q$. Their identification strategy is equivalent to using Assumptions \ref{a:normalization} and \ref{a:ageinvariant_technology_skills}(a) and imposing $\rho_{0} = 0$ and $\rho_1 = 1$ instead of Assumption \ref{a:normalization}. Anchoring can help with the interpretation of certain parameters of the model. For example, when $E(Q \mid \ln  \tilde{\vartheta}_{T} ) = \ln  \tilde{\vartheta}_{T} $, then an increase of $\ln  \tilde{\vartheta}_{T}$ by one corresponds to a one unit increase in the conditional expectation of $Q$. However, as illustrated in Example \ref{e:levels_mc}, investment or income sequences that maximize the expected levels of skills depend on the units of measurements of the anchor.\footnote{\label{f:func_outcome}The identification arguments are fundamentally different if the anchoring equations was in levels instead of logs of the skills. That is, if $	Q =  {\rho}_{0} +  {\rho}_{1} {\theta}_{T} + \eta_Q   =  {\rho}_{0} +  \frac{{\rho}_{1} }{\exp(\mu_{\theta,T,1})^{1/\lambda_{\theta,T,1}} }\tilde{\theta}_T^{1/\lambda_{\theta,T,1}}   + \eta_Q    $. In this case it can be shown that the joint distribution of $(Q,\tilde{ \theta}_T)$ is identified, which identifies $\lambda_{\theta,T,1}$. Thus, the distribution of $ \theta_T$ is identified up to a scaling factors, which implies that even sequences of income or investment that maximize the level of skills are identified.}  

As discussed by \citeN{CH:08}, $\tilde{\gamma}_{1t}$ is invariant to the anchor. However, without Assumption \ref{a:ageinvariant_technology_skills}(a), $\tilde{\gamma}_{1t}$ is typically not equal to $\gamma_{1t}$.  The coefficient in front of investment is hard to interpreted because it depends on the specific anchor and its units of measurements (as also noted by \citeN{CH:08}). Moreover, skills can be anchored at the adult outcome, but not investment, and many of the estimated parameters also depend on the units of the investment measures. While anchoring makes most sense under Assumption \ref{a:ageinvariant_technology_skills}(a), in which case the skills in all periods are in the units of $Q$, we could instead use Assumption \ref{a:ageinvariant_technology_skills}(b) to achieve point identification. In this case different adult outcomes or different units of measurements  lead to different coefficients that possibly suggest very different dynamics, just as in Example \ref{e:rescale}.

\subsection{CES production function}
\label{s:ces}

I now discuss the CES production technology, where  
\begin{eqnarray}
	\theta_{t+1} &=& \left(\gamma_{1t} \theta_{t}^{\sigma_t} +   \gamma_{2t} I_t^{\sigma_t}\right)^{\frac{\psi_t}{\sigma_t}}\exp(\eta_{\theta,t})  \hspace{25mm} t = 0, \ldots, T-1 \label{eq:prod_fn_ces} \\
	Z_{\theta,t,m} &=& \mu_{\theta,t,m} + \lambda_{\theta,t,m} \ln \theta_{t} + \eps_{\theta,t,m} \hspace{27mm} t = 0, \ldots, T, m = 1,2 \label{eq:measurement_eq_ces} \\
	Z_{I,t,m} &=& \mu_{I,t,m} + \lambda_{I,t,m} \ln I_{t} + \eps_{I,t,m} \hspace{27mm} t = 0, \ldots, T-1, m = 1,2  \label{eq:measurement_eq_ces_invest}  \\
		\ln I_t &=& \beta_{0t} + \beta_{1t} \ln \theta_{t}  + \beta_{2t} \ln Y_{t}   + \eta_{I,t} \hspace{21mm} t = 0, \ldots, T-1  \\
	Q &=& \rho_{0} + \rho_{1} \ln \theta_{T} + \eta_Q \label{eq:anchor_eq_ces}
\end{eqnarray}
where $\sigma_t \neq 0$, $\gamma_{1t}\neq 0$, and $\gamma_{2t}\neq 0$ for all $t$. The measurement system is linear in $\ln \theta_{t}$ (as in \shortciteN{CHS:10} or \shortciteN{AMN:19}) to ensure that estimated skills are positive.

\subsubsection{Identification}

Similar to the trans-log case, define $\tilde{\theta}_t = \exp(\mu_{\theta,t,1})\theta_{t}^{\lambda_{\theta,t,1}}$ so that
$$\theta_{t}  = \exp \left(-\frac{\mu_{\theta,t,1} }{\lambda_{\theta,t,1}}\right)\tilde{\theta}_t^{\frac{1}{\lambda_{\theta,t,1}}}.$$ 
Similarly, let $\tilde{I}_t = \exp(\mu_{I,t,1})I_{t}^{\lambda_{I,t,1}}$. We can again rewrite the production technology in terms of $\tilde{\theta}_t$ and $\tilde{I}_t$. That is, we can rewrite equations (\ref{eq:prod_fn_ces})--(\ref{eq:anchor_eq_ces}) to 
\begin{eqnarray}
\tilde{\theta}_{t+1}  &=&  \left( \tilde{\gamma}_{1t}    \tilde{\theta}_t^{\frac{\sigma_{t} }{\lambda_{\theta,t,1}}} + \tilde{\gamma}_{2t}  \tilde{I}_t^{\frac{\sigma_{t} }{\lambda_{I,t,1}}}  \right)^{ \frac{\lambda_{\theta,t+1,1}\psi_t}{\sigma_{t}} } \exp( \tilde{\eta}_{\theta,t}) \hspace{13mm} t = 0, \ldots, T-1 \label{eq:prod_fn_ces_norm} \\
	Z_{\theta,t,m} &=& \tilde{\mu}_{\theta,t,m} + \tilde{\lambda}_{\theta,t,m} \ln \tilde{\theta}_{t} + \eps_{\theta,t,m} \hspace{38mm} t = 0, \ldots, T, m = 1,2  \label{eq:measurement_eq_rw_ces} \\
 \hspace{12mm}  Z_{I,t,m} &=& \tilde{\mu}_{I,t,m} +\tilde{\lambda}_{I,t,m} \ln \tilde{I}_{t} + {\eps}_{I,t,m} \hspace{38mm} t = 0, \ldots, T-1, m = 1,2 \label{eq:measurement_eq_invest_rw_ces}\\
\ln \tilde{I}_t &=& \tilde{\beta}_{0t} + \tilde{\beta}_{1t} \ln \tilde{\theta}_{t}  + \tilde{\beta}_{2t} \ln Y_{t}   + \tilde{\eta}_{I,t} \hspace{32mm} t = 0, \ldots, T-1  \label{eq:investment_rw_ces} \\
Q &=& \tilde{\rho}_{0} + \tilde{\rho}_{1} \ln \tilde{\theta}_{T} + \eta_Q   \label{eq:anchor_eq_rw_ces}
\end{eqnarray}
where  
$$  \tilde{\gamma}_{1t}  =  \gamma_{1t}   \exp\left( \sigma_{t} \left( \frac{ \mu_{\theta,t+1,1} }{ \lambda_{\theta,t+1,1}\psi_t } - \frac{\mu_{\theta,t,1}}{\lambda_{\theta,t,1}}  \right) \right) \quad  \text{and} \quad \tilde{\gamma}_{2t}  =  \gamma_{2t} \exp\left( \sigma_{t} \left( \frac{ \mu_{\theta,t+1,1} }{ \lambda_{\theta,t+1,1}\psi_t } - \frac{\mu_{I,t,1}}{\lambda_{I,t,1}}  \right) \right)  $$
and the other parameters are defined as in the trans-log case. Using the identified joint distribution of $(\{\ln \tilde{\theta}_{t}\}^T_{t=0}, \{\ln \tilde{I}_{t}\}^T_{t=0})$ and the restrictions imposed by the production function, the following theorem characterizes the identified set of the finite dimensional parameters.
 
\begin{theorem}
	\label{th:identparams_ces}	
	Suppose Assumption \ref{a:baseline} holds. 
	
	\begin{enumerate}[(a)]

	\item The identified set of  $\{\mu_{\theta,t,m},\lambda_{\theta,t,m}\}_{t=0,\ldots,T,m=1,2}$, $\{\mu_{I,t,m},\lambda_{I,t,m}\}_{t=0,\ldots,T-1,m=1,2}$, $(\rho_0,\rho_1)$,  $\{ \beta_{0t}, \beta_{1t}, \beta_{2t}\}^{T-1}_{t=0}$, $\{ \gamma_{1t},\gamma_{2t}, \sigma_{t}, \psi_t\}^{T-1}_{t=0}$ consists of all vectors that yield the same values of
$\{\tilde{\mu}_{\theta,t,m},\tilde{\lambda}_{\theta,t,m}\}_{t=0,\ldots,T,m=1,2}$, $\{\tilde{\mu}_{I,t,m},\tilde{\lambda}_{I,t,m}\}_{t=0,\ldots,T-1,m=1,2}$, $(\tilde{\rho}_0,\tilde{\rho}_1)$, $\{ \tilde{\gamma}_{1t},\tilde{\gamma}_{2t}, \frac{\sigma_{t} }{\lambda_{\theta,t,1}},\frac{\sigma_{t} }{\lambda_{I,t,1}} \}^{T-1}_{t=0}$, $\{\frac{\sigma_{t} \psi_t }{\lambda_{\theta,t+1,1}}\}^{T-1}_{t=0}$, and $\{ \tilde{\beta}_{0t}, \tilde{\beta}_{1t}, \tilde{\beta}_{2t}\}^{T-1}_{t=0}$ as the true parameter vectors.

\item Let  $\{\bar{\mu}_{\theta,t,1} \}_{t=0}^{T}$, $\{\bar{\mu}_{I,t,1} \}_{t=0}^{T-1}$, and $\{\bar{\lambda}_{\theta,t,1} \}_{t=0}^{T}$ be fixed constants with $\bar{\lambda}_{\theta,t,1} \neq 0$ for all $t$.  Under the additional restrictions $\{{\mu}_{\theta,t,1} \}_{t=0}^{T} = \{\bar{\mu}_{\theta,t,1} \}_{t=0}^{T}$ and $\{{\mu}_{I,t,1} \}_{t=0}^{T-1} = \{\bar{\mu}_{I,t,1} \}_{t=0}^{T-1}$ and $\{{\lambda}_{\theta,t,1} \}_{t=0}^{T} = \{\bar{\lambda}_{\theta,t,1} \}_{t=0}^{T}$
the identified set is a singleton.

	\end{enumerate} 
	
\end{theorem}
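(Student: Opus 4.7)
My plan is to reduce the CES identification problem to the reformulated system (\ref{eq:prod_fn_ces_norm})--(\ref{eq:anchor_eq_rw_ces}) and then invoke Lemma \ref{l:identjoint}. Since the maps $\theta_t \mapsto \tilde\theta_t = \exp(\mu_{\theta,t,1})\theta_t^{\lambda_{\theta,t,1}}$ and $I_t \mapsto \tilde I_t$ are strictly monotone bijections, the primitive model (\ref{eq:prod_fn_ces})--(\ref{eq:anchor_eq_ces}) and the rewritten system generate identical distributions of observables. Two primitive parameter vectors that map to the same tilde vector therefore lie in the same observational-equivalence class, which yields one direction of part 1.

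For the converse direction, Lemma \ref{l:identjoint} identifies the joint distribution of $(\{\ln\tilde\theta_t\}_t, \{\ln\tilde I_t\}_t, \tilde\rho_0+\tilde\rho_1\ln\tilde\theta_T)$ conditional on $\{Y_t\}$. From this joint distribution, the second-measure loadings and intercepts $\{\tilde\lambda_{\theta,t,2}, \tilde\mu_{\theta,t,2}, \tilde\lambda_{I,t,2}, \tilde\mu_{I,t,2}\}$ follow from the measurement equations (\ref{eq:measurement_eq_rw_ces})--(\ref{eq:measurement_eq_invest_rw_ces}) combined with the mean-zero restriction in Assumption \ref{a:baseline}(c) and the built-in normalizations $\tilde\lambda_{\theta,t,1}=\tilde\lambda_{I,t,1}=1$, $\tilde\mu_{\theta,t,1}=\tilde\mu_{I,t,1}=0$. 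The anchor parameters $(\tilde\rho_0,\tilde\rho_1)$ come from the identified joint of $(Q,\ln\tilde\theta_T)$, and the investment-equation coefficients $\{\tilde\beta_{0t},\tilde\beta_{1t},\tilde\beta_{2t}\}$ follow by regressing $\ln\tilde I_t$ on $\ln\tilde\theta_t$ and $\ln Y_t$, where Assumption \ref{a:baseline}(g) rules out collinearity. The CES-specific step is to identify the five reduced-form technology parameters $\tilde\gamma_{1t}$, $\tilde\gamma_{2t}$, $\sigma_t/\lambda_{\theta,t,1}$, $\sigma_t/\lambda_{I,t,1}$, and $\sigma_t\psi_t/\lambda_{\theta,t+1,1}$ from the identified conditional distribution of $\tilde\theta_{t+1}$ given $(\tilde\theta_t,\tilde I_t,\eta_{I,t})$, using the control-function form in Assumption \ref{a:baseline}(h) and the open-ball support condition to invert the CES functional.

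For part 2, I solve the system of constraints once the scales and locations of the first measures are pinned down. Given $\lambda_{\theta,t,1}=\bar\lambda_{\theta,t,1}$, $\mu_{\theta,t,1}=\bar\mu_{\theta,t,1}$, and $\mu_{I,t,1}=\bar\mu_{I,t,1}$, the identified ratio $\sigma_t/\lambda_{\theta,t,1}$ recovers $\sigma_t$; then $\lambda_{I,t,1}$ follows from the identified $\sigma_t/\lambda_{I,t,1}$; then $\psi_t$ is pinned down by the third identified exponent once $\lambda_{\theta,t+1,1}$ (the next period's fixed scale) is known. Working sequentially in $t$, the $\gamma_{1t},\gamma_{2t}$ follow from $\tilde\gamma_{1t},\tilde\gamma_{2t}$ and the already recovered objects; the remaining measurement, investment, and anchor parameters are obtained by inverting the tilde-to-primitive formulas in exactly the way used for the trans-log case.

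The main obstacle is the CES-specific step: showing that the joint conditional distribution of $\tilde\theta_{t+1}$ given $(\tilde\theta_t,\tilde I_t,\eta_{I,t})$ uniquely determines the five reduced-form production-function parameters. The key leverage comes from the open-ball support assumption, which ensures enough variation in $(\tilde\theta_t,\tilde I_t)$ to pin down the curvature of the CES functional over an open region of its domain; two distinct CES parametrizations that agree on an open set of arguments must coincide. Compared with the trans-log case, the crucial new identified quantity is $\sigma_t/\lambda_{I,t,1}$, which provides the extra equation that recovers $\lambda_{I,t,1}$ from the data rather than fixing it a priori. This is precisely why part 2 imposes no scale restriction on investment, and foreshadows why scale restrictions on investment measures become overidentifying in the CES model.
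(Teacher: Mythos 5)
Your overall architecture matches the paper's: reduce to the tilde system, invoke Lemma \ref{l:identjoint} for the joint distribution, read off the measurement, investment, and anchor parameters, and then solve the constraint system for part 2 once the scales and locations of the first measures are fixed. The part-2 unravelling (recover $\sigma_t$ from $\sigma_t/\lambda_{\theta,t,1}$, then $\lambda_{I,t,1}$ from $\sigma_t/\lambda_{I,t,1}$, then $\psi_t$, then the $\gamma$'s) is exactly the paper's argument, and correctly isolates why no scale restriction on investment is needed.

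The gap is at the step you yourself flag as the main obstacle. You assert that ``two distinct CES parametrizations that agree on an open set of arguments must coincide,'' but that injectivity claim is precisely what the theorem's new content consists of, and it is not generic --- it has to be verified for the specific family $m(\tilde\theta,\tilde I)=\frac{\lambda_{\theta,t+1,1}\psi_t}{\sigma_t}\ln\bigl(\tilde\gamma_{1t}\tilde\theta^{\,\sigma_t/\lambda_{\theta,t,1}}+\tilde\gamma_{2t}\tilde I^{\,\sigma_t/\lambda_{I,t,1}}\bigr)$. The paper does this constructively: starting from the identified conditional mean $E[\ln\tilde\theta_{t+1}\mid\tilde\theta_t,\tilde I_t,Y_t]=m(\tilde\theta_t,\tilde I_t)+(\lambda_{\theta,t+1,1}/\lambda_{I,t,1})\kappa_t\tilde\eta_{I,t}$ (note it is the conditional \emph{mean} under the control-function restriction \ref{a:baseline}(h), not the full conditional distribution), one takes the ratio of its two partial derivatives, which equals $\frac{\lambda_{I,t,1}}{\lambda_{\theta,t,1}}\frac{\tilde\gamma_{1t}}{\tilde\gamma_{2t}}\,\tilde\theta_t^{\,\sigma_t/\lambda_{\theta,t,1}-1}\tilde I_t^{\,1-\sigma_t/\lambda_{I,t,1}}$. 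Because this is log-linear in $(\ln\tilde\theta_t,\ln\tilde I_t)$, the open-ball support condition pins down the two exponents and the multiplicative constant, hence $\sigma_t/\lambda_{\theta,t,1}$, $\sigma_t/\lambda_{I,t,1}$, and $\tilde\gamma_{1t}/\tilde\gamma_{2t}$. Substituting back, the conditional mean is linear in the now-known aggregator $\ln\bigl(\frac{\tilde\gamma_{1t}}{\tilde\gamma_{2t}}\tilde\theta_t^{\,\sigma_t/\lambda_{\theta,t,1}}+\tilde I_t^{\,\sigma_t/\lambda_{I,t,1}}\bigr)$ with slope $\frac{\lambda_{\theta,t+1,1}\psi_t}{\sigma_t}$ and intercept $\frac{\lambda_{\theta,t+1,1}\psi_t}{\sigma_t}\ln\tilde\gamma_{2t}$, which delivers the remaining two reduced-form parameters. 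Without an argument of this kind (or an equivalent one), the converse inclusion in part 1 --- that the identified set is no larger than the stated equivalence class --- is not established. A secondary, smaller omission: for the forward inclusion you should verify, as the paper does, that the reparametrized skills $\bar\theta_t$, $\bar I_t$ constructed from an alternative vector actually satisfy a CES law of motion of the same form with the alternative parameters; monotonicity of the change of variables alone gives equality of measurement distributions but not consistency with the production equation.
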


An important implications of part (a) is that the fraction $\frac{\lambda_{\theta,t,1}}{\lambda_{I,t,1}} = \frac{\lambda_{\theta,t,1}}{\sigma_t}\frac{\sigma_t}{\lambda_{I,t,1}}$ is  point identified for all $t = 1,\ldots, T-1$. Hence, if we restrict $\lambda_{\theta,t,1}$ to a constant, $\lambda_{I,t,1}$ is identified, which is very different to the trans-log case.   Intuitively, since skills and investment have the same exponent in the CES production functions, the relative scale is identified by the functional form restrictions.

As before, I now introduce additional assumptions to achieve point identification.
 
\setcounter{assumptionp}{1}

\begin{assumptionp} \label{a:normalization_ces} 
$\lambda_{\theta,0,1}=1$  and $\mu_{\theta,0,1} = 0$. 
\end{assumptionp}
 
\begin{assumptionp}\label{a:ageinvariant_technology_skills_ces} \quad
	
	\begin{enumerate}[\label=(a)]
		
		\item   $\mu_{\theta,t,1} = \mu_{\theta,t+1,1}$ for all $t = 0, \ldots, T-1$  
		\item $\gamma_{1t} + \gamma_{2t} =1 $ for all $t=0,\ldots,T-1$.
	\end{enumerate}
	
\end{assumptionp}

\begin{assumptionp}\label{a:ageinvariant_technology_investment_ces}\quad
	\begin{enumerate}[\label=(a)]
		\item 	$\mu_{I,t,1} = \mu_{I,t+1,1} = 0$ for all $t = 0, \ldots, T-2$.
		\item 	$\beta_{0t} = 0$  for all $t=0,\ldots,T-1$.
	\end{enumerate}

\end{assumptionp}

\begin{assumptionp}\label{a:add_restrictions_ces}\quad
	\begin{enumerate}[\label=(a)]
		\item  $\lambda_{\theta,t,1} = \lambda_{\theta,t+1,1}$ for all $t = 0, \ldots, T-1$. 
		\item $\psi_{t} = 1$  for all $t=0,\ldots,T-1$.
	\end{enumerate}

\end{assumptionp}

The following result shows how point identification can be established.

\begin{corollary}
	\label{c:ces_ident}
	Suppose Assumptions \ref{a:baseline} and \ref{a:normalization_ces} hold. Suppose either Assumption \ref{a:ageinvariant_technology_skills_ces}(a) or Assumption \ref{a:ageinvariant_technology_skills_ces}(b) holds. Suppose either Assumption \ref{a:ageinvariant_technology_investment_ces}(a) or Assumption \ref{a:ageinvariant_technology_investment_ces}(b) holds. Suppose either Assumption \ref{a:add_restrictions_ces}(a) or Assumption \ref{a:add_restrictions_ces}(b) holds. Then all parameters are point identified. 
\end{corollary}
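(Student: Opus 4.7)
My plan is to invert the map from the reduced-form parameters of Theorem \ref{th:identparams_ces} back to the primitives, peeling off scales first and then locations, and then reading off the remaining parameters mechanically. By Theorem \ref{th:identparams_ces}, Assumption \ref{a:baseline} already point identifies $\tilde{\gamma}_{1t}$, $\tilde{\gamma}_{2t}$, the ratios $r^{\theta}_{t}:=\sigma_t/\lambda_{\theta,t,1}$ and $r^{I}_{t}:=\sigma_t/\lambda_{I,t,1}$, the composite $s_t:=\sigma_t\psi_t/\lambda_{\theta,t+1,1}$, the reduced-form investment and anchoring coefficients $(\tilde{\beta}_{0t},\tilde{\beta}_{1t},\tilde{\beta}_{2t},\tilde{\rho}_0,\tilde{\rho}_1)$, and the reduced-form measurement parameters. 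A key observation is that $r^{\theta}_{t}/r^{I}_{t}=\lambda_{I,t,1}/\lambda_{\theta,t,1}$ is identified, so pinning down any one of $\lambda_{\theta,t,1}$, $\lambda_{I,t,1}$, or $\sigma_t$ delivers the other two.

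For the scales, Assumption \ref{a:normalization_ces} gives $\lambda_{\theta,0,1}=1$, hence $\sigma_0=r^{\theta}_{0}$ and $\lambda_{I,0,1}=\sigma_0/r^{I}_{0}$. Under Assumption \ref{a:add_restrictions_ces}(a), $\lambda_{\theta,t,1}\equiv 1$, so $\sigma_t=r^{\theta}_{t}$, $\lambda_{I,t,1}=\sigma_t/r^{I}_{t}$, and $\psi_t=s_t/\sigma_t$ follow for every $t$. Under Assumption \ref{a:add_restrictions_ces}(b), $\psi_t\equiv 1$ turns $s_t=\sigma_t/\lambda_{\theta,t+1,1}=r^{\theta}_{t}\lambda_{\theta,t,1}/\lambda_{\theta,t+1,1}$ into the recursion $\lambda_{\theta,t+1,1}=(r^{\theta}_{t}/s_t)\lambda_{\theta,t,1}$, initialized at $\lambda_{\theta,0,1}=1$, which identifies $\{\lambda_{\theta,t,1}\}_{t=0}^{T}$, and then $\sigma_t$ and $\lambda_{I,t,1}$ follow.

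For locations, Assumption \ref{a:normalization_ces} sets $\mu_{\theta,0,1}=0$. Under Assumption \ref{a:ageinvariant_technology_skills_ces}(a), $\mu_{\theta,t,1}\equiv 0$ and then either $\mu_{I,t,1}\equiv 0$ (case \ref{a:ageinvariant_technology_investment_ces}(a)) or, using the trans-log-style formula $\tilde{\beta}_{0t}=\lambda_{I,t,1}\beta_{0t}+\mu_{I,t,1}-(\lambda_{I,t,1}/\lambda_{\theta,t,1})\mu_{\theta,t,1}\beta_{1t}$ with $\beta_{0t}=0$, $\mu_{I,t,1}=\tilde{\beta}_{0t}$ (case \ref{a:ageinvariant_technology_investment_ces}(b)). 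Under Assumption \ref{a:ageinvariant_technology_skills_ces}(b), rewrite $\gamma_{1t}+\gamma_{2t}=1$ using the definitions of $\tilde{\gamma}_{1t}$ and $\tilde{\gamma}_{2t}$ as
\[
\tilde{\gamma}_{1t}\exp\!\left(\tfrac{\sigma_t\mu_{\theta,t,1}}{\lambda_{\theta,t,1}}\right) + \tilde{\gamma}_{2t}\exp\!\left(\tfrac{\sigma_t\mu_{I,t,1}}{\lambda_{I,t,1}}\right) = \exp\!\left(\tfrac{\sigma_t\mu_{\theta,t+1,1}}{\lambda_{\theta,t+1,1}}\right).
\]
Combined with either $\mu_{I,t,1}=0$ (case \ref{a:ageinvariant_technology_investment_ces}(a)) or the linear relation $\mu_{I,t,1}=\tilde{\beta}_{0t}+\tilde{\beta}_{1t}\mu_{\theta,t,1}$ that follows from $\beta_{0t}=0$ (case \ref{a:ageinvariant_technology_investment_ces}(b)), this is a one-step recursion from $\mu_{\theta,t,1}$ to $\mu_{\theta,t+1,1}$ initialized at $\mu_{\theta,0,1}=0$ that identifies all locations.

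Once all first-measure scales and locations are fixed, $(\gamma_{1t},\gamma_{2t})$ come from inverting the $\tilde{\gamma}$ formulas, $(\beta_{0t},\beta_{1t},\beta_{2t})$ from the $\tilde{\beta}$ formulas, $(\rho_0,\rho_1)$ from $(\tilde{\rho}_0,\tilde{\rho}_1)$, and the second-measure parameters $\mu_{\theta,t,2},\lambda_{\theta,t,2},\mu_{I,t,2},\lambda_{I,t,2}$ from the corresponding reduced-form objects. The main obstacle is the joint location recursion when both Assumptions \ref{a:ageinvariant_technology_skills_ces}(b) and \ref{a:ageinvariant_technology_investment_ces}(b) are imposed, where at each step two nonlinear equations in $(\mu_{\theta,t+1,1},\mu_{I,t,1})$ must be solved simultaneously. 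The argument hinges on noting that, given the previously identified $\mu_{\theta,t,1}$, the investment-location equation is linear in $\mu_{I,t,1}$, and substituting its unique solution into the exponential-sum equation leaves a strictly positive right-hand side, so taking logs pins down $\mu_{\theta,t+1,1}/\lambda_{\theta,t+1,1}$ uniquely.
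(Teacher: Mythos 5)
Your proposal is correct and follows essentially the same route as the paper's proof: invoke the characterization in Theorem \ref{th:identparams_ces}, use Assumption \ref{a:normalization_ces} together with either part of Assumption \ref{a:add_restrictions_ces} to pin down all scales $(\lambda_{\theta,t,1},\lambda_{I,t,1},\sigma_t,\psi_t)$, then work through the four combinations of Assumptions \ref{a:ageinvariant_technology_skills_ces} and \ref{a:ageinvariant_technology_investment_ces} to recover the locations recursively from $\mu_{\theta,0,1}=0$ (via the linear $\tilde{\beta}_{0t}$ relation and the strictly monotone exponential-sum constraint from $\gamma_{1t}+\gamma_{2t}=1$), and finally invert the reduced-form maps for the remaining parameters. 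The only difference is presentational: you make the scale recursion under Assumption \ref{a:add_restrictions_ces}(b) and the final inversion step slightly more explicit than the paper does.
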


A common restriction with the CES production function is to set $\psi_{t} = 1$ for all $t$ (see e.g. \shortciteN{CHS:10} and \shortciteN{AMN:19}). In this case, Theorem \ref{th:identparams_ces} implies that   $\frac{\lambda_{\theta,t+1,1}}{\lambda_{\theta,t,1}}$ is point identified. Hence, assuming age-invariance (i.e. $\lambda_{\theta,t+1,1}  = \lambda_{\theta,t,1}$) is not required, and it is in fact testable. Moreover, with $\psi_{t} = 1$,  Corollary \ref{c:ces_ident} implies that the only scale restriction needed is $\lambda_{\theta,0,1} = 1$ (or alternatively $\lambda_{I,0,1}=1$). Nevertheless, it is common practice to set   ${\lambda}_{\theta,t,1}= {\lambda}_{I,t,1} = 1$ for all $t$, which are not normalizations, but restrictive assumptions (even if $\psi_t \neq 1$).  Setting the scales to different numbers or changing the units of measurement of the data then affects all estimated parameters, optimal investment sequences, and other counterfactuals. Similarly, if the scale of investment is fixed, we cannot anchor the skills at the adult outcome, unless $\rho_1 = 1$, which can only be true for one specific unit of measurements (see Appendix \ref{s:anchor_ces}). The exact consequences of imposing unnecessary   restrictions depend on the estimation methods used, and I discuss specific examples in Sections \ref{s:montecarlo} and \ref{s:application}.

As shown in Theorem \ref{th:obseq_ces} in the appendix and illustrated in examples below, if the model is correctly specified and Assumption \ref{a:baseline} holds, then there always exist sets of parameters which are consistent with the data and satisfy Assumptions \ref{a:baseline}, \ref{a:normalization_ces}, either \ref{a:ageinvariant_technology_skills_ces}(a) or \ref{a:ageinvariant_technology_skills_ces}(b), either \ref{a:ageinvariant_technology_investment_ces}(a) or \ref{a:ageinvariant_technology_investment_ces}(b), and either \ref{a:add_restrictions_ces}(a) or \ref{a:add_restrictions_ces}(b). Similar to the trans-log case, different sets of assumptions yield observationally equivalent models with potentially very different primitive parameters.

\subsubsection{Invariant parameters}

As in the trans-log case, important policy relevant parameters are point identified under Assumption \ref{a:baseline} only, because they can all be written in terms of identified features. Similar to before, now define  $s_{1t}(\alpha_1,\alpha_2,\alpha_3) =   \left(\gamma_{1t} Q_{\alpha_1}(\theta_t)^{\sigma_t}   +  \gamma_{2t} Q_{\alpha_2}(I_t)^{\sigma_t} \right)^{\frac{\psi_t}{\sigma_t}}\exp(Q_{\alpha_3}(\eta_{\theta,t}))  $ and $s_{2t}(\alpha_1,\alpha_2,\alpha_3,y) =  \left(\gamma_{1t} Q_{\alpha_1}(\theta_t)^{\sigma_t}   +  \gamma_{2t} I_t(y)^{\sigma_t} \right)^{\frac{\psi_t}{\sigma_t}}\exp(Q_{\alpha_2}(\eta_{\theta,t}))$  where $\ln I_t(y) = \beta_{0t} + \beta_{1t} \ln Q_{\alpha_1}(\theta_t)  + \beta_{2t} \ln y    +  Q_{\alpha_3}(\eta_{I,t}) $.

\begin{theorem}
	\label{th:identfunctionsces}
	Suppose Assumption \ref{a:baseline}	holds. 
	
	\begin{enumerate}
	 
		\item 
		
		$F_{\theta_{t+1}}\big(s_{1t}(\alpha_1,\alpha_2,\alpha_3)  \big)$ and	$\mu_{\theta,t+1,m} +  \lambda_{\theta,t+1,m}\ln s_{1t}(\alpha_1,\alpha_2,\alpha_3) +  Q_{\alpha_4}(\eps_{\theta,t+1,m})$  are point identified for all $\{\alpha_j\}^4_{j=1}\in (0,1)$. Moreover, $\frac{\partial \lambda_{\theta,t+1 ,m} \ln \theta_{t+1}}{\partial \lambda_{\theta,t ,m'} \ln \theta_t } \mid_{\theta_t = Q_{\alpha_1}(\theta_t), I_t = Q_{\alpha_2}(I_t)  }$ and $\frac{\partial \lambda_{\theta,t+1 ,m} \ln \theta_{t+1}}{\partial \lambda_{I,t ,m'} \ln I_t } \mid_{ \theta_t = Q_{\alpha_1}(\theta_t),I_t = Q_{\alpha_2}(I_t)  }$ are point identified for all $m,m'$ and $\{\alpha_j\}^2_{j=1}\in (0,1)$.

		\item   $F_{ \theta_{t+1}}(s_{2t}(\alpha_1,\alpha_2,\alpha_3,y) )$ and 	$\mu_{\theta,t+1,m} +  \lambda_{\theta,t+1,m} \ln s_{2t}(\alpha_1,\alpha_2,\alpha_3,y) +  Q_{\alpha_4}(\eps_{\theta,t+1,m})$  are point identified $\{\alpha_j\}^4_{j=1}\in (0,1)$.  
		\item $  P\left(Q \leq q   \mid \theta_t= Q_{\alpha_1}(\theta_t), \{I_s = Q_{\alpha_{2s}}(I_s) \}^{T-1}_{s=t}, \{\eta_{\theta,s} = Q_{\alpha_{3s}}(\eta_{\theta,s})   \}^{T-1}_{s=t}  \right)$ is point identified for all $\alpha_1,\{\alpha_{2s},\alpha_{3s} \}^{T-1}_{s=t} \in (0,1)$.  
		
		\item $  P\left(Q \leq q   \mid \theta_0= Q_{\alpha}(\theta_0), \{Y_t = y_t \}^{T-1}_{t=0}     \right)$ is point identified for all $\alpha \in (0,1)$.

		\item If in addition either Assumption \ref{a:add_restrictions_ces}(a) or Assumption \ref{a:add_restrictions_ces}(b) holds,  the distributions of 
		$$\frac{\partial \ln \theta_{t+1} }{\partial \ln \theta_{t} } = \frac{\partial  }{\partial \ln \theta_{t} } \ln \left(\gamma_{1t} \theta_{t}^{\sigma_t} +   \gamma_{2t} I_t^{\sigma_t}\right)^{\frac{1}{\sigma_t}} \; \text{ and } \;  \;   \frac{\partial \ln \theta_{t+1} }{\partial \ln I_{t} } = \frac{\partial  }{\partial \ln I_{t} } \ln \left(\gamma_{1t} \theta_{t}^{\sigma_t} +   \gamma_{2t} I_t^{\sigma_t}\right)^{\frac{1}{\sigma_t}}  $$
		are point identified and
		$$\left.\frac{\partial \ln \theta_{t+1} }{\partial \ln \theta_{t} } \right|_{ \ln \theta_{t} = Q_{\alpha_1}(\ln \theta_{t}), \ln I_{t} = Q_{\alpha_2}(\ln I_{t})}  = \left.\frac{\partial  }{\partial \ln \theta_{t} } \ln  \left(\gamma_{1t} \theta_{t}^{\sigma_t} +   \gamma_{2t} I_t^{\sigma_t}\right)^{\frac{1}{\sigma_t}}  \right|_{ \ln \theta_{t} = Q_{\alpha_1}(\ln \theta_{t}), \ln I_{t} = Q_{\alpha_2}(\ln I_{t})}   $$
		and
		$$\left.\frac{\partial \ln \theta_{t+1} }{\partial \ln I_{t} } \right|_{ \ln \theta_{t} = Q_{\alpha_1}(\ln \theta_{t}), \ln I_{t} = Q_{\alpha_2}(\ln I_{t})}  = \left.\frac{\partial  }{\partial \ln I_{t} }  \ln \left(\gamma_{1t} \theta_{t}^{\sigma_t} +   \gamma_{2t} I_t^{\sigma_t}\right)^{\frac{1}{\sigma_t}}  \right|_{ \ln \theta_{t} = Q_{\alpha_1}(\ln \theta_{t}), \ln I_{t} = Q_{\alpha_2}(\ln I_{t})}   $$
		are point identified for all $\alpha_1, \alpha_2 \in (0,1)$.
		
	\end{enumerate}

\end{theorem}

  The features in parts (1)--(4) are analogous to those in the trans-log case and can be used to calculate optimal investment/income strategies and anchored treatment effects, as discussed after Theorem \ref{th:identfunctions}. As opposed to the trans-log case, now the relative scales of skills and investment are point identified. Consequently, elasticities are identified under Assumption \ref{a:baseline} and either age-invariant skill measures or $\psi_t = 1$ only.

\subsubsection{Non-invariant parameters}

To achieve point identification of all parameters, we need to fix the levels of the logs of skills and investment, e.g. by setting $\mu_{\theta,0,1}$ and $\mu_{0,I,1}$ to $0$. Similar to the trans-log case, these restrictions are generally not normalizations for the primitive parameters. In the example below, I illustrate that with the restriction $\gamma_{1t} + \gamma_{2t} =1$, setting $\mu_{\theta,0,1} = 0$   can imply different dynamics. Here $\mu_{\theta,0,1}$ (and not $\lambda_{\theta,0,1}$) affects the scale because we can identify the distribution of $\tilde{\theta}_{t} = \exp(\mu_{\theta,t,1})\theta_t^{\lambda_{\theta,t,1}} $ and the production function is in levels rather than logs of $\theta_t$.

\begin{example}
	
The issues with the restriction $\mu_{\theta,0,1} = 0$ in the CES case are analogous to the  issues with the restriction $\lambda_{\theta,0,1} = 1$ in the trans-log case discussed in Example \ref{e:rescale}. I now consider a numerical example analogous to Example \ref{e:rescale} and focus on the measurement and the production function only. That is, suppose that 
\begin{eqnarray*}
	\theta_{t+1} &=& \left(\frac{1}{2} \theta_{t} +  \frac{1}{2}   I_t\right) \exp(\eta_{\theta,t})  \\
	Z_{\theta,t,m} &=& \ln(12) + \ln \theta_{t} + \eps_{\theta,t,m} 
\end{eqnarray*}
Notice that $Z_{\theta,t,m} =  \ln (12 \theta_{t}) + \eps_{\theta,t,m}$. Let $\tilde{\theta}_t = 12 \theta_{t}$. Just as in Example \ref{e:rescale}, we get
\begin{eqnarray*}
	\tilde{\theta}_{t+1} &=&  \left(\tilde{\gamma}_{1t}   \tilde{\theta}_{t} +  (1- \tilde{\gamma}_{1t})   I_t\right)\exp(\tilde{\eta}_{\theta,t}) \\
	Z_{\theta,t,m} &=&  \tilde{\mu}_{\theta,t,m} + \ln \tilde{\theta}_{t} + \eps_{\theta,t,m}  		 
\end{eqnarray*}
where  
$(\tilde{\gamma}_{10},\tilde{\gamma}_{11},\tilde{\gamma}_{12},\tilde{\gamma}_{13},\ldots ) = (0.077, 0.351, 	0.435, 	0.470,\ldots) $
and
$$(\exp(\tilde{\mu}_{0,\theta,m}),\exp(\tilde{\mu}_{1,\theta,m}),\exp(\tilde{\mu}_{2,\theta,m}),\exp(\tilde{\mu}_{3,\theta,m}), \ldots ) = (1, 	6.5,	9.25,	10.625,  \ldots	).$$	
Just as in Example \ref{e:rescale}, these two models are observationally equivalent, but suggest very different dynamics.

\end{example}

\section{Monte Carlo simulations}
\label{s:montecarlo}

I use a very similar data generating process as \shortciteN{AMN:19}. In particular, I use 
$$
\theta_{t+1}= A_{t}\left(\gamma_{t} \theta_{t}^{\sigma_{t}}+\left(1-\gamma_{t}\right) I_t^{\sigma_{t}}\right)^{\frac{1}{\sigma_{t}} }\exp(\eta_{\theta,t})
$$
for $t=1,2$. Allowing for $A_t \neq 1$ is equivalent to not imposing the restriction that the coefficients in front of $\theta_t$ and $I_t$ sum to $1$. Since I want to study income transfers as counterfactuals, I augment the setup of \shortciteN{AMN:19} and add $\ln I_t = \beta_{1t} \ln \theta_t + \beta_{2t} \ln Y_t  + \eta_{I,t}$, where $Y_0 = Y_1$. To simulate data, I first draw $(\ln(\theta_{0}),\ln(Y))$ from a normal mixture distribution. Given $(\ln(\theta_{0}),\ln(Y))$ and normally distributed $\eta_{\theta,t}$ and $\eta_{I,t}$  I then generate $I_0$, $\theta_1$,  $I_1$, and $\theta_2$ using the model. If $\ln I_t$ was equal to $\ln Y_t$, the setup would be exactly equal to that of \shortciteN{AMN:19} with parameters as in their Table 9, which are based on their empirical results, and I use $\sigma_{0} = \sigma_{1} = -0.5$.\footnote{I use slightly different notation to be consistent with the notation above. Specifically, the periods are $t = 0,1,2$ instead of $t = 1,2,3$, I use $I_t$ instead of $X_t$ for the second latent variable, and I use $\sigma_t$ instead of $\rho_t$ to denote the elasticity of substitution. The distribution of $(\ln(\theta_{0}),\ln(Y))$  is the same as the distribution of $(\ln(\theta_{0}),\ln(X))$ in \shortciteN{AMN:19}.}  I deviate slightly from their setting by using the additional investment equation with $\beta_{1t} = 0.1$, $\beta_{2t} = 0.9$, and $\eta_{I,t}  \sim N(0,0.1^2)$. I simulate three measures each for $\theta_t$ and $I_t$, which have a factor structure with $\mu_{\theta,t,m} = \mu_{I,t,m} = 0$ for all $m,t$. In addition $\lambda_{\theta,t,1} = \lambda_{I,t,1} = 1$ for all $t$, which imposes the scale restrictions and the age-invariance assumptions. In \shortciteN{AMN:19}, all of these loadings are also ``normalized'' to $1$ in their estimation procedure. In particular, they first estimate the distribution of the measures and of log-income using a normal mixture model. Then, assuming that $(\ln(\theta_{0}),\ln(\theta_{1}),\ln(\theta_{2}),\ln(I_0),\ln(I_1),\ln(Y))$ also has a normal mixture distribution,  they use the factor structure and the restrictions to estimate that distribution.\footnote{Interestingly, one can see from the DGP that $(\ln(\theta_{0}),\ln(\theta_{1}),\ln(\theta_{2}))$ does not have a normal mixture distribution due to the nonlinear production function, but this misspecification bias seems to be relatively small in this simulation setup.} Lastly, they take draws from the distribution and estimate the production function parameters by nonlinear least squares. I implement their estimator and my more flexible approach.

For my approach, I use the estimation procedure explained in Appendix \ref{s:estimation_ces}, which simplifies in this setup because investment is exogenous (and thus, $\kappa_t = 0$). Moreover, to focus on the scale restriction only, I set $\mu_{\theta,t,m} = \mu_{I,t,m} = 0$ for all $m$ and $t$.   Finally, I impose that the first skill measure and the first investment measure are age-invariant. I  then set $\lambda_{I,t,1} = 1$ for all $t$, and estimate $\lambda_{\theta,t,1}$. I therefore impose Assumptions \ref{a:normalization_ces}, \ref{a:ageinvariant_technology_skills_ces}(a), and \ref{a:ageinvariant_technology_investment_ces}(a), as well as both parts of Assumption \ref{a:add_restrictions_ces}. While only one part of the last assumption is needed (i.e. age-invariance of the measures could be dropped), they are both satisfied in this setup.  I then estimate  $\lambda_{\theta} =  \lambda_{\theta,t,1}$ along with the production function parameters by solving
\begin{align*}
	\argmin_{ {\lambda}_{\theta,1},  {\sigma}_1, {\sigma}_2, {\gamma}_{11},{\gamma}_{21}, {\gamma}_{11},{\gamma}_{21}}   \sum^T_{t=1} \sum^J_{j=1}  \left(\ln \tilde{\theta}_{t+1,j} -   \ln(A_t)  \lambda_{\theta} - { \frac{ \lambda_{\theta}  }{\sigma_{t}} } \ln \left(  {\gamma}_{1t}    \tilde{\theta}_{t,j}^{\frac{\sigma_{t}}{\lambda_{\theta}}} +  (1-{\gamma}_{2t})  \tilde{I}_{t,j}^{\sigma_{t} }  \right)  \right)^2 
\end{align*}
where $\tilde{\theta}_{t,j}$ and $\tilde{I}_{t,j}$ are draws from the estimated distribution of skills and investment using the estimator of \shortciteN{AMN:19}. Similarly, we can estimate $\beta_{1t}$ and $\beta_{2t}$ from a regression of $\tilde{I}_{t,j}$ on $\tilde{\theta}_{t,j}$ and $Y_{j}$. 

In the following, I will investigate the effect of multiplying the skill measures by a single constant $s_{\theta}$ in all periods, which changes the loadings, but not the age-invariance assumption. For example, the first measure, say $\tilde{Z}_{\theta,t,1}$, is generated by 
$$\tilde{Z}_{\theta,t,1} = \log(\theta_{t}) + \eps_{\theta,t,1}$$
but when I estimate the model, I use 
$$Z_{\theta,t,1} = s_{\theta} \tilde{Z}_{\theta,t,1} = s_{\theta}\log(\theta_{t}) + s_{\theta}\eps_{\theta,t,1},$$
which is also an age-invariant measure. The estimators still impose that the loadings are equal to 1 ($\lambda_{I,t,1} = 1$ for all $t$  with my estimator and $\lambda_{\theta,t,1} = 1$ and $\lambda_{I,t,1} = 1$ for all $t$  with the estimator of \shortciteN{AMN:19}). Of course, in practice, we do not know the DGP and there is no reason to believe that the true loading is $1$. Ideally, the restriction should be a normalization in which case the results would be invariant to scaling the data or changing its units of measurement. However, Corollary \ref{c:ces_ident} implies that the estimator of \shortciteN{AMN:19} is inconsistent. I consider the implications for elasticities and counterfactual predictions, which are point identified (as shown in Theorem \ref{th:identfunctionsces}) and are invariant to the scaling when using the new estimator.\footnote{These features are also invariant to scaling the investment measures with my estimator and I could have set $\lambda_{\theta,t,1}$ instead of $\lambda_{I,t,1}$ to 1.}  I take $s_{\theta} = 1$, which is the correctly specified model, as well as $s_{\theta} = 2/3$ and $s_{\theta} = 2$. I report average estimates from 500 simulated data sets. 

To summarize the production function estimates I report $ {\partial \ln(\theta_{1}) }/{\partial \ln(\theta_{0})}$ as a function of quantiles of $\ln(\theta_{0})$ and evaluated at the 25th and 75th percentile of $I_0$ as well as $ {\partial \ln(\theta_{2}) }/{\partial \ln(I_1)}$ as a function of quantiles of $\ln(I_1)$ and evaluated at the 25th and 75th percentile of $\theta_{1}$.  Figure \ref{f:marg_eff_1} displays these partial derivatives for the true parameters, the 2-step estimator of \shortciteN{AMN:19} with different values of $s_{\theta}$, and the invariant estimator that also estimates the scale. In this and the following figures, the results obtained with the invariant estimator are always almost identical to those with the 2-step estimator and $s_{\theta} = 1$ and very similar to those with the true parameter values. Moreover, the invariant estimator adapts to the scale change. Contrarily, the figure illustrates that small changes in the units of measurements can have a large effect on the results when using the 2-step estimator. One interesting finding is that the 2-step estimator underestimates the partial derivative with respect to $\ln(\theta_0)$ at the  25th percentile of $I_0$ for $s_{\theta} = 2/3$ and overestimates it for $s_{\theta} = 2$.

\begin{figure}[t!]
	\begin{center}
		\caption{Partial derivatives}
	\label{f:marg_eff_1}
	\includegraphics[trim=3cm 5cm 3cm 1.2cm, ,clip, width=16cm]{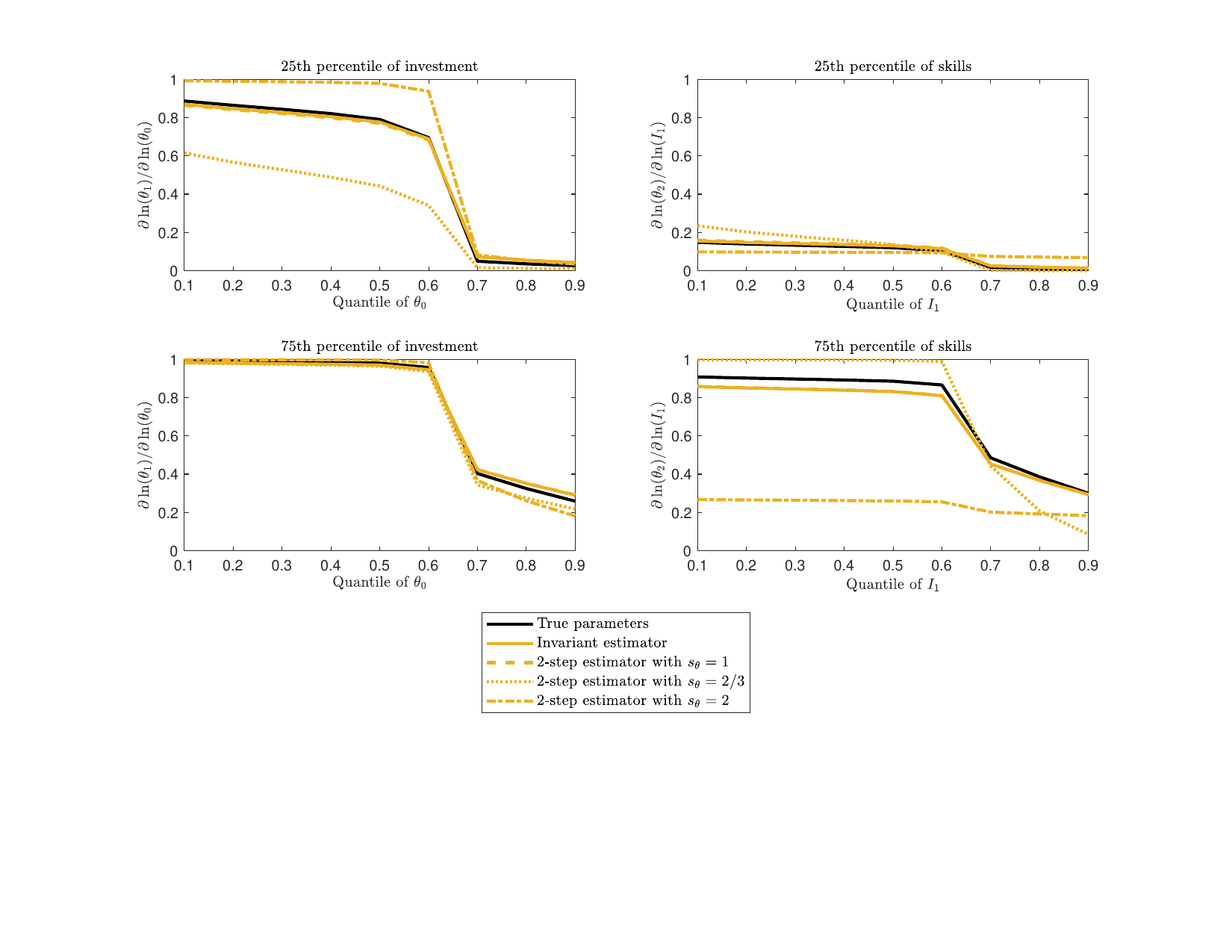}
		\end{center}
\newgeometry{textwidth=16.4cm}
\vspace{-7mm}
{\footnotesize \begin{singlespace}  Notes: This figure shows $ {\partial \ln(\theta_{1}) }/{\partial \ln(\theta_{0})}$ as a function of quantiles of $\ln(\theta_{0})$ and evaluated at the 25th and 75th percentile of $I_0$ as well as $ {\partial \ln(\theta_{2}) }/{\partial \ln(I_1)}$ as a function of quantiles of $\ln(I_1)$ and evaluated at the 25th and 75th percentile of $\theta_{1}$ using different estimators.  \end{singlespace}}
\end{figure}

As a first counterfactual, consider an individual, whose value of $\theta_{0}$ equals $Q_{0.1}(\theta_0)$ and all unobservables are $0$. I then exogenously change the income sequence of that individual and check the implied quantile in the skill distribution in period $2$. Of course, the larger income, the higher the relative rank/quantile in the last period. I report results where a feasible choice of income in each period is a given quantile (that will be varied). Instead of using the feasible choices, I distribute the total income among the two periods to maximize the skill quantile in the last period. The second part of Theorem \ref{th:identfunctionsces} implies that these counterfactuals are point identified, can be consistently estimated with the invariant estimator, but the results with the 2-step estimator of \shortciteN{AMN:19} will depend on $s_{\theta}$.

\begin{figure}[t]
	\begin{center}
	\caption{Response to income changes and optimal shares }
	\label{f:opt_inv_1}
	\includegraphics[trim=3cm 8cm 3cm 3.5cm, clip, width=16cm]{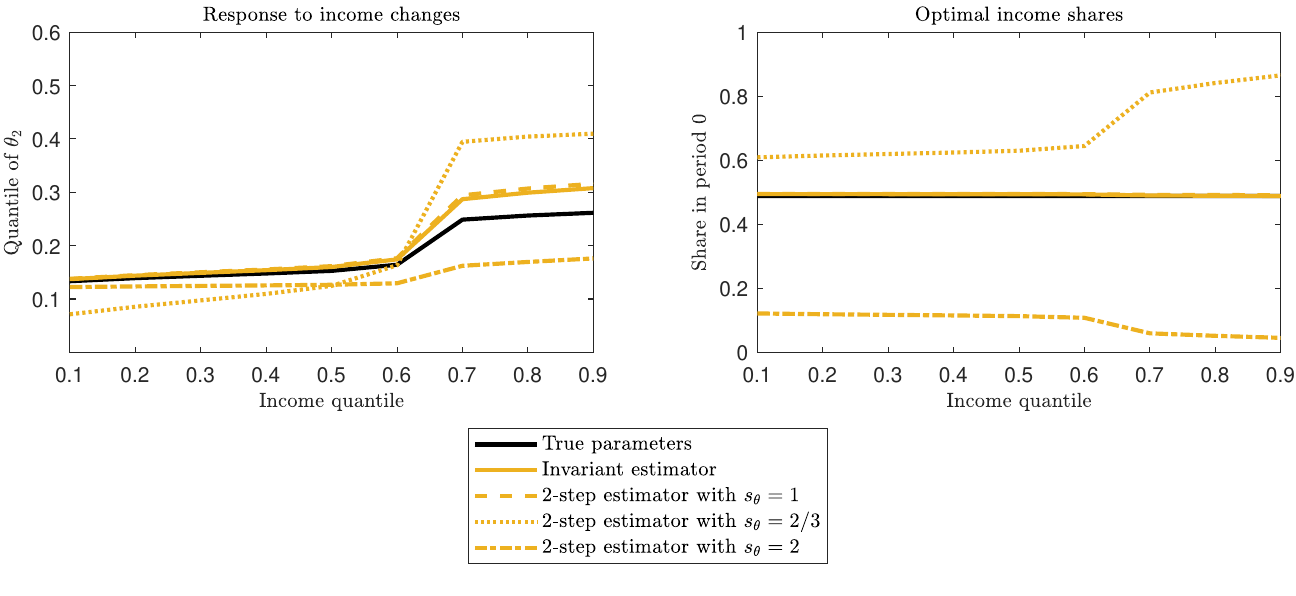}
	\end{center}
\newgeometry{textwidth=16.4cm}
	\vspace{-4mm}
	{\footnotesize \begin{singlespace}  Notes: The figure shows the estimated effects of increases in income on the quantiles of skills in the last periods using different estimators. The left panel shows the quantiles for different income levels and the right panels shows the corresponding optimal income shares in period $0$.  \end{singlespace}}
\end{figure}

The left panel of Figure \ref{f:opt_inv_1} shows the quantile as a function of the feasible income quantile ($0.5$ is the median, etc.).  Using the true parameters, we can see that, even for large income, the quantile in period $2$ is at most around $0.26$  and is almost flat below the median.  The right panel of Figure \ref{f:opt_inv_1} displays the corresponding optimal income shares in period $0$. With the true parameters, income should be similar in both periods. The 2-step estimator with $s_{\theta} =1$ and the invariant estimator (irrespective of the scale) yield similar conclusions for the optimal income sequence, but have a small bias for the estimated quantile in period $2$ (due the approximation error of the joint distribution of the measures/skills as mixtures of normals). Figure \ref{f:opt_inv_1} also demonstrates that changing the units of measurement can lead to inefficient income transfers when using the 2-step estimator. For example, with $s_{\theta} = 2/3$ the results suggest that we should mainly invest in $t=0$, and with $s_{\theta} = 2$ it appears that we should mainly invest in $t=1$. Moreover, the estimated gains of income transfers are misleading in this case. The inconsistent estimates in the left panel suggest that large income transfers can increase the quantile to almost $0.5$ in period $2$ when $s_{\theta} = 2/3$. Contrarily, with  $s_{\theta} = 2$ we would underestimate the effect. Importantly, the results for the new estimator  are invariant to changes in the units of measurements.

\begin{figure}[b!]
		\begin{center}
	\caption{Outcome distributions under different investment sequences}  
	\label{f:count_inv_l}  
	\includegraphics[trim=2cm 3cm 0cm 3.4cm, width=18cm]{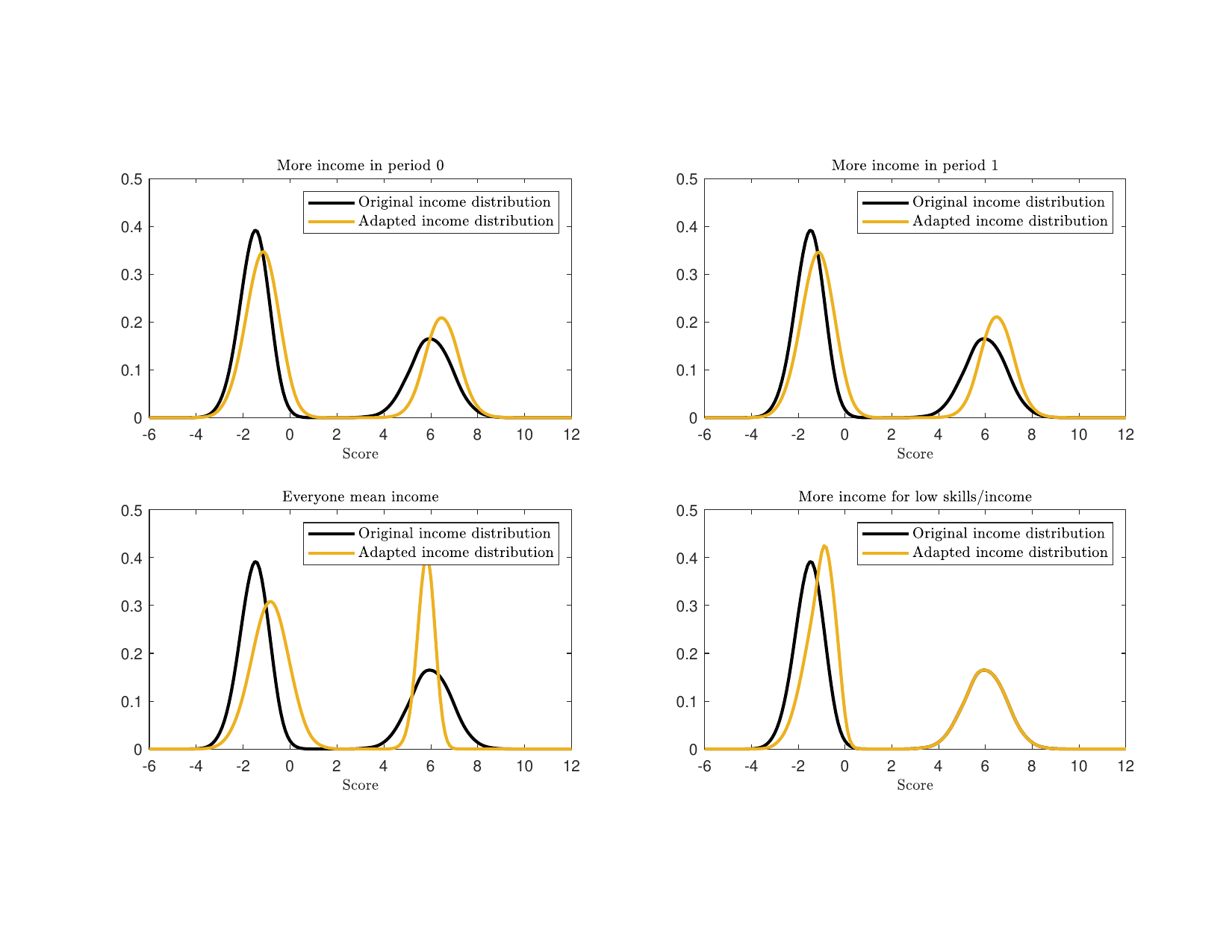}  
		\end{center}
	\newgeometry{textwidth=16.4cm}
\vspace{-7mm}
{\footnotesize \begin{singlespace}  Notes: The figure shows counterfactual outcome distributions. Each panel displays the distribution with the original income distribution and a counterfactual income distribution.  \end{singlespace}}
\end{figure}  

Next, I consider how exogenous income changes affect the skill distribution. To do so, I take draws from the estimated joint distribution of income and  skills in period $0$  (based on the average of the estimated parameters to get representative results) and consider four counterfactual marginal income distributions. First, I increase everyone's income by two standard deviations in period 0. Second, I increase everyone's income by two standard deviations in period 1. Third, I set income to the median for everyone in both periods. Fourth, I increase income by two standard deviations in both periods, but only if the initial skill and income quantiles are below $0.5$. I set all unobservables to their median values. I report results for the invariant estimator only. 
 
Figure \ref{f:count_inv_l} shows the implied distributions of one of the skill measures in the final period (which could be a test score or an adult outcome in an application). These results depend on and should be interpreted relative to the units of measurements of that measures. Figure \ref{f:count_inv_l} is based on $s_{\theta} = 1$. One can see that increasing income in either of the first two periods has very similar effects and leads to an increase in skills. Since everyone's income increases, everyone is better off. If everyone receives the mean income, the variance of the outcome distribution decreases considerably. If we only increase income for people with low initial skills and income, then predictably only the lower tail of the distribution will be affected.

\section{Empirical illustration}
\label{s:application}

In this section, I illustrate the previous findings by replicating some of the estimates of \shortciteN{AMN:19} and showing how changing the units of measurements of the data (or, equivalently, changing the scale restriction) affects parameter estimates and counterfactuals.  

\shortciteN{AMN:19} estimate production functions for cognitive skills and 
health of children with past skills and health as well as parental investment, health, and skills as inputs. They use the CES production function with $\psi_t = 1$. Theorem 3 and Corollary 2 imply that most scale parameters are identified and one only has to restrict the loading of \underline{one} latent variable in \underline{one} time period to 1. \shortciteN{AMN:19} set a loading for \underline{each}  observed factor in \underline{each} time period to 1. Specifically, they write on page 2520: ``One way to meet this condition is to normalize each factor on the same measure every period, assuming that the mapping from measure to factor is invariant with respect to the age of the subject. Fortunately, our data are sufficiently rich that we are able to do this for our model. For child cognitive skills, we always normalize the loading on PPVT to one. Similarly, child health is always normalized on height z scores, investments are normalized on amount spent on books, parental health is normalized on mother’s weight, parental cognitive skills is normalized on mother’s years of schooling, and resources are normalized on income.'' These restrictions already take the considerations of Agostinelli and Wiswall (2016a, 2024) into account. 

\nocite{AW:16a} \nocite{AW:16b}

I consider two changes of the units of measurements. First, I scale book expenditures used to anchor investment. In the descriptive statistics in their Table 2, \shortciteN{AMN:19} report expenditures in USD, but estimation is based on a standardized version of the expenditures in INR.\footnote{While not discussed in their paper or in the code, their data set contains a standardized book expenditures variable, which turns out to be expenditures in INR, but standardized such that the mean and the variance of the pooled expenditures over all time periods are 0 and 1, respectively.} Since the authors write ``we use the same measurements at every age and we normalize on the amount spent on books'' (p. 2524) and ``our investment measure is measured in monetary units, reflecting cost'' (p. 2526), it might not be clear to readers which units the results are based on. Ideally, scaling a measure in this way does not affect the main results, but I show that some of the main results are sensitive to using these standardized expenditures instead of book expenditures in 100 INR.\footnote{This transformation amounts to multiplying the standardized test scores by $7.833$ and adding $4.802$.} Second the estimates in the paper are based on a standardized PPVT test score. I reestimate the model after multiplying all test scores by 3, which implies a variance of the scaled scores that is still well below the variance of the actual PPVT test scores. I then compare some of the estimates of \shortciteN{AMN:19} with the ones obtained using the two scaled versions of the data.\footnote{In the first step, \shortciteN{AMN:19} estimate the joint distribution of skills and then simulate from the estimated distribution to estimate production functions and to calculate counterfactuals. This first step of their code is sensitive to the starting values. To ensure that the results below are based on the same (local) minimum, I simply rescale the simulated variables instead of reestimating that part of the model. }

\begin{figure}[t!]

	\caption{Marginal Product of Investment on Cognition Age $t+1$}
	
	\label{fig:amn_4}
	
	\newgeometry{textwidth=17.2cm}

	\begin{center}

		\vspace{-2mm}
		
		\begin{subfigure}[h]{0.32\textwidth}
			\centering
			\hspace{-6mm} \caption{AMN scale \hspace{4mm}}
			\vspace{-4mm}
			\hspace{-6mm}\includegraphics[width=0.95\textwidth, trim={4cm, 8cm, 5cm,  7.5cm},clip]{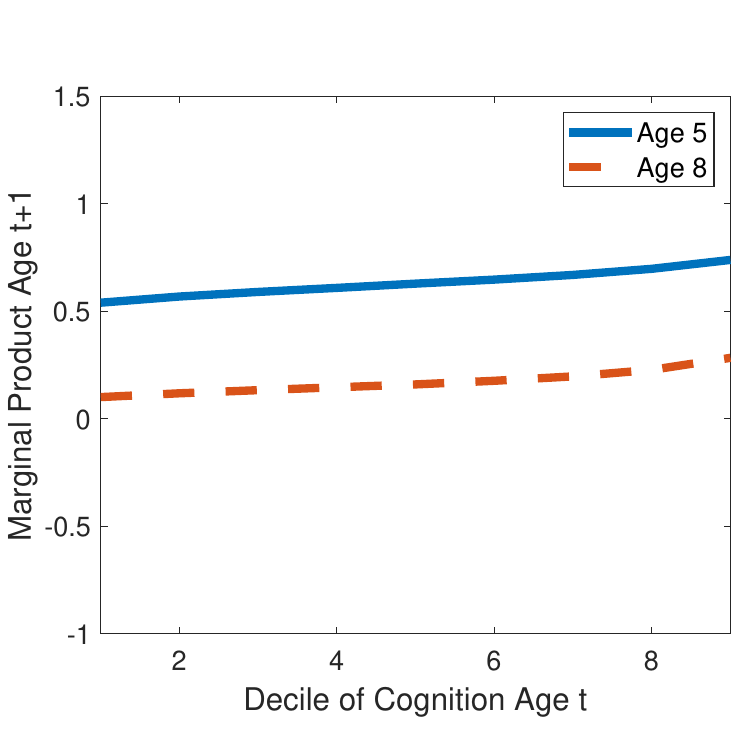} 
		\end{subfigure}
		\begin{subfigure}[h]{0.32\textwidth}
			\centering
			\hspace{-6mm}\caption{Investment in 100 INR 	\hspace{4mm}}
			\vspace{-4mm}
			\hspace{-6mm}\includegraphics[width=0.95\textwidth, trim={4cm, 8cm, 5cm,  7.5cm},clip]{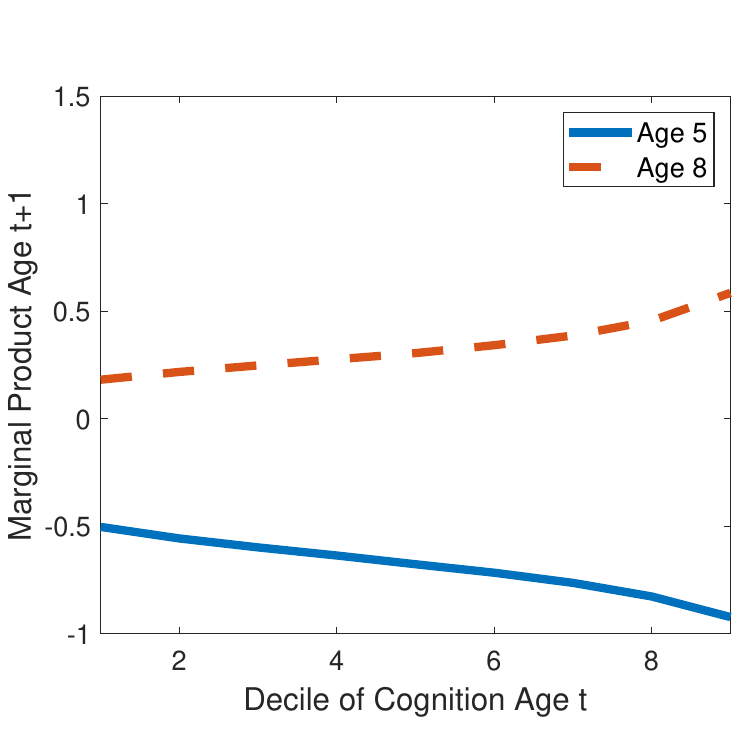}
		\end{subfigure}
		\begin{subfigure}[h]{0.32\textwidth}
			\centering
			\hspace{-6mm}\caption{Scaled test scores \hspace{4mm}}
			\vspace{-4mm}
			\hspace{-6mm}\includegraphics[width= 0.95\textwidth, trim={4cm, 8cm, 5cm,  7.5cm},clip]{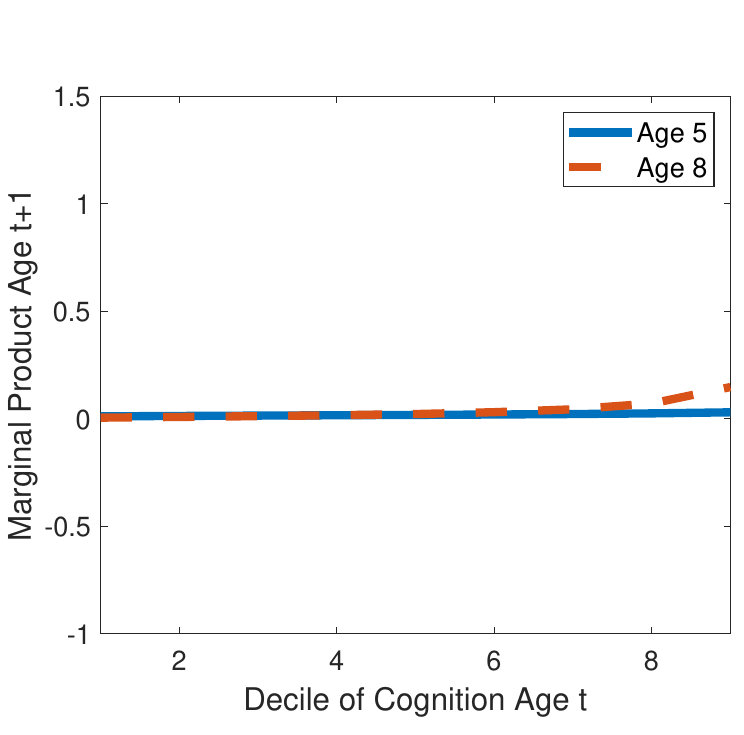}
		\end{subfigure}
		
	\end{center}

	\restoregeometry
	
 	{\footnotesize \begin{singlespace}  Notes: The figure shows marginal products of investment on cognition. The y-axis represents the impact on cognition, in standard deviation units, of increasing investment by one standard deviation. Panel (a) corresponds to the left panel of Figure 4(a) of AMN. Panels (b) and (c) are based on  book expenditures in 100 INR and scaled test scores, respectively. \end{singlespace}}
	\end{figure}

\shortciteN{AMN:19} analyze the marginal product of investment on cognition and write: ``When considering the production of cognitive skills (\ldots) the productivity of investments is much higher at younger ages: investments are more able to affect cognition earlier on.'' This conclusion is based on the left panel of Figure \ref{fig:amn_4} below, which corresponds to Figure 4(a) of \shortciteN{AMN:19}.\footnote{When calculating counterfactuals, they exclude the constant in the investment equation and incorrectly order some columns of the covariate matrix. I correct these issues and thus obtain slightly different estimates.} The middle panel of Figure \ref{fig:amn_4} shows the results based on book expenditures in 100 INR. In this case, the marginal effects are higher at age 8 than at age 5 (and even negative in the latter case).\footnote{I follow \shortciteN{AMN:19} and do not restrict the signs of the coefficients (and many of their specifications result in some negative coefficients as well). } When I rescale test scores instead of book expenditures, the estimated effects are much lower, as can be seen in panel (c).

By Theorem \ref{th:identfunctionsces}, there are different, invariant ways to  interpret the effect of investment on skills. For example, here we can conclude: Fixing all variables at their median values, 
\begin{enumerate}
	\item[(i)] if we increase someones investment from the 40\% quantile to the 60\% quantile at age 5, we increase her skills from the 46\% quantile to the 51\% quantile at age 8, or 
	\item[(ii)]  if we increase investment by 1\% at age 5, we increase skills by 0.004\% at age 8, or
	\item[(iii)] if we increase investment s.t. book expenditures increase by 1 std (or 783 INR) at age 5, we increase skills at age 8 s.t. the standardized PPVT score increases by 0.26 std (or 0.23 units).
\end{enumerate} 

Although the effect described in interpretation (ii) may appear smaller than that in (i), the magnitudes are in fact consistent with each other. To understand their relationship, note that the logs of the latent variables follow mixtures of normal distributions. Here, the estimated variance of log-investment is substantially larger than that of log-skills. As a result, a 10\% increase in investment at the median corresponds to a shift to the 50.18\% quantile. In contrast, a 0.04\% increase in skills at the median corresponds to a shift to the 50.07\% quantile. The relative change in these quantiles ($0.0007/0.0018=0.39$), is comparable to the effect in interpretation (i), where the relative quantile change is $0.05/0.2=0.25$.

\begin{figure}[t!]
	\caption{Estimated effects of income transfers - AMN estimator - skill level}
	\label{fig:amn_5_restricted_level}

	\newgeometry{textwidth=17.2cm}
	
	\begin{center}
		
		\vspace{-6mm}
		
		\begin{subfigure}[h]{0.32\textwidth}
			\centering
			\hspace{-6mm} \caption{Age 5 - AMN scale \, \hspace{5mm}}
			\vspace{-4mm}
			\hspace{-6mm}\includegraphics[width=0.95\textwidth, trim={4cm, 8cm, 5cm,  7.5cm},clip  ]{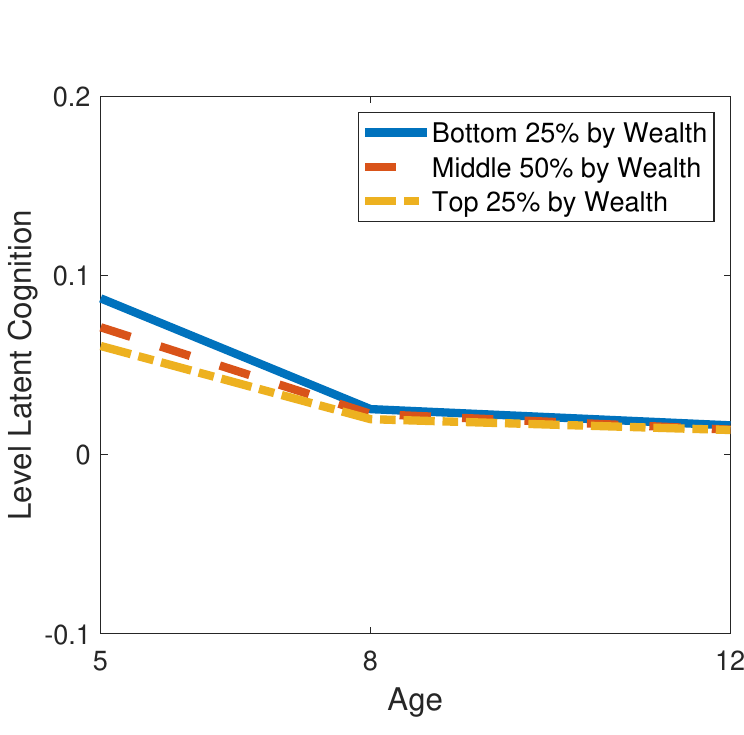}
		\end{subfigure}
		\begin{subfigure}[h]{0.32\textwidth}
			\centering
			\hspace{-6mm} \caption{Age 5 - invest. in 100 INR \,	\hspace{7mm}}
			\vspace{-4mm}
			\hspace{-6mm}\includegraphics[width=0.95\textwidth, trim={4cm, 8cm, 5cm,  7.5cm},clip  ]{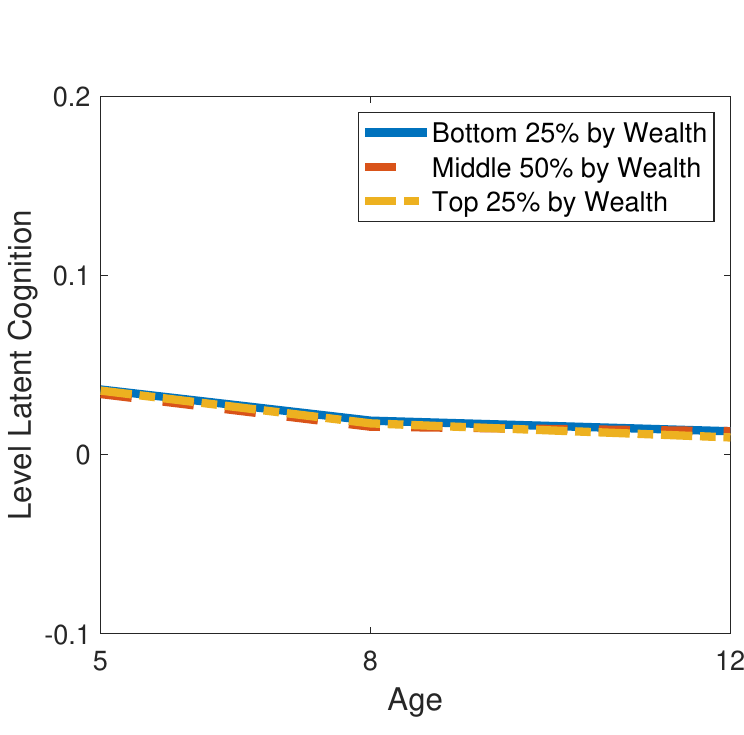}
		\end{subfigure}
		\begin{subfigure}[h]{0.32\textwidth}
			\centering
			\hspace{-6mm}\caption{Age 5 - scaled tests \, \hspace{5mm}}
			\vspace{-4mm}
			\hspace{-6mm}\includegraphics[width=0.95\textwidth, trim={4cm, 8cm, 5cm,   7.5cm},clip  ]{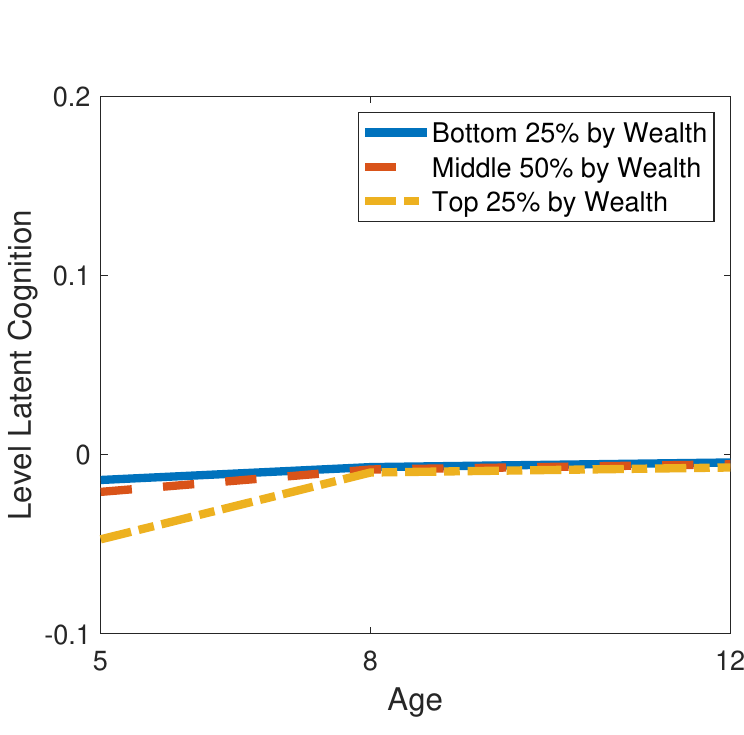}
		\end{subfigure}
		
		\begin{subfigure}[h]{0.32\textwidth}
			\centering
			\hspace{-6mm} \caption{Age 8 - AMN scale \, \hspace{5mm}}
			\vspace{-4mm}
			\hspace{-6mm}\includegraphics[width=0.95\textwidth, trim={4cm, 8cm, 5cm,  7.5cm},clip  ]{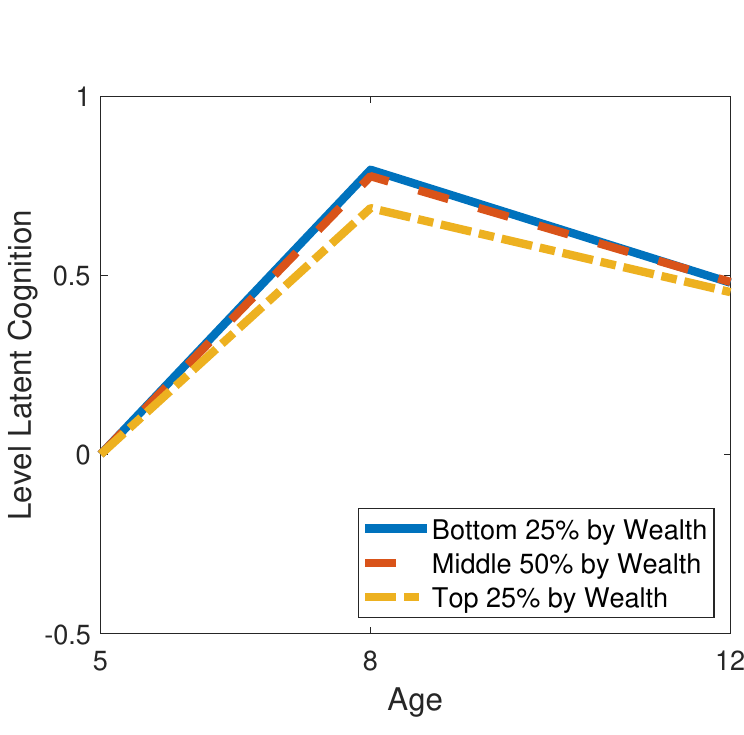}	
		\end{subfigure}
		\begin{subfigure}[h]{0.32\textwidth}
			\centering
			\hspace{-6mm}\caption{Age 8 - invest. in 100 INR \,	\hspace{7mm}}
			\vspace{-4mm}
			\hspace{-6mm}\includegraphics[width=0.95\textwidth, trim={4cm, 8cm, 5cm,   7.5cm},clip  ]{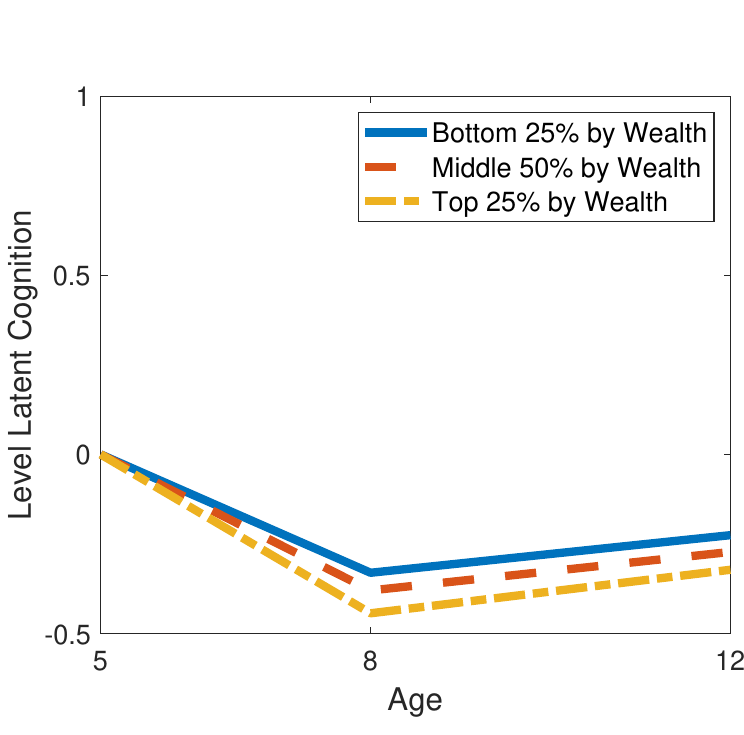}
		\end{subfigure}
		\begin{subfigure}[h]{0.32\textwidth}
			\centering
			\hspace{-6mm}\caption{Age 8 - scaled tests \, \hspace{5mm}}
			\vspace{-4mm}
			\hspace{-6mm}\includegraphics[width=0.95\textwidth, trim={4cm, 8cm, 5cm,  7.5cm},clip  ]{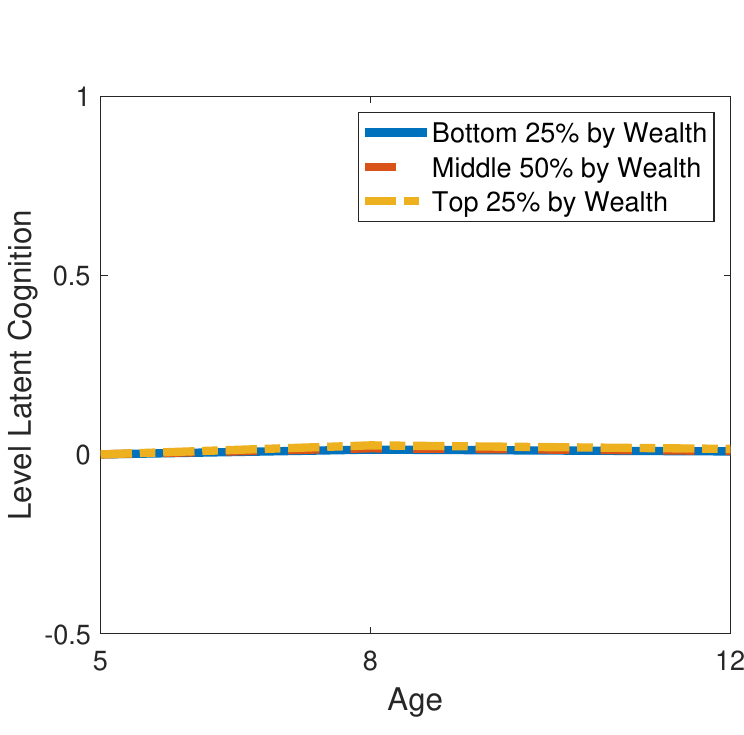}
		\end{subfigure}
		
	\end{center}

	\newgeometry{textwidth=16.6cm}
	{\footnotesize \begin{singlespace}  Notes:  The figure shows the impact on cognition of an income transfer equal to 25\% of mean income in the entire sample. The y-axis represents the impact on cognition, measured as the change in the median level of cognition in standard deviation units, of increasing investment by one standard deviation. In the top three graphs, the transfer is made before age 5. In the lower three graphs, it is made between ages 5 and 8. Panels (a) and (d) correspond to the left two panels of Figure 5 of AMN. Panels (b) and (e) are based on   book expenditures in 100 INR. Panels (c) and (e) are based in scaled test scores. \end{singlespace}}
	\end{figure}

The authors also study the impact on cognition of an income transfer equal to 25\% of mean income in the entire sample. Specifically, they consider the change in the median level of cognition in standard deviation units. The results for the different scales are shown in Figure \ref{fig:amn_5_restricted_level}. The income transfer is made before age 5 in the upper panels and  between ages 5 and 8 in the lower panels.  The authors write: ``In terms of timing, the largest impact is obtained if the transfer takes place when the children are between 5 and 8 for cognition'', which is based on the much larger response in the lower panel at age 12. This conclusion does not hold when using book expenditures in 100 INR, because the effect of investment is then estimated to be negative in panel (e). When scaling test scores instead, the effects are very close to 0 (see panels (c) and (f)).

There are two problems with the counterfactuals reported in Figure \ref{fig:amn_5_restricted_level} using the specification of \shortciteN{AMN:19}. First, identified parameters are set to 1, which leads to misspecification. Second, this counterfactual does not correspond to those that are invariant to the scale and location restrictions. To fix the second problem, one can focus on the change in the median log of skills rather than the level of skills.

\begin{figure}[t!]
	\caption{Estimated effects of income transfers - flexible estimator - log-skills}
	\label{fig:amn_5_flexible_logs}

	\newgeometry{textwidth=17.2cm}
	
	\begin{center}
		
		\vspace{-6mm}
		
		\begin{subfigure}[h]{0.4\textwidth}
			\centering
			\hspace{-6mm} \caption{Age 5 - original scale}
			\vspace{-4mm}
			\hspace{-6mm}\includegraphics[width=0.8\textwidth, trim={4cm, 8cm, 5cm,  7.5cm},clip  ]{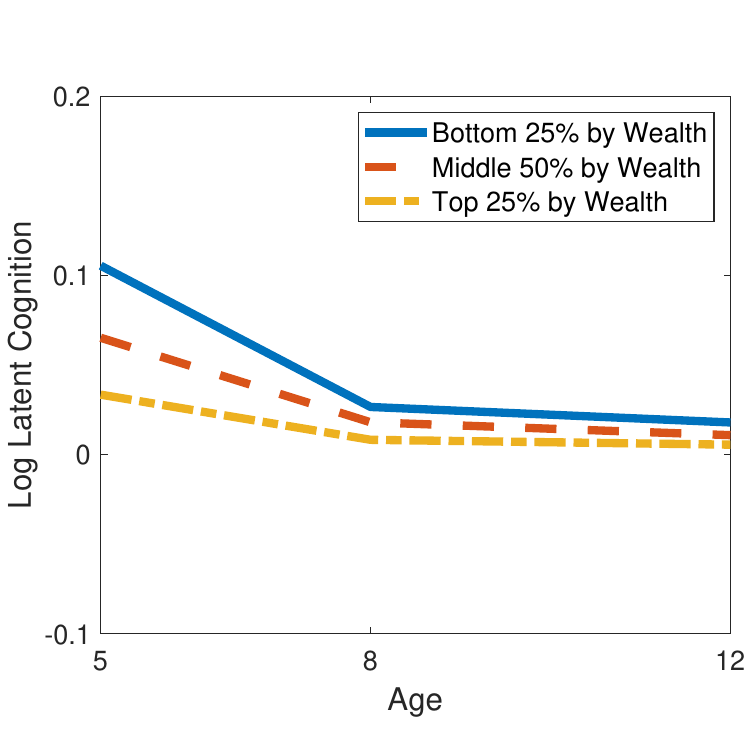}
		\end{subfigure}
		\begin{subfigure}[h]{0.4\textwidth}
			\centering
			\hspace{-6mm}\caption{Age 8 - original scale}
			\vspace{-4mm}
			\hspace{-6mm}\includegraphics[width=0.8\textwidth, trim={4cm, 8cm, 5cm,  7.5cm},clip  ]{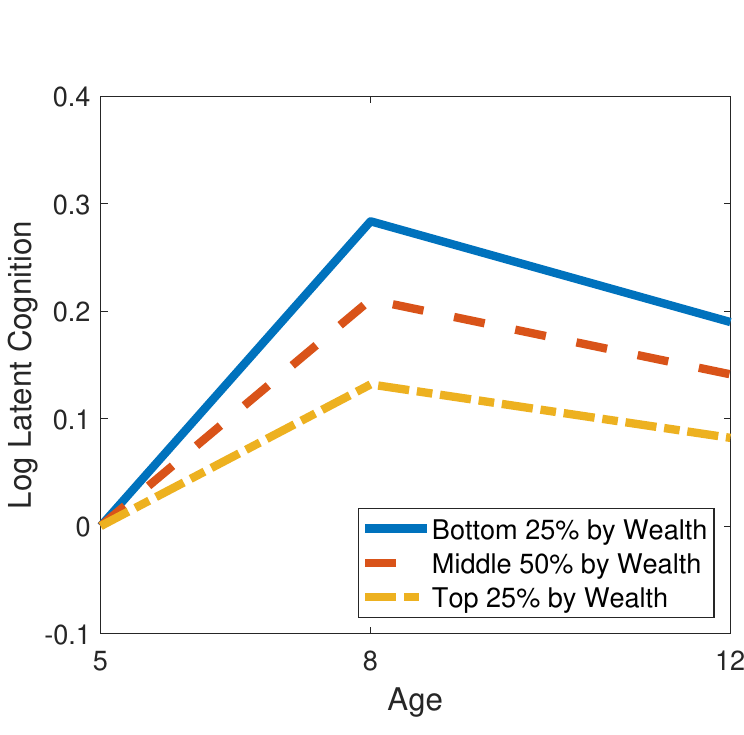}	
		\end{subfigure}
		
		\vspace{4mm}
		\begin{minipage}{0.85\textwidth}
			{\footnotesize \begin{singlespace} Notes:  The figure shows the impact on cognition of an income transfer equal to 25\% of mean income in the entire sample. The y-axis represents the impact on cognition, measured as the change in the median of the log of cognition in standard deviation units, of increasing investment by one standard deviation. In the left graph, the transfer is made before age 5. In the lower graph, it is made  between ages 5 and 8. Estimates are based on the flexible estimator.
			\end{singlespace}} 
		\end{minipage}
		
	\end{center}

\end{figure}

The results for the log-skills using the estimator of \shortciteN{AMN:19} are shown in Figure \ref{fig:amn_5_restricted_log} in Appendix \ref{s:app_figures_application}. Since the estimates are based on setting identified parameters to 1, the results are still sensitive to the scales. Figure \ref{fig:amn_5_flexible_level} in Appendix \ref{s:app_figures_application}  shows the estimates corresponding to the counterfactual of \shortciteN{AMN:19}, but based on the estimator that also estimates the identified scales, which also depend on the the scales of the test scores (but not investment). Finally, Figure \ref{fig:amn_5_flexible_logs} displays the counterfactual in terms of log-skill based on the flexible estimator, which is invariant to the units of measurements (and therefore, the other scales are omitted). Based on Figure \ref{fig:amn_5_flexible_logs}, we can still conclude that investment between ages 5 and 8 is more beneficial. However, compared to panels (d) of Figure \ref{fig:amn_5_restricted_level}, the new estimates imply much more heterogeneity by wealth.  Figure \ref{fig:amn_densities} in Appendix \ref{s:app_figures_application} shows the effects of income transfers on the entire counterfactual distribution of the test scores.

\section{Conclusion} 

This paper demonstrates that in an important class of skill formation models, seemingly innocuous scale and location restrictions may not function as mere normalizations. Instead, they can constrain identified parameters and influence counterfactuals. The specific implications depend on the feature of interest, the production function, the measurement system, and the estimation method. In a new identification analysis, I pool all restrictions of the model and characterize the identified set without imposing additional scale and location restrictions. Notably, many key features remain invariant to these restrictions, are identified under weaker assumptions, yield robust policy implications, and are comparable across different studies.
	
Researchers often impose ``normalizations'' to achieve point identification in models that are otherwise only partially identified. To avoid unintended consequences and potentially misleading conclusions, it is critical to clarify which parameters and features are invariant to these restrictions, ensuring they are genuine normalizations as defined in Definition \ref{d:normalization}. While proving such invariance properties theoretically can be challenging, it is typically possible to demonstrate the robustness of key results to these restrictions. 

One example closely related to the model considered here are skill formation models that explicitly solve household optimization problems to determine optimal investment functions (see e.g. \citeN{DFW:13} or \citeN{CL:20}). These models, which include a production function, are then estimated using the optimality conditions alongside a measurement system for latent variables. While these models have a different structure compared to the ones I study and a detailed analysis is left for future research, it appears that analogous normalization  issues may arise, with consequences that depend on the model specification and the types of counterfactuals examined.

\section{Data availability statement}
	
The data and code underlying this research is available on Zenodo at 

\noindent \url{https://doi.org/10.5281/zenodo.16437164}.

\appendix

\setcounter{figure}{0}
\setcounter{table}{0}
\setcounter{theorem}{0}
\setcounter{assumption}{0}
\renewcommand{\thetable}{A.\arabic{table}}
\renewcommand{\thefigure}{A.\arabic{figure}}
\renewcommand{\thetheorem}{A.\arabic{theorem}}
\renewcommand{\theassumption}{A.\arabic{assumption}}

\allowdisplaybreaks

\begin{center}	
	\textbf{\Large{Online Appendix for \\``Normalizations and misspecification in skill formation models''}}

	\vspace{5mm}

	Joachim Freyberger\symbolfootnote[3]{University of Bonn. Email: freyberger@uni-bonn.de. } 
	
	\vspace{5mm}
	
	August 2025
	
\end{center}

\section{Additional identification results}

\subsection{Parameters of the transformed trans-log model}
\label{s:app_additional_parameters}

The remaining parameters in equation (\ref{eq:prod_fn_rw}) are   
$$\tilde{a}_t =  \lambda_{\theta,t+1,1} a_t + \mu_{\theta,t+1,1} - \frac{\lambda_{\theta,t+1,1}}{\lambda_{\theta,t,1}} \mu_{\theta,t,1} \gamma_{1t} - \frac{\lambda_{\theta,t+1,1}}{\lambda_{I,t,1}} \mu_{I,t,1} \gamma_{2t} + \frac{\lambda_{\theta,t+1,1}}{\lambda_{\theta,t,1} \lambda_{I,t,1}} \mu_{I,t,1}  \mu_{\theta,t,1}\gamma_{3t}  $$
with unobservable $ \tilde{\eta}_{\theta,t} =   \lambda_{\theta,t+1,1}{\eta}_{\theta,t}$, the parameters in the measurement equations (\ref{eq:measurement_eq_rw}) and (\ref{eq:measurement_eq_invest_rw}) are  $ \tilde{\mu}_{\theta,t,1} = 0$, $\tilde{\lambda}_{\theta,t,1} = 1$,  $ \tilde{\mu}_{I,t,1} = 0$, $\tilde{\lambda}_{I,t,1} = 1$, 
$$ \tilde{\lambda}_{\theta,t,m} = \lambda_{\theta,t,m}/\lambda_{\theta,t,1}, \quad \tilde{\mu}_{\theta,t,m} = \mu_{\theta,t,m} - \left(\lambda_{\theta,t,m}/\lambda_{\theta,t,1}\right) \mu_{\theta,t,1}$$
$$ \tilde{\lambda}_{I,t,m} = \lambda_{I,t,m}/\lambda_{I,t,1}, \quad \tilde{\mu}_{I,t,m} = \mu_{I,t,m} - \left(\lambda_{I,t,m}/\lambda_{I,t,1}\right) \mu_{I,t,1},$$
the parameters in equation (\ref{eq:investment_rw}) are
$$ \tilde{\beta}_{0t} = \lambda_{I,t,1} {\beta}_{0t} +  \mu_{I,t,1} - (\lambda_{I,t,1}/\lambda_{\theta,t,1}) \mu_{\theta,t,1} {\beta}_{1t}   , \quad \tilde{\beta}_{1t} = (\lambda_{I,t,1}/\lambda_{\theta,t,1}) {\beta}_{1t},   \quad \tilde{\beta}_{2t} = \lambda_{I,t,1} {\beta}_{2t} $$
with unobservable $ \tilde{\eta}_{I,t} =  \lambda_{I,t,1} {\eta}_{I,t}$, and the parameters in equation (\ref{eq:anchor_eq_rw}) are
$$\tilde{\rho}_0 = \rho_0 -   \frac{\rho_1 \mu_{\theta,T,1}}{\lambda_{\theta,T,1}}, \quad   \tilde{\rho}_1 = \frac{ \rho_1}{\lambda_{\theta,T,1}}.$$

\subsection{Further theorems}

\begin{theorem} 
	\label{th:obseq}	
	Suppose Assumption \ref{a:baseline} holds and consider the model
	\begin{eqnarray*}
		\ln \theta_{t+1} &=& a_t + \gamma_{1t} \ln \theta_{t} +  \gamma_{2t} \ln I_t +  \gamma_{3t} \ln \theta_{t}  \ln I_t + \eta_{\theta,t}   \hspace{18mm} t = 0, \ldots, T-1   \\
		Z_{\theta,t,m} &=& \mu_{\theta,t,m} + \lambda_{\theta,t,m} \ln \theta_{t} + \eps_{\theta,t,m} \hspace{46mm} t = 0, \ldots, T, m = 1, 2   \\
		Z_{I,t,m} &=& \mu_{I,t,m} + \lambda_{I,t,m} \ln I_{t} + \eps_{I,t,m} \hspace{46mm} t = 0, \ldots, T-1, m = 1,2    \\
		\ln I_t &=& \beta_{0t} + \beta_{1t} \ln \theta_{t}  + \beta_{2t} \ln Y_{t}   + \eta_{I,t} \hspace{41mm} t = 0, \ldots, T-1  \\
		Q &=& \rho_{0} + \rho_{1} \ln \theta_{T} + \eta_Q  
	\end{eqnarray*}
	Then there always exist sets of observationally equivalent parameters which are consistent with the data and satisfy Assumptions \ref{a:baseline}, \ref{a:normalization}, either \ref{a:ageinvariant_technology_skills}(a) or \ref{a:ageinvariant_technology_skills}(b)  and either \ref{a:ageinvariant_technology_investment}(a) or \ref{a:ageinvariant_technology_investment}(b). 
	
\end{theorem}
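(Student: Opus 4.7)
The plan is to piggy-back on the reparameterization used to derive the tilde system just before Theorem \ref{th:identparams_tl}. Fix any parameter vector consistent with the data under Assumption \ref{a:baseline}. Given any prescribed targets $(\bar\lambda_{\theta,t,1},\bar\mu_{\theta,t,1})$ and $(\bar\lambda_{I,t,1},\bar\mu_{I,t,1})$ with nonzero loadings, I define $\bar\theta_t$ and $\bar I_t$ through the identities $\ln\tilde\theta_t = \bar\mu_{\theta,t,1}+\bar\lambda_{\theta,t,1}\ln\bar\theta_t$ and $\ln\tilde I_t = \bar\mu_{I,t,1}+\bar\lambda_{I,t,1}\ln\bar I_t$, so that the first measurement equation automatically carries loading $\bar\lambda_{\theta,t,1}$ and intercept $\bar\mu_{\theta,t,1}$ (and analogously for investment). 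Substituting these identities into the tilde production, measurement, investment, and anchoring equations of Section \ref{s:identification_tl} shows that $(\bar\theta_t,\bar I_t)$ again satisfy a trans-log production and a log-linear investment equation with explicit new coefficients $(\bar a_t,\bar\gamma_{1t},\bar\gamma_{2t},\bar\gamma_{3t})$ and $(\bar\beta_{0t},\bar\beta_{1t},\bar\beta_{2t})$ given in closed form by the same recipe that produces the tilde coefficients, but with the bars replacing the originals. This mechanical recipe yields an observationally equivalent parameter vector for every admissible choice of the four free sequences, so the task reduces to selecting those sequences so that the additional restrictions hold.

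Assumption \ref{a:normalization} forces $\bar\mu_{\theta,0,1}=0$ and $\bar\lambda_{\theta,0,1}=1$. Assumption \ref{a:ageinvariant_technology_skills}(a) is achieved by setting $\bar\mu_{\theta,t,1}=0,\bar\lambda_{\theta,t,1}=1$ for every $t$; likewise Assumption \ref{a:ageinvariant_technology_investment}(a) is achieved by $\bar\mu_{I,t,1}=0,\bar\lambda_{I,t,1}=1$ for every $t$. This immediately disposes of the pairing (a,a). For the skill case (b) I build $(\bar\lambda_{\theta,t+1,1},\bar\mu_{\theta,t+1,1})$ recursively from the initial-period values: substituting the closed-form expressions, the constraint $\bar\gamma_{1t}+\bar\gamma_{2t}+\bar\gamma_{3t}=1$ reduces to a single equation of the form $\bar\lambda_{\theta,t+1,1}\cdot R_t=1$, where $R_t$ is an affine combination of $(\tilde\gamma_{1t},\tilde\gamma_{2t},\tilde\gamma_{3t})$ whose coefficients depend only on the already-determined quantities $(\bar\lambda_{\theta,t,1},\bar\mu_{\theta,t,1},\bar\lambda_{I,t,1},\bar\mu_{I,t,1})$; provided $R_t\neq 0$, this pins $\bar\lambda_{\theta,t+1,1}$ uniquely, and the remaining constraint $\bar a_t=0$ is linear in $\bar\mu_{\theta,t+1,1}$ and has a unique solution. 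The investment case (b) is handled symmetrically: $\bar\beta_{1t}+\bar\beta_{2t}=1$ is linear in $\bar\lambda_{I,t,1}$ (with coefficient $\beta_{1t}/\bar\lambda_{\theta,t,1}+\beta_{2t}$) and $\bar\beta_{0t}=0$ then pins down $\bar\mu_{I,t,1}$. Chaining the four possibilities yields observationally equivalent parameterizations realizing each of the pairings (a,a), (a,b), (b,a), (b,b).

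The main obstacle is verifying non-degeneracy of the recursive step, i.e., that the relevant denominators $R_t$ (and their investment analogs) are nonzero at every stage. Since $R_t$ is an explicit affine function of quantities that have already been fixed in earlier stages, it vanishes only on a lower-dimensional subset of the parameter space and is, in particular, nonzero whenever the production function and investment equation genuinely depend on $(\ln\theta_t,\ln I_t)$; this is the same sort of mild regularity condition implicit in the point-identification arguments of Corollary \ref{c:pointident}. I will record this as a generic regularity assumption and note that the explicit numerical constructions in Example \ref{e:rescale} realize the recursion concretely, confirming that the procedure is non-vacuous.
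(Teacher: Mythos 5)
Your proposal is correct and follows essentially the same route as the paper's proof: reparameterize via the tilde system, dispose of the (a) cases by setting the first-measure scales and locations to their canonical values, and handle the (b) cases by recursively choosing $(\bar\lambda_{\theta,t+1,1},\bar\mu_{\theta,t+1,1})$ (resp.\ $(\bar\lambda_{I,t,1},\bar\mu_{I,t,1})$) so that the transformed coefficients sum to one and the intercepts vanish — the paper implements this by dividing by the coefficient sum, which is algebraically the same as your equation $\bar\lambda_{\theta,t+1,1}\cdot R_t=1$. The non-degeneracy of the denominators that you flag explicitly is left implicit in the paper's own argument as well, so this is not a divergence from the published proof.
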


\begin{theorem} 
	\label{th:obseq_ces}	
	Suppose Assumption \ref{a:baseline} holds and consider the model
	\begin{eqnarray*}
		\theta_{t+1} &=& \left(\gamma_{1t} \theta_{t}^{\sigma_t} +   \gamma_{2t} I_t^{\sigma_t}\right)^{\frac{\psi_t}{\sigma_t}}\exp(\eta_{\theta,t})  \hspace{44mm} t = 0, \ldots, T-1  \\
		Z_{\theta,t,m} &=& \mu_{\theta,t,m} + \lambda_{\theta,t,m} \ln \theta_{t} + \eps_{\theta,t,m} \hspace{46mm} t = 0, \ldots, T, m = 1,  2  \\
		Z_{I,t,m} &=& \mu_{I,t,m} + \lambda_{I,t,m} \ln I_{t} + \eps_{I,t,m} \hspace{46mm} t = 0, \ldots, T-1   \\
		\ln I_t &=& \beta_{0t} + \beta_{1t} \ln \theta_{t}  + \beta_{2t} \ln Y_{t}   + \eta_{I,t} \hspace{41mm} t = 0, \ldots, T-1  \\
		Q &=& \rho_{0} + \rho_{1} \ln \theta_{T} + \eta_Q 
	\end{eqnarray*}
	where $\sigma_t \neq 0$ and $\gamma_{1t},\gamma_{2t} >0$ for all $t$. Then there always exist sets of observationally equivalent parameters which are consistent with the data and satisfy Assumptions \ref{a:baseline}, \ref{a:normalization_ces}, either \ref{a:ageinvariant_technology_skills_ces}(a) or \ref{a:ageinvariant_technology_skills_ces}(b)  and either \ref{a:ageinvariant_technology_investment_ces}(a) or \ref{a:ageinvariant_technology_investment_ces}(b)  and either \ref{a:add_restrictions_ces}(a) or \ref{a:add_restrictions_ces}(b).  
	
\end{theorem}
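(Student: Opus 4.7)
The plan is to construct observationally equivalent primitive parameters via the rewriting in $(\tilde\theta_t,\tilde I_t)$ already used in the proof of Theorem \ref{th:identparams_ces}. Start from any parameters satisfying Assumption \ref{a:baseline} and form $\tilde\theta_t = \exp(\mu_{\theta,t,1})\theta_t^{\lambda_{\theta,t,1}}$ and $\tilde I_t = \exp(\mu_{I,t,1}) I_t^{\lambda_{I,t,1}}$. The rewritten system (\ref{eq:prod_fn_ces_norm})--(\ref{eq:anchor_eq_rw_ces}) then expresses the full observable distribution in terms of the quantities identified in Theorem \ref{th:identparams_ces}. An alternative parameter set is built by introducing new latent skills and investments through $\ln\theta_t^\ast = (\ln\tilde\theta_t - \mu_{\theta,t,1}^\ast)/\lambda_{\theta,t,1}^\ast$ and $\ln I_t^\ast = (\ln\tilde I_t - \mu_{I,t,1}^\ast)/\lambda_{I,t,1}^\ast$, with free constants $(\mu_{\theta,t,1}^\ast,\lambda_{\theta,t,1}^\ast,\mu_{I,t,1}^\ast,\lambda_{I,t,1}^\ast)$ subject to the normalization $\lambda_{\theta,0,1}^\ast=1$, $\mu_{\theta,0,1}^\ast=0$. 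Substituting these definitions back into the primitive model yields formulas for a new parameter vector $(\lambda_{\theta,t,m}^\ast,\mu_{\theta,t,m}^\ast,\lambda_{I,t,m}^\ast,\mu_{I,t,m}^\ast,\gamma_{1t}^\ast,\gamma_{2t}^\ast,\sigma_t^\ast,\psi_t^\ast,\beta_{0t}^\ast,\beta_{1t}^\ast,\beta_{2t}^\ast,\rho_0^\ast,\rho_1^\ast)$ that mirror the ones at the beginning of Section \ref{s:ces}. Because the transformation is log-affine and strictly monotone, Assumption \ref{a:baseline} carries over automatically, the CES form is preserved, and the joint observable distribution is unchanged.

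The heart of the proof is to verify, for each of the eight combinations of the (a)/(b) options in Assumptions \ref{a:ageinvariant_technology_skills_ces}, \ref{a:ageinvariant_technology_investment_ces}, and \ref{a:add_restrictions_ces}, that the free constants can be chosen to enforce the desired restrictions. I proceed recursively in $t$ starting from $t=0$. Assumption \ref{a:ageinvariant_technology_skills_ces}(a) fixes $\mu_{\theta,t,1}^\ast=0$ directly, while (b) imposes $\gamma_{1t}^\ast+\gamma_{2t}^\ast=1$ which, via the identified expressions for $\tilde\gamma_{1t}$ and $\tilde\gamma_{2t}$, reduces to a single exponential equation in $\mu_{\theta,t+1,1}^\ast$ that always admits a unique real solution because both $\tilde\gamma_{1t}$ and $\tilde\gamma_{2t}$ are positive under the theorem's hypothesis. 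Analogously, Assumption \ref{a:ageinvariant_technology_investment_ces}(a) sets $\mu_{I,t,1}^\ast=0$, while (b) uses $\beta_{0t}^\ast=0$ to pin down $\mu_{I,t,1}^\ast$ algebraically from the identified $\tilde\beta_{0t}$. Assumption \ref{a:add_restrictions_ces}(a) fixes $\lambda_{\theta,t,1}^\ast=1$ for every $t$; Assumption \ref{a:add_restrictions_ces}(b) (i.e. $\psi_t^\ast=1$) determines $\lambda_{\theta,t+1,1}^\ast$ recursively from the identified ratios $\sigma_t\psi_t/\lambda_{\theta,t+1,1}$ and $\sigma_t/\lambda_{\theta,t,1}$, starting from $\lambda_{\theta,0,1}^\ast=1$. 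Once these scale and location choices have been fixed, the remaining primitive parameters are uniquely determined by matching identified quantities from Theorem \ref{th:identparams_ces}, which guarantees that the constructed parameter vector reproduces the joint distribution of observables.

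The main obstacle is showing that, when several (b) options are combined simultaneously (most acutely in the (b),(b),(b) case), the recursion for $\lambda_{\theta,t+1,1}^\ast$, the exponential equation for $\mu_{\theta,t+1,1}^\ast$, and the algebraic equation for $\mu_{I,t,1}^\ast$ are mutually consistent and preserve the sign conditions $\gamma_{1t}^\ast,\gamma_{2t}^\ast>0$ together with the nonvanishing scale conditions $\lambda_{\theta,t,1}^\ast,\lambda_{I,t,1}^\ast\neq 0$. Since the theorem's hypothesis assumes $\sigma_t\neq 0$ and $\gamma_{1t},\gamma_{2t}>0$ in the original DGP, and since log-affine rescaling preserves positivity of slope coefficients and nonvanishing of exponents, the consistency boils down to bookkeeping on the identified ratios and can be completed case by case for all eight combinations, in parallel to the argument used for Theorem \ref{th:obseq}.
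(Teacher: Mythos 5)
Your proposal is correct and follows essentially the same route as the paper: rewrite the model in terms of $\tilde\theta_t=\exp(\mu_{\theta,t,1})\theta_t^{\lambda_{\theta,t,1}}$ and $\tilde I_t$, treat the first-measure scales and locations as free constants, pin down the scale part first (Assumption \ref{a:add_restrictions_ces}(a) directly or \ref{a:add_restrictions_ces}(b) via the identified ratios $\sigma_t/\lambda_{\theta,t,1}$ and $\sigma_t\psi_t/\lambda_{\theta,t+1,1}$), and then resolve the location constants recursively in $t$, using strict monotonicity of the exponential in $\bar\mu_{\theta,t+1,1}$ together with $\gamma_{1t},\gamma_{2t}>0$ to solve $\bar\gamma_{1t}+\bar\gamma_{2t}=1$ and the affine expression for $\tilde\beta_{0t}$ to solve $\bar\beta_{0t}=0$. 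The case-by-case bookkeeping you defer is exactly what the paper's proof carries out, and it goes through as you anticipate.
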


\section{Anchoring and standardizing}

\subsection{Anchoring in the CES case}
\label{s:anchor_ces}

As in the trans-log case, we can write the measurement system and the adult outcome as 
\begin{eqnarray*}
	Z_{\theta,t,m} &=& \tilde{\mu}_{\theta,t,m} - \frac{ \tilde{\rho}_0 \tilde{\lambda}_{\theta,t,m}}{\tilde{\rho}_1} + \frac{\tilde{\lambda}_{\theta,t,m}}{\tilde{\rho}_1}\ln \tilde{\vartheta}_{t} + \eps_t \\
	Q &=& \ln  \tilde{\vartheta}_{T}  + \eta_Q
\end{eqnarray*}
where the joint distribution of $\{\ln  \tilde{\vartheta}_{t}\}^T_{t=1}$ and $\tilde{\rho}_0$ and  $\tilde{\rho}_1$ are point identified. In general, the implications of anchoring are similar to those in the trans-log case and it replaces Assumption \ref{a:normalization_ces}. However, recall that the relative scale of investment and skills is identified and we therefore cannot anchor both variables to a measure or adult outcome. Moreover, if investment was observed and either Assumption \ref{a:add_restrictions_ces}(a) or Assumption \ref{a:add_restrictions_ces}(b) holds, then $\{\lambda_{\theta,t,1}\}^T_{t=0}$ is point identified and the model can only be consistent with one particular (identified) scale of the skills. Thus, we can only have a CES production technology for $\{\tilde{\vartheta}_t\}^T_{t=0}$ if $\rho_1 = 1$ and the model cannot be consistent with different anchors or different units of measurement.

\subsection{Standardizing the CES inputs}

There is also a large macroeconomic literature on normalized CES production functions; see for example \citeN{KG:00}, \citeN{KMW:12}, \citeN{Temple:12} and references therein. They note that one can always write
$$\left(\gamma_{1t} \theta_{t}^{\sigma_t} +   \gamma_{2t} I_t^{\sigma_t}\right)^{\frac{1}{\sigma_t}}  = \bar{A}_t\left(\bar{\gamma}_{t} \left(\frac{\theta_t}{\bar{\theta}_{t}}\right)^{\sigma_t} +  (1- \bar{\gamma}_{t}) \left(\frac{I_t}{\bar{I}_t}\right)^{\sigma_t}\right)^{\frac{1}{\sigma_t}}  $$ 
where $\bar{\theta}_{t}$ and $\bar{I}_{t}$ are fixed constants to be chosen by the researcher to standardize the inputs, 
$$\bar{\gamma}_{t} = \frac{\gamma_{1t} \bar{\theta}_t^{\sigma_t}}{\gamma_{1t} \bar{\theta}_t^{\sigma_t} + \gamma_{2t} \bar{I}_t^{\sigma_t} }  \quad \text{ and } \quad \bar{A}_t = \left( \gamma_{1t} \bar{\theta}_t^{\sigma_t} + \gamma_{2t} \bar{I}_t^{\sigma_t} \right)^{\frac{1}{\sigma_t}} .$$
Estimating the production function with these two different specifications yields observationally equivalent models with the same elasticity of substitution. These papers discuss that with known units of measurements of the inputs, such standardizations can help interpret the parameters and calibrate the model. 

One potential choice of standardizations could be $\bar{\theta}_{t} = E[\theta_t]$ and $\bar{I}_{t} = E[I_t]$  (see e.g. \citeN{Embrey:19}). However, since the inputs are not observed, $E[\theta_t]$ and $E[I_t]$ are not identified. Instead, we can write the production function in terms of $\tilde{ \theta}_{t+1}$ as
\begin{align*}
	\tilde{\theta}_{t+1}  &=  e^{\mu_{\theta,t+1,1}}\bar{A}_t \left( \bar{\gamma}_t  \left(\frac{\tilde{\theta}_t}{E [\tilde{\theta}_{t}^{1/\lambda_{\theta,t,1}} ]^{\lambda_{\theta,t,1} }  }\right)^{\frac{\sigma_{t} }{\lambda_{\theta,t,1}}} +  (1-\bar{\gamma}_t)\left(\frac{\tilde{I}_t}{ E [\tilde{I}_{t}^{1/\lambda_{I,t,1}} ]^{\lambda_{I,t,1} }  }\right)^{\frac{\sigma_{t} }{\lambda_{I,t,1}}}  \right)^{ \frac{\lambda_{\theta,t+1,1}\psi_{t} }{\sigma_{t}} } \hspace{-2mm} e^{\tilde{\eta}_{\theta,t}}
\end{align*}
To achieve the desired standardization, we would have to standardize the inputs in equation (\ref{eq:prod_fn_ces_norm}) by $E[\tilde{\theta}_{t}^{1/\lambda_{\theta,t,1}}]^{\lambda_{\theta,t,1} } $ and $E[\tilde{I}_{t}^{1/\lambda_{I,t,1}}]^{\lambda_{I,t,1} } $, respectively, which are not identified. We could instead standardize by $E[\tilde{\theta}_{t} ]$ and $E[\tilde{I}_{t} ]$, but is not clear if such a standardization would yield a useful interpretation of the parameters.    Finally, notice that the most important issue with the current specification of the CES production function is misspecification due to setting the scales, which is not mitigated by standardizing inputs.

\subsection{Standardizing the measures}
\label{s:stand_measures}
It can be convenient to standardize the measures before estimating the model due to potential numerical issues resulting from vastly different scales. There are two ways to so. First, we one could subtract the mean of a measure and divide by its standard deviation in each period. That is, we transform $Z_{\theta,t,m} $ to 
$$\tilde{Z}_{\theta,t,m} \equiv \frac{Z_{\theta,t,m}  - E[Z_{\theta,t,m} ]}{std(Z_{\theta,t,m} )} = - \frac{\lambda_{\theta,t,m}  E[ \ln \theta_{t}]}{std(Z_{\theta,t,m} )} + \frac{\lambda_{\theta,t,m}}{std(Z_{\theta,t,m} )}  \ln \theta_{t} + \frac{\eps_{\theta,t,m}}{std(Z_{\theta,t,m} )} $$
with $std(Z_{\theta,t,m} ) = \sqrt{\lambda_{\theta,t,m}^2 var(\ln \theta_{t}) + var(\eps_{\theta,t,m})}$. Such transformations do not necessarily imply that it is more reasonable to have age-invariant measures or that the new loadings, $\lambda_{\theta,t ,m}/\sqrt{\lambda_{\theta,t,m}^2 var(\ln \theta_{t}) + var(\eps_{\theta,t,m})}$, are equal to $1$. For example, if the mean or the variance of skills increases over time, then the mean and the variance of the measures are not time-invariant, even with age-invariant raw measures. Moreover, the standard deviations of the measures also depend on the variances of the measurement errors, and if these variances change over time, then the new loadings also change over time, even if $\lambda_{\theta,t,m}$ and $var(\ln \theta_{t})$ do not. A second possibility is to  subtract the mean of the pooled measure over all time periods in each period and to divide by the standard deviation of the pooled measures. This transformation is a simple change in the units of measurements and certain features are not invariant to it, as discussed in Section \ref{s:skillform}. However, all features in Theorems \ref{th:identfunctions} and \ref{th:identfunctionsces} are invariant to either way of standardization of the measures.

\section{Estimation}
\label{s:estimation}

\subsection{Trans-log }
\label{s:estimation_tl}

To estimate the features in Theorem \ref{th:identfunctions}, one can use any estimator that relies on either Assumptions \ref{a:baseline}, \ref{a:normalization}, \ref{a:ageinvariant_technology_skills}(a) and \ref{a:ageinvariant_technology_investment}(a) or another set of identifying assumptions of Corollary \ref{c:pointident}. For example, the estimator of \citeN{AW:22} (if investment is exogenous) is computationally attractive.	All of these sets of assumptions yield observationally equivalent models with potentially very different primitive parameters, but all features described in Theorem \ref{th:identfunctions} will be identical. It is typically most convenient to set $\mu_{\theta,t,1} = \mu_{I,t,1} = 0$ and  $\lambda_{\theta,t,1} = \lambda_{I,t,1} = 1$, estimate the implied primitive parameters, and then calculate the features in Theorem \ref{th:identfunctions}.  \citeN{AW:22} use Assumption \ref{a:ageinvariant_technology_investment}(b) instead of \ref{a:ageinvariant_technology_investment}(a) because they are concerned that their investment measures are not age-invariant. However, the restrictions in Assumption \ref{a:ageinvariant_technology_investment}(b) are hard to interpret economically (in terms of constant returns to scale) as the parameters depend on the units of measurement of the data (see Example \ref{e:rescale}). For the features described in Theorem \ref{th:identfunctions}, these assumptions are not required and the estimates do not depend on which set of assumptions is employed.   

An alternative could be to use Assumption \ref{a:baseline} only and an estimator that allows for partial identification. However, these methods can be computationally demanding with many parameters and seem to offer little benefits in this setting, because under Assumption \ref{a:baseline} only, the identified sets are typically unbounded. Moreover, the features in Theorem \ref{th:identfunctions} are point identified and can be recovered from an estimator in a point identified model.

\subsection{CES}
\label{s:estimation_ces}

As in the trans-log case, we can estimate the model using any combinations of assumptions  in Corollary \ref{c:ces_ident} that yield point identification. No matter which combination is used, estimates of the features in Theorem \ref{th:identfunctionsces}, will be identical. In this section I outline one particular way to do so based on equations  (\ref{eq:prod_fn_ces_norm})--(\ref{eq:anchor_eq_rw_ces}) together with Assumptions \ref{a:baseline}, \ref{a:normalization_ces}, \ref{a:ageinvariant_technology_skills_ces}(a), and \ref{a:ageinvariant_technology_investment_ces}(a), and \ref{a:add_restrictions_ces}(b) and an adaptation of the estimation approach of \shortciteN{AMN:19}. 

In the first step, we can use the restriction $\tilde{\lambda}_{\theta,t,1} = \tilde{\lambda}_{I,t,1} =1$ for all $t$ and $\tilde{\mu}_{\theta,t,m} = \tilde{\mu}_{I,t,m} = 0$ for all $t$ and $m$ to estimate the joint distribution of $(\{\tilde{\theta}_t\}^{T}_{t=0},\{\tilde{I}_t,Y_t\}^{T}_{t=0},Q)$ as well as $\tilde{\lambda}_{\theta,t,m}$, $\tilde{\lambda}_{I,t,m}$ for all $t$ and $m>1$, $\tilde{\rho}_0$, and $\tilde{\rho}_1$. For example, \shortciteN{AMN:19} assume that the measures, log-skills,  log-investment, and log-income, have a normal mixture distribution. In the second step, we can take draws from the estimated joint distribution and estimate the remaining parameters. To do so, let  $\{ \{\tilde{\theta}_{t,j}\}^{T}_{t=0},\{\tilde{I}_{t,j}, Y_{t,j}\}^{T}_{t=0},Q_j\}^J_{j=1}$ be these draws and let
$$(\hat{\tilde{\beta}}_{0t},\hat{\tilde{\beta}}_{1t},\hat{\tilde{\beta}}_{2t}) = \argmin_{ \{\tilde{\beta}_{0t}, \tilde{\beta}_{1t}, \tilde{\beta}_{2t}\}}  \sum^J_{j=1}\left( \ln \tilde{I}_{t,j} - \tilde{\beta}_{0t} - \tilde{\beta}_{1t} \ln \tilde{\theta}_{t,j}  - \tilde{\beta}_{2t} \ln Y_{t,j} \right)^2  $$
and $\hat{\tilde{\eta}}_{I,t,j} =\ln \tilde{I}_{t,j} - \hat{\tilde{\beta}}_{0t} - \hat{\tilde{\beta}}_{1t} \ln \tilde{\theta}_{t,j}  - \hat{\tilde{\beta}}_{2t} \ln Y_{t,j}$. Next  set $\hat{\lambda}_{\theta,0,1} = {\lambda}_{\theta,0,1} = 1$ and let
\begin{align*}
	&( \{ \hat{\lambda}_{\theta,t,1}\}^{T-1}_{t=1}, \{\hat{\lambda}_{I,t,1},\hat{\sigma}_t,\hat{\gamma}_{1t},\hat{\gamma}_{2t},\hat{\tilde{\kappa}}_{t}\}^{T-1}_{t=0} ) \\
	& = \argmin_{  \{ \lambda_{\theta,t,1}\}^{T-1}_{t=1}, \{\lambda_{I,t,1},\sigma_t,{\gamma}_{1t},{\gamma}_{2t},\tilde{\kappa}_{t}\}^{T-1}_{t=0} }   \sum^T_{t=1} \sum^J_{j=1}  \left(\ln \tilde{\theta}_{t+1,j} -   { \frac{\lambda_{\theta,t+1,1}}{\sigma_{t}} } \ln \left(  {\gamma}_{1t}    \tilde{\theta}_{t,j}^{\frac{\sigma_{t} }{\lambda_{\theta,t,1}}} +  {\gamma}_{2t}  \tilde{I}_{t,j}^{\frac{\sigma_{t} }{\lambda_{I,t,1}}}  \right) - \tilde{\kappa}_{t}\hat{\tilde{\eta}}_{I,t,j}\right)^2 
\end{align*}
Using these estimates, we can then recover $\hat{\lambda}_{\theta,t,m} =  \hat{\tilde{\lambda}}_{\theta,t,m} \hat{\lambda}_{\theta,t,1} $, $\hat{\lambda}_{I,t,m} =  \hat{\tilde{\lambda}}_{I,t,m} \hat{\lambda}_{I,t,1} $, $\hat{\beta}_{0t} = \frac{\tilde{\beta}_{0t}}{\hat{\lambda}_{I,t,1}} $, $\hat{\rho}_0 = \hat{\tilde{\rho}}_0$, $\hat{\rho}_1 = \hat{\tilde{\rho}}_1 \hat{\lambda}_{\theta,t,1}$, $\hat{\beta}_{1t} = \tilde{\beta}_{1t} \frac{\hat{\lambda}_{\theta,t,1}}{\hat{\lambda}_{I,t,1}}$, and $\hat{\beta}_{2t} = \frac{\tilde{\beta}_{2t}}{\hat{\lambda}_{I,t,1}} $, as well as the estimated distributions of skills and investment using the relationship $\ln\theta_t = \frac{\ln\tilde{\theta}_t}{{\lambda}_{\theta,t,1}}$ and  $\ln I_t = \frac{\ln\tilde{I}_t}{{\lambda}_{I,t,1}}$. The estimation procedure also easily allows imposing additional assumptions, such as age-invariance of the first skill measure in which case $ {\lambda}_{\theta,1,1} = {\lambda}_{\theta,t,1}$ for all $t$.

As an alternative interpretation of the estimator, consider a more general CES type production function, namely
$\theta_{t+1} = \left(\gamma_{1t} \theta_{t}^{\sigma_{1t}} +   \gamma_{2t} I_t^{\sigma_{2t}}\right)^{\frac{1}{\sigma_{3t}}}\exp(\eta_{\theta,t})$.
This model is observationally equivalent to the one with a more restricted production function because the scales of log-skills are also free parameters. The features in parts 1--4 of Theorem \ref{th:identfunctionsces} are thus identified under Assumption \ref{a:baseline} only.   However, now the production function is sufficiently flexible and setting $ {\lambda}_{\theta,t,1} =  {\lambda}_{I,t,1} =1$ for all $t$ and $ {\mu}_{\theta,t,m} =  {\mu}_{I,t,m} = 0$ simply selects an element of the identified set and one can estimate the invariant features using these restrictions.

\section{Nonparametric identification}
\label{s:genident}

I now extend these results to a general nonparametric model where 
\begin{eqnarray}
	\theta_{t+1} &=& f_t(\theta_{t},I_{t},\eta_{I,t},\varsigma_{\theta,t} )  \hspace{32mm} t = 0, \ldots, T-1 \label{eq:prod_fn_np} \\
	Z_{\theta,t,m} &=& g_{\theta,t,m}(\theta_{t},\eps_{\theta,t,m}) \hspace{35mm} t = 0, \ldots, T, \, m = 1,2,3 \label{eq:measurement_eq_np} \\
	Z_{I,t,m} &=& g_{I,t,m}(I_{t},\eps_{I,t,m}) \hspace{35mm} t = 0, \ldots, T-1, T, \,m = 1,2,3 \label{eq:measurement_eq_invest_np} \\ 
	I_t &=& h_t(\theta_{t},Y_{t},\eta_{I,t})  \hspace{38mm} t = 0, \ldots, T-1  \\
	Q &=& r(\theta_{T},\eta_Q) \label{eq:anchor_eq_np}
\end{eqnarray}

Similar to \shortciteN{CHS:10} and \citeN{AW:22}, in the nonparametric model, we need three measures in each period. I adapt Assumption \ref{a:baseline}  as follows.

\begin{assumption}\label{a:general}  \qquad 
	
	\begin{enumerate}[(a)]

		\item $\{\{\eps_{\theta,t,m}\}_{t=0,\ldots,T, m= 1,2}, \{\eps_{I,t,m}\}_{t=0,\ldots,T-1, m= 1,2} , \eta_Q\}$ are jointly independent and independent of $\{\{\theta_{t}\}^{T}_{t=0}, \{I_t\}^{T-1}_{t=0}\}$ conditional on $\{Y_t\}^{T-1}_{t=0}$.   
		
		\item All random variables are continuously distributed with strictly increasing cumulative distribution functions and have bounded first and second moments.

		\item The joint density of $\{\{Z_{\theta,t,m}\}_{t=0,\ldots,T, m= 1,2,3}, \{Z_{I,t,m}\}_{t=0,\ldots,T-1, m= 1,2,3}, Q, \{\theta_{t}\}^{T}_{t=0}, \{I_{t}\}^{T}_{t=0}\}$ is bounded conditional on $\{Y_t\}^{T-1}_{t=0}$ and so are all their marginal and conditional densities. $\{\{Z_{\theta,t,1}\}^T_{t=0},\{Z_{I,t,1}\}^{T-1}_{t=0}\}$ is bounded complete for $\{\{Z_{\theta,t,2}\}^T_{t=0},\{Z_{I,t,2}\}^{T-1}_{t=0}\}$ and $\{ \{\theta_{t}\}^{T}_{t=0}, \{I_{t}\}^{T}_{t=0} \} $ is bounded complete for $\{\{Z_{\theta,t,1}\}^T_{t=0},\{Z_{I,t,1}\}^{T-1}_{t=0}\}$.
		
		\item  $g_{\theta,t,m}$, $g_{I,t,m}$, and $r$ are strictly increasing in both arguments for all $m$ and $t$. $f_t$ and $h_t$ are strictly increasing in the last argument for all $t$.
		
		\item  $\theta_{t}$ and  $I_{t}$ have strictly positive support for all $t$.

		\item $\eta_{I,t}$ is independent of $(\theta_{t},Y_{t})$ for all $t$ and	$\varsigma_{\theta,t}$ is independent of $(I_\theta,\theta_{t},Y_{t})$ for all $t$.

	\end{enumerate}
	
\end{assumption}

These assumptions are similar to those of \shortciteN{CHS:10}, where now $\eta_{I,t}$ enters the production function directly. As a special case, suppose $ f_t(\theta_{t},I_{t},\eta_{I,t},\varsigma_{\theta,t} ) = f_t(\theta_{t},I_{t},\kappa_{t} \eta_{I,t}+\varsigma_{\theta,t})$. Then with $\eta_{\theta,t} = \kappa_{t} \eta_{I,t}+\varsigma_{\theta,t}$, we have $E[\eta_{\theta,t} \mid \theta_{t},\eta_{I,t} , Y_t] = \kappa_t \eta_{I,t}$, which is part (h) of Assumption \ref{a:baseline}. Here, the unobservables are allowed to enter much more flexibly. Parts (a) -- (c) are analogous to assumptions made in \shortciteN{CHS:10} and I build on their results to identify the joint distribution of 
$$\{\{Z_{\theta,t,m}\}_{t=0,\ldots,T, m= 1,2}, \{Z_{I,t,m}\}_{t=0,\ldots,T-1, m= 1,2}, Q,   \{\tilde{g}_{\theta,t}(\theta_t)\}^{T}_{t=0}, \{\tilde{g}_{I,t}(I_t)\}^{T-1}_{t=0}  \}$$
up to unknown and strictly increasing functions $\tilde{g}_{\theta,t}$ and $\tilde{g}_{I,t}$. This result is similar to the identification result in Lemma \ref{l:identjoint}, which shows identification up to linear transformations. \shortciteN{CHS:10} make an additional assumption, which ensures that the functions $\tilde{g}_{\theta,t}$ and $\tilde{g}_{I,t}$ can be pinned down (i.e. condition (v) of their Theorem 2). This assumption is similar to Assumption \ref{a:normalization} and it is not a normalization with respect to $f_t$. I do not use this assumption and instead focus on  features that are invariant to these monotone transformations. 

\begin{theorem}
	\label{th:indentgen}
	Suppose Assumption \ref{a:general} holds.    
	
	\begin{enumerate}
		
		\item Let $\{\alpha_j\}^5_{j=1} \in (0,1)$ be such that $( Q_{\alpha_1}(\theta_{t})  ,  Q_{\alpha_3}( I_t)   , Q_{\alpha_2}(\eta_{I,t}),    Q_{\alpha_4}(\varsigma_{\theta,t}), Q_{\alpha_5}(\epsilon_{\theta, t+1,m}) )$ is on the support of $(  \theta_{t},  I_t    , \eta_{I,t} ,   \varsigma_{\theta,t}, \epsilon_{\theta, t+1,m}  )$. Then $F_{ \theta_{t+1} }(   f_t(   Q_{\alpha_1}(\theta_t)    ,  Q_{\alpha_2}(I_t), Q_{\alpha_3}(\eta_{I,t}), Q_{\alpha_4}(\varsigma_{\theta,t}) ) )$ and $g_{ \theta, {t+1},m }(   f_t(   Q_{\alpha_1}(\theta_t)    ,  Q_{\alpha_2}(I_t), Q_{\alpha_3}(\eta_{I,t}), Q_{\alpha_4}(\varsigma_{\theta,t}) ), Q_{\alpha_5}(\epsilon_{\theta, t+1,m})   )$  are point identified.

		\item Let $  I_t(y) = h_t(Q_{\alpha_1}(\theta_t),y,Q_{\alpha_2}(\eta_{I,t}))    $
		Then
		$F_{ \theta_{t+1} }(   f_t(   Q_{\alpha_1}(\theta_t)    ,  I_t(y), Q_{\alpha_2}(\eta_{I,t}), Q_{\alpha_3}(\varsigma_{\theta,t}) ) )$
		and   $g_{ \theta, {t+1},m }(   f_t(   Q_{\alpha_1}(\theta_t)    ,I_t(y), Q_{\alpha_2}(\eta_{I,t}), Q_{\alpha_3}(\varsigma_{\theta,t}) ), Q_{\alpha_4}(\epsilon_{\theta, t+1,m})   )$  are point identified. 
		
		\item $  P\left(Q \leq q   \mid \theta_s= Q_{\alpha_1}(\theta_s), \{I_t = Q_{\alpha_{2t}}(I_t) \}^{T-1}_{t=0}, \{\eta_{I,t} = Q_{\alpha_{3t}}(\eta_{I,t})   \}^{T-1}_{t=s}, \{\varsigma_{\theta,t} = Q_{\alpha_{4t}}(\varsigma_{\theta,t})   \}^{T-1}_{t=s}  \right)$ is point identified for all $\alpha_1,\{\alpha_{2t},\alpha_{3t},\alpha_{4t} \}^{T-1}_{t=s} \in (0,1)$ such that the quantiles are on the joint support of the random variables.

		\item $  P\left(Q \leq q   \mid \theta_s= Q_{\alpha}(\theta_s), \{Y_t = y_t \}^{T-1}_{t=s}     \right)$ is point identified for all $\alpha \in (0,1)$.

	\end{enumerate}

\end{theorem}

Just as before, we can identify how investment/income and shocks affect the relative standing in the skill distribution.  We can also identify the effect of a sequence of investment/income on adult outcomes, given a quantile of the initial skills. Notice that these parameters are point identified without any normalizations and they neither require scale and location restrictions on the production function nor age-invariant measures. These features are now not only invariant to changes in the units of measurement, but to any monotone transformations of the measures. Notice that we can only identify a nonlinear transformation of the skills. Therefore, the sequence of investment that maximizes $E(\theta_T \mid    \theta_s= Q_{\alpha}(\theta_s), \{Y_t = y_t \}^{T-1}_{t=s}  )$ is not point  identified and choosing a particular transformation can yield erroneous conclusions regarding the role of investment. It is also important to mention that in nonadditive models, support conditions play an important role because we cannot extrapolate using the functional form. The results also implicitly contain an instrument relevance condition because if $h_t$ is constant in $Y_t$, then   $Q_{\alpha_2}(I_t) $ is completely determined by $ (Q_{\alpha_1}(\theta_t),Q_{\alpha_3}(\eta_{I,t}) )$.

Another advantage of stating results without any seemingly innocuous normalizations is that one can easily impose more structure on the model without having to check and potentially adjust the normalization. For example, to reduce the dimensionality of the model, we might want to simplify the measurement system to
\begin{eqnarray*}
	Z_{\theta,t,m} &=& g_{\theta,t,m}(\theta_{t} + \eps_{\theta,t,m}) \hspace{26mm} t = 0, \ldots, T, m = 1,2,3   \\
	Z_{I,t,m} &=& g_{I,t,m}(I_{t} + \eps_{I,t,m}) \hspace{26mm} t = 0, \ldots, T-1  , m = 1,2,3  
\end{eqnarray*}
A normalization in the more general model might then not be a normalization in the more restrictive model. Also here, the results from Theorem \ref{th:indentgen} apply and we can identify features that are invariant to monotone transformations of the measures.

\clearpage

\section{Additional figures application}
\label{s:app_figures_application}

\begin{figure}[h!]
	\caption{Estimated effects of income transfers - AMN estimator - log-skills}
	\label{fig:amn_5_restricted_log}

	\newgeometry{textwidth=17.2cm}
	
	\begin{center}
		
		\vspace{-6mm}
		
		\begin{subfigure}[h]{0.32\textwidth}
			\centering
			\hspace{-6mm} \caption{Age 5 - original scale}
			\vspace{-4mm}
			\hspace{-6mm}\includegraphics[width=0.95\textwidth, trim={4cm, 8cm, 5cm,  7.5cm},clip  ]{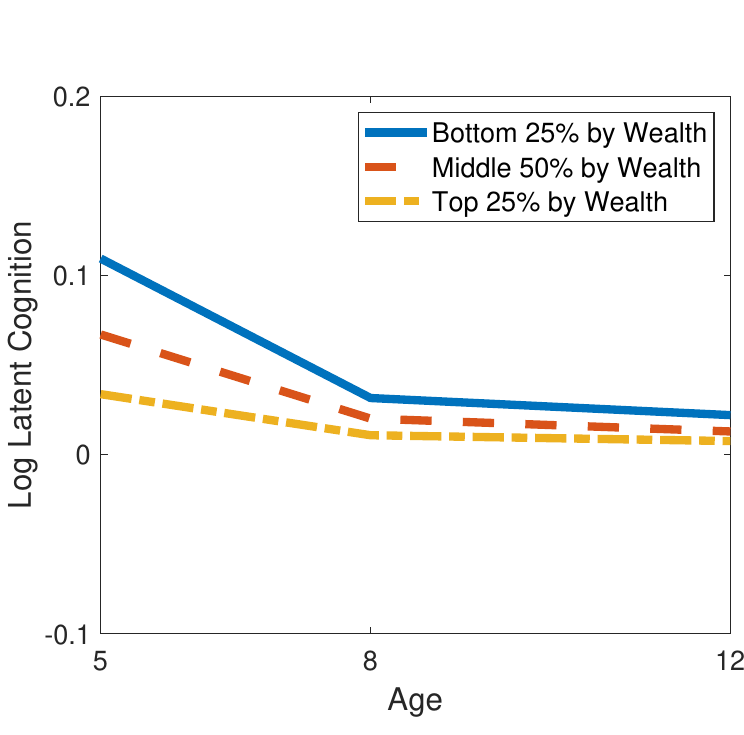}
		\end{subfigure}
		\begin{subfigure}[h]{0.32\textwidth}
			\centering
			\hspace{-6mm} \caption{Age 5 - scaled investment}
			\vspace{-4mm}
			\hspace{-6mm}\includegraphics[width=0.95\textwidth, trim={4cm, 8cm, 5cm, 7.5cm},clip  ]{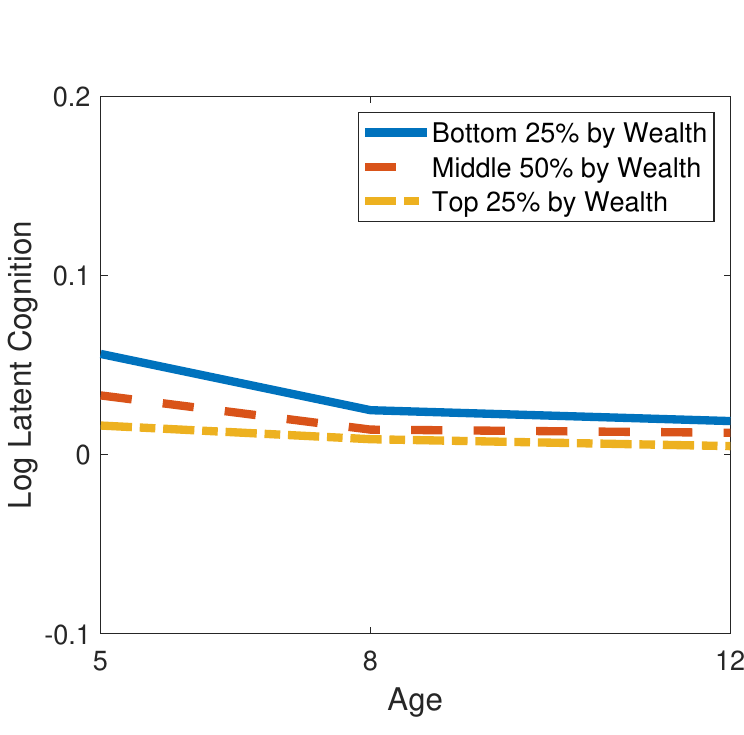}
		\end{subfigure}
		\begin{subfigure}[h]{0.32\textwidth}
			\centering
			\hspace{-6mm}\caption{Age 5 - scaled tests}
			\vspace{-4mm}
			\hspace{-6mm}\includegraphics[width=0.95\textwidth, trim={4cm, 8cm, 5cm,  7.5cm},clip  ]{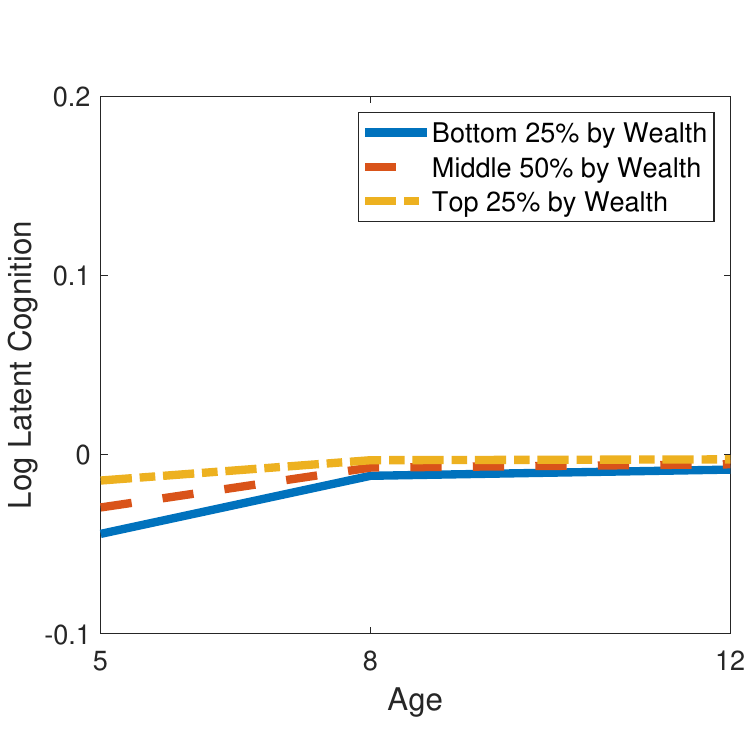}
		\end{subfigure}
		
		\begin{subfigure}[h]{0.32\textwidth}
			\centering
			\hspace{-6mm}\caption{Age 8 - original scale}
			\vspace{-4mm}
			\hspace{-6mm}\includegraphics[width=0.95\textwidth, trim={4cm, 8cm, 5cm,  7.5cm},clip  ]{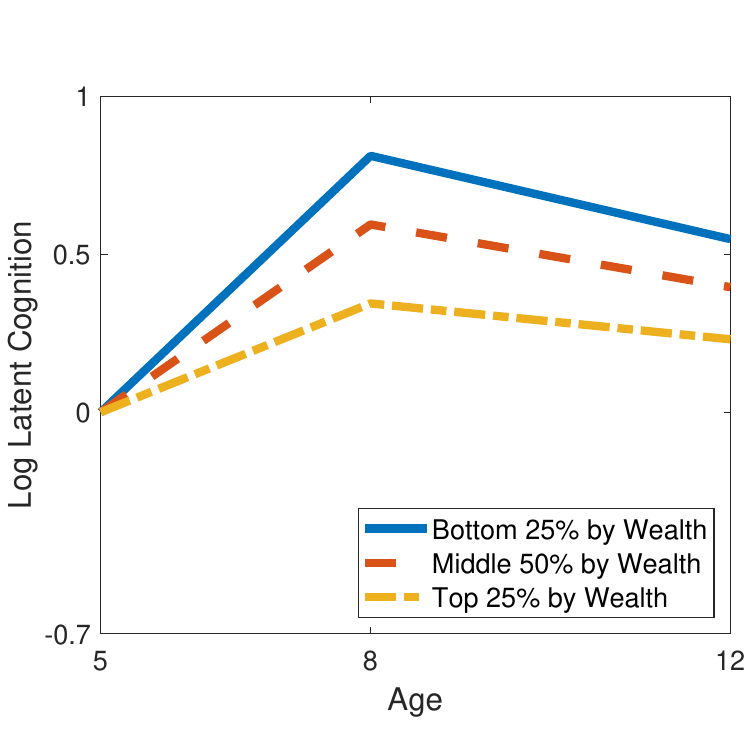}	
		\end{subfigure}
		\begin{subfigure}[h]{0.32\textwidth}
			\centering
			\hspace{-6mm}\caption{Age 8 - scaled investment}
			\vspace{-4mm}
			\hspace{-6mm}\includegraphics[width=0.95\textwidth, trim={4cm, 8cm, 5cm,  7.5cm},clip  ]{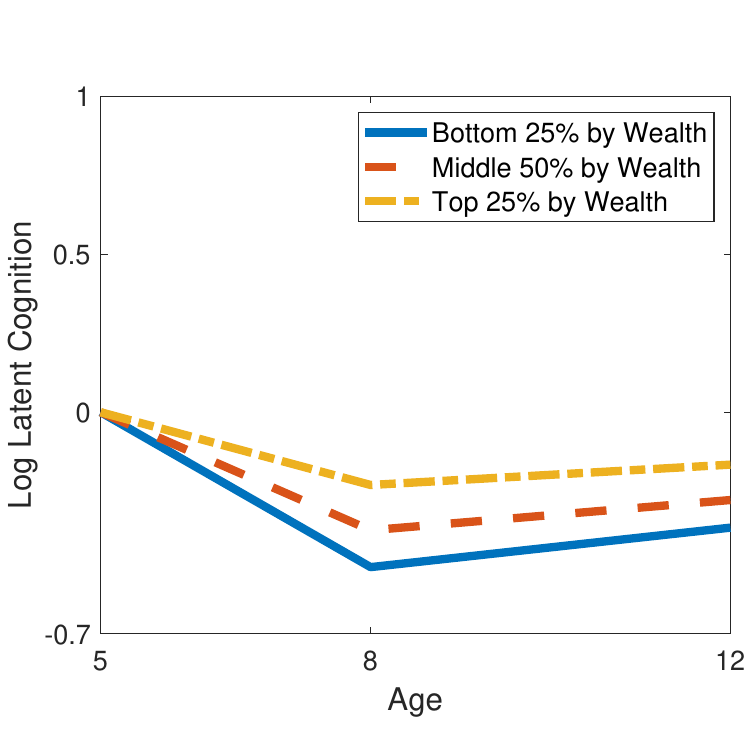}
		\end{subfigure}
		\begin{subfigure}[h]{0.32\textwidth}
			\centering
			\hspace{-6mm}\caption{Age 8 - scaled tests}
			\vspace{-4mm}
			\hspace{-6mm}\includegraphics[width=0.95\textwidth, trim={4cm, 8cm, 5cm,  7.5cm},clip  ]{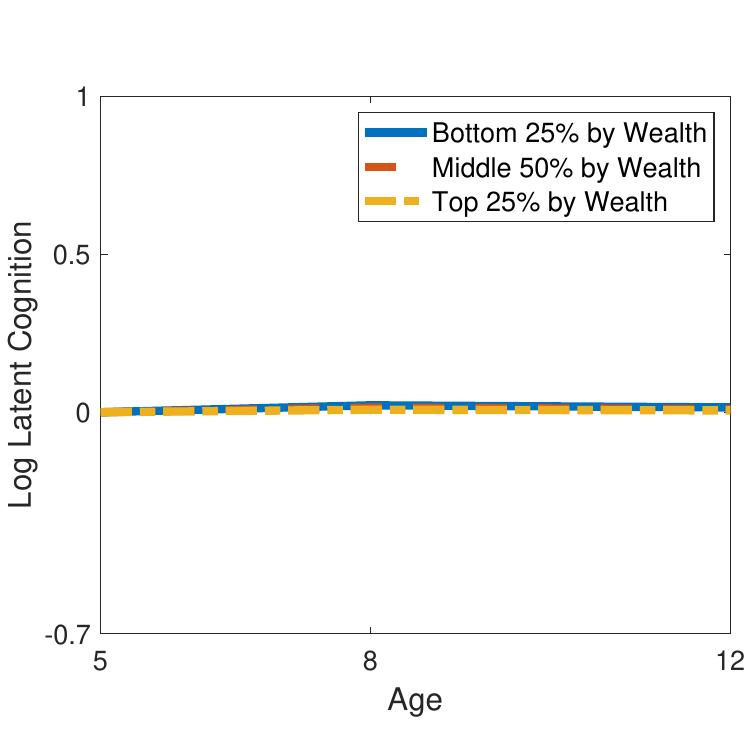}
		\end{subfigure}
		
	\end{center}

	\newgeometry{textwidth=16.6cm}
	{\footnotesize \begin{singlespace} Notes:  The figure shows the impact on cognition of an income transfer equal to 25\% of mean income in the entire sample. The y-axis represents the impact on cognition, measured as the change in the median of the log of cognition in standard deviation units, of increasing investment by one standard deviation. In the top three graphs, the transfer is made before age 5. In the lower three graphs, it is made between ages 5 and 8. Estimates are based on AMN.    \end{singlespace}}
\end{figure}

\begin{figure}[h!]
	\caption{Estimated effects of income transfers - flexible estimator - skill level}
	\label{fig:amn_5_flexible_level}

	\newgeometry{textwidth=17.2cm}
	
	\begin{center}
		
		\vspace{-6mm}
		
		\begin{subfigure}[h]{0.32\textwidth}
			\centering
			\hspace{-6mm} \caption{Age 5 - original scale}
			\vspace{-4mm}
			\hspace{-6mm}\includegraphics[width=0.95\textwidth, trim={4cm, 8cm, 5cm,  7.5cm},clip  ]{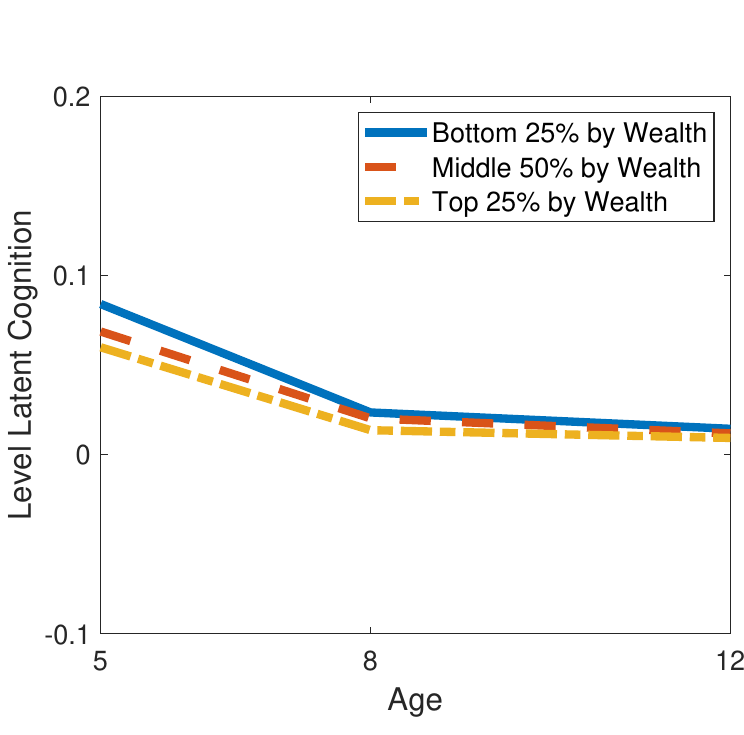}
		\end{subfigure}
		\begin{subfigure}[h]{0.32\textwidth}
			\centering
			\hspace{-6mm} \caption{Age 5 - scaled investment}
			\vspace{-4mm}
			\hspace{-6mm}\includegraphics[width=0.95\textwidth, trim={4cm, 8cm, 5cm,  7.5cm},clip  ]{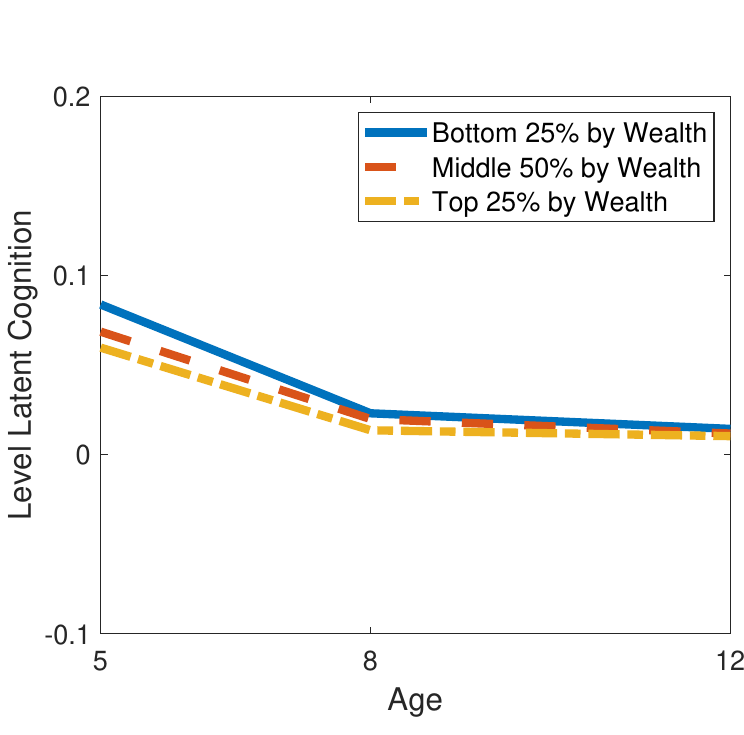}
		\end{subfigure}
		\begin{subfigure}[h]{0.32\textwidth}
			\centering
			\hspace{-6mm}\caption{Age 5 - scaled tests}
			\vspace{-4mm}
			\hspace{-6mm}\includegraphics[width=0.95\textwidth, trim={4cm, 8cm, 5cm,  7.5cm},clip  ]{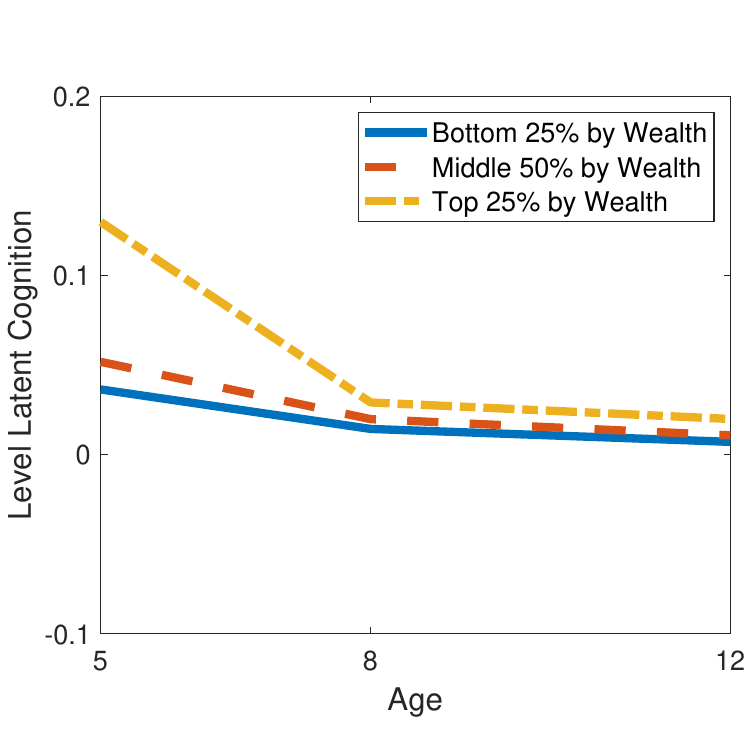}
		\end{subfigure}
		
		\begin{subfigure}[h]{0.32\textwidth}
			\centering
			\hspace{-6mm}\caption{Age 8 - original scale}
			\vspace{-4mm}
			\hspace{-6mm}\includegraphics[width=0.95\textwidth, trim={4cm, 8cm, 5cm,  7.5cm},clip  ]{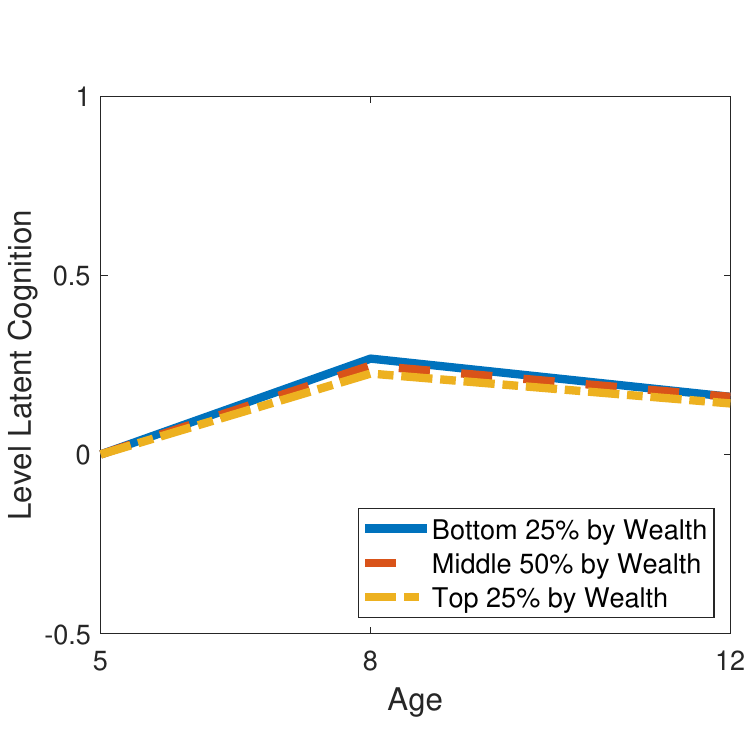}	
		\end{subfigure}
		\begin{subfigure}[h]{0.32\textwidth}
			\centering
			\hspace{-6mm}\caption{Age 8 - scaled investment}
			\vspace{-4mm}
			\hspace{-6mm}\includegraphics[width=0.95\textwidth, trim={4cm, 8cm, 5cm,  7.5cm},clip  ]{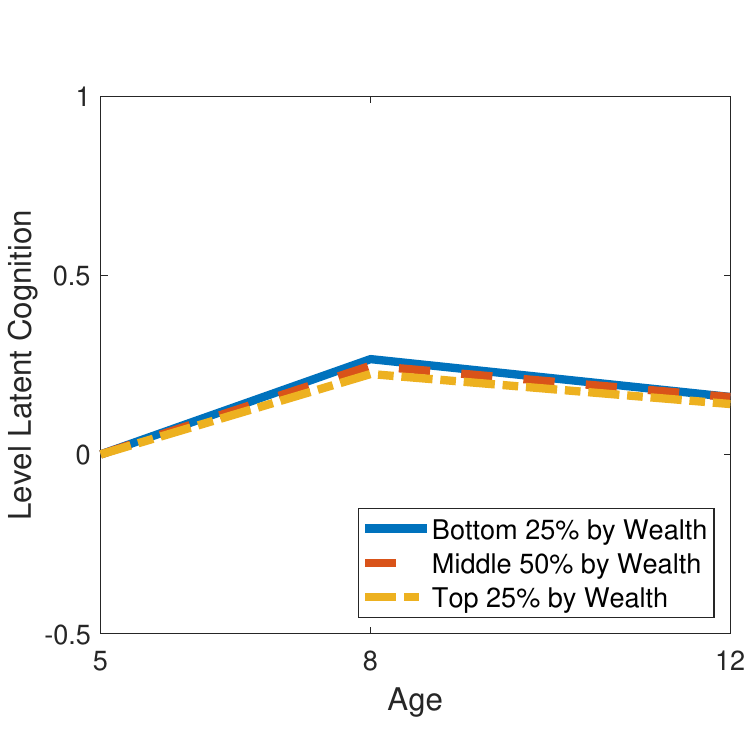}
		\end{subfigure}
		\begin{subfigure}[h]{0.32\textwidth}
			\centering
			\hspace{-6mm}\caption{Age 8 - scaled tests}
			\vspace{-4mm}
			\hspace{-6mm}\includegraphics[width=0.95\textwidth, trim={4cm, 8cm, 5cm,  7.5cm},clip  ]{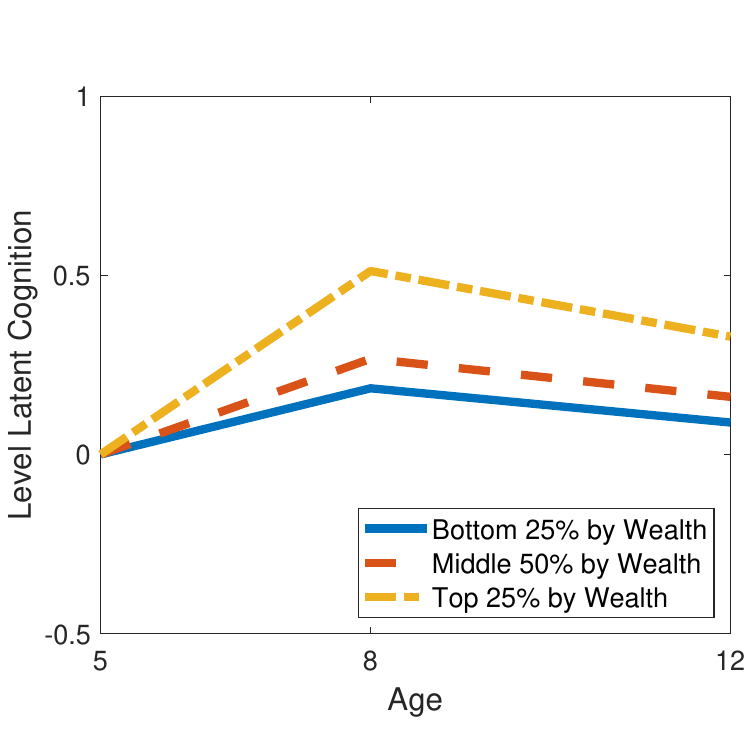}
		\end{subfigure}
		
	\end{center}

	\newgeometry{textwidth=16.6cm}
	{\footnotesize \begin{singlespace} Notes: As Figure \ref{fig:amn_5_restricted_level} but using a flexible estimator. \end{singlespace}}
\end{figure}

\clearpage
\begin{figure}[h!]
	\caption{Distributional effects of income transfers}
	\label{fig:amn_densities}

	\newgeometry{textwidth=17.2cm}
	
	\begin{center}
		
		\vspace{-6mm}
		
		\begin{subfigure}[h]{0.32\textwidth}
			\centering
			\hspace{-6mm} \caption{Age 5 - Bottom 25\%}
			\vspace{-4mm}
			\hspace{-6mm}\includegraphics[width=0.95\textwidth, trim={4cm, 8cm, 5cm,  7.5cm},clip  ]{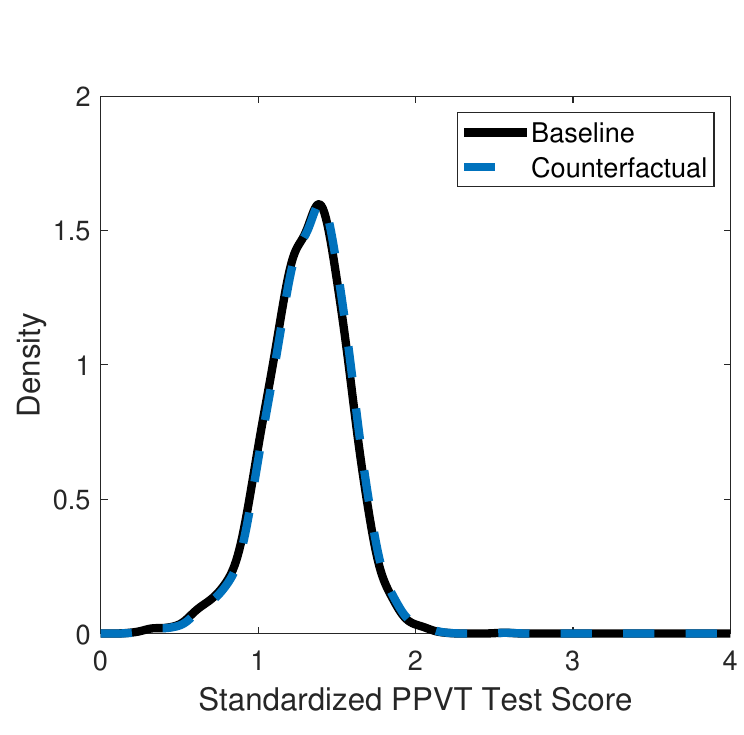}
		\end{subfigure}
		\begin{subfigure}[h]{0.32\textwidth}
			\centering
			\hspace{-6mm} \caption{Age 5 - Middle 50\%}
			\vspace{-4mm}
			\hspace{-6mm}\includegraphics[width=0.95\textwidth, trim={4cm, 8cm, 5cm,  7.5cm},clip  ]{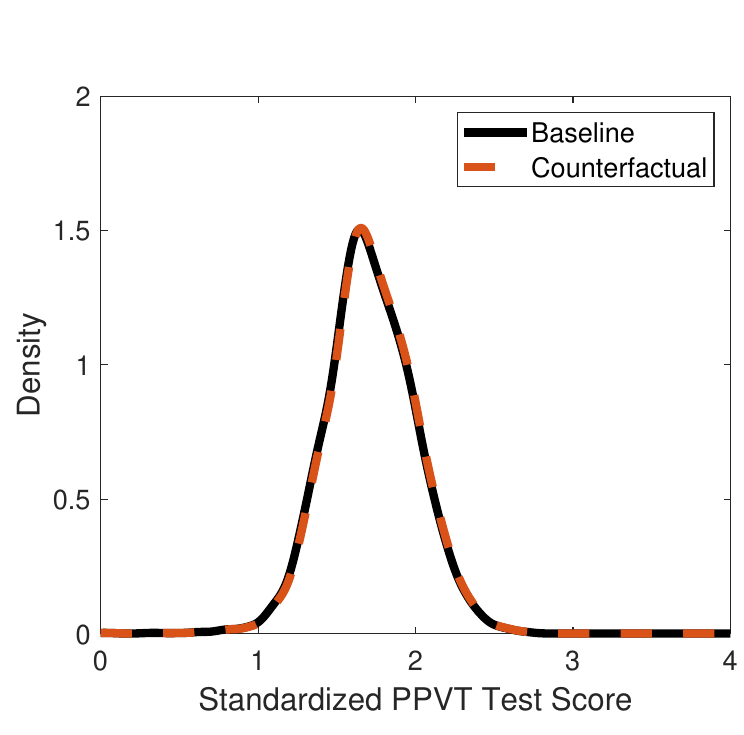}
		\end{subfigure}
		\begin{subfigure}[h]{0.32\textwidth}
			\centering
			\hspace{-6mm}\caption{Age 5 - Top 25\%}
			\vspace{-4mm}
			\hspace{-6mm}\includegraphics[width=0.95\textwidth, trim={4cm, 8cm, 5cm, 7.5cm},clip  ]{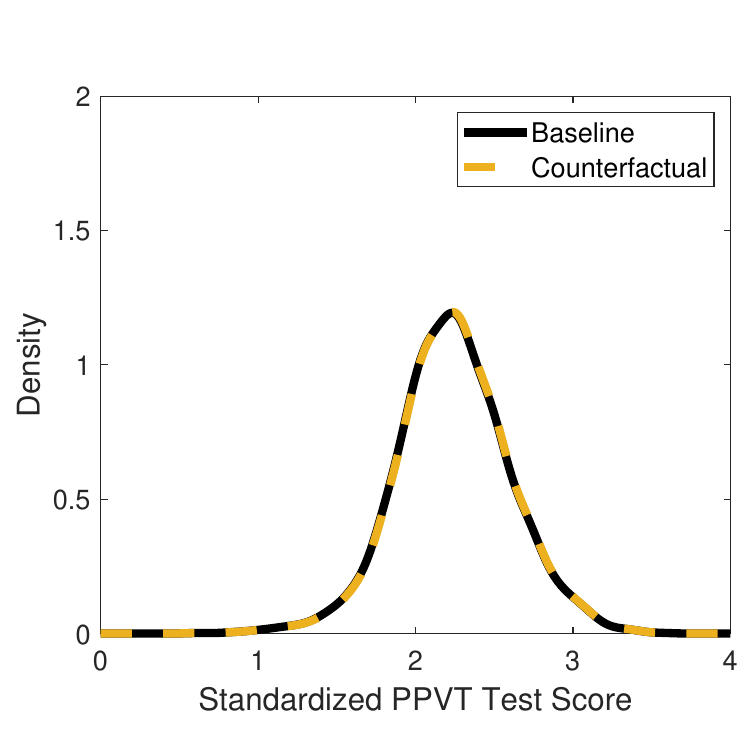}
		\end{subfigure}
		
		\begin{subfigure}[h]{0.32\textwidth}
			\centering
			\hspace{-6mm}\caption{Age 8  - Bottom 25\%}
			\vspace{-4mm}
			\hspace{-6mm}\includegraphics[width=0.95\textwidth, trim={4cm, 8cm, 5cm,  7.5cm},clip  ]{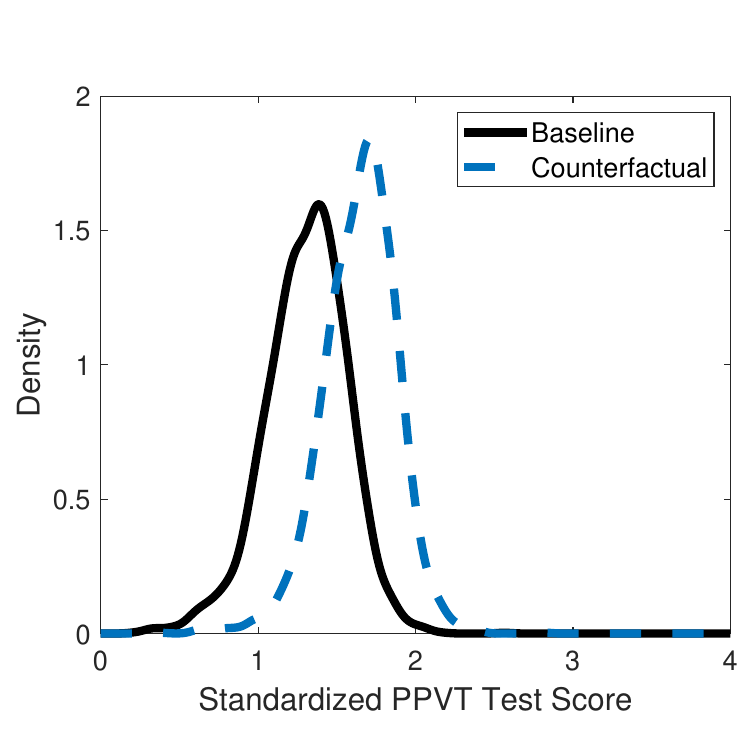}	\end{subfigure}
		\begin{subfigure}[h]{0.32\textwidth}
			\centering
			\hspace{-6mm}\caption{Age 8 - Middle 50\%}
			\vspace{-4mm}
			\hspace{-6mm}\includegraphics[width=0.95\textwidth, trim={4cm, 8cm, 5cm,  7.5cm},clip  ]{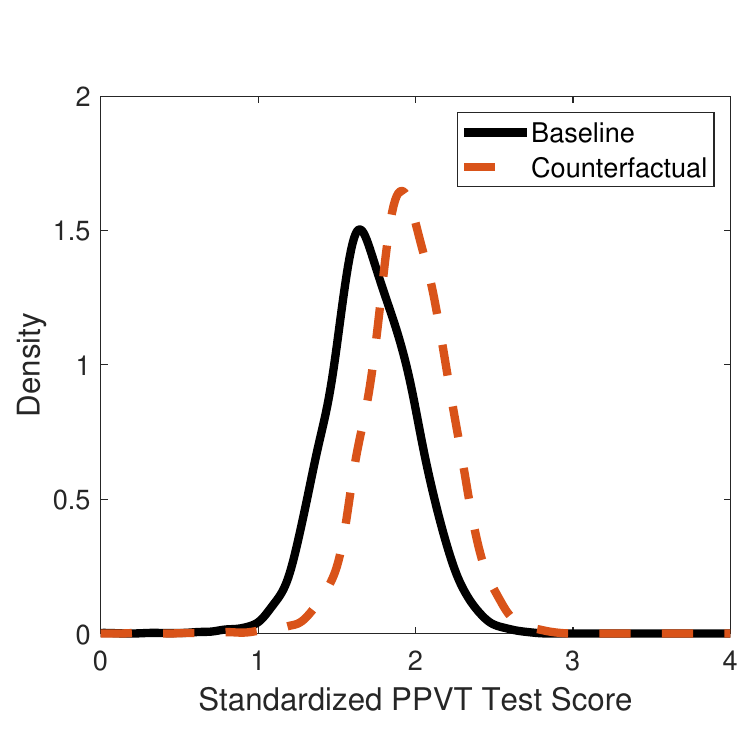}
		\end{subfigure}
		\begin{subfigure}[h]{0.32\textwidth}
			\centering
			\hspace{-6mm}\caption{Age 8  - Top 25\%}
			\vspace{-4mm}
			\hspace{-6mm}\includegraphics[width=0.95\textwidth, trim={4cm, 8cm, 5cm,  7.5cm},clip  ]{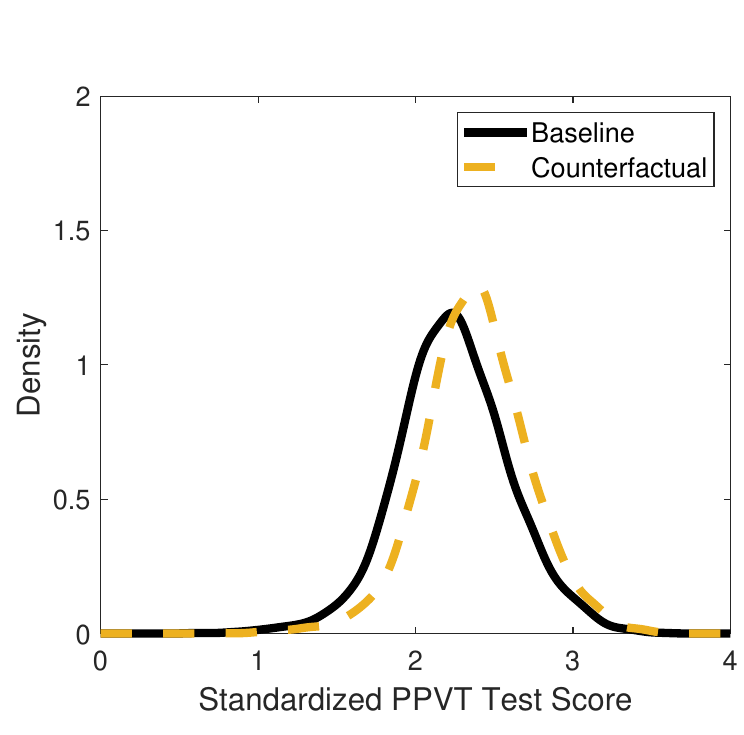}
		\end{subfigure}
		
	\end{center}

	\newgeometry{textwidth=16.6cm}
	{\footnotesize \begin{singlespace}  Notes: This figure shows counterfactual distributions of the standardized PPVT test scores at age 12. The counterfactuals are based on an income transfer equal to 25\% of mean income in the entire sample. In the top three graphs, the transfer is made at age 5. In the lower three graphs, it is made at age 8.  \end{singlespace}}
\end{figure}

\section{Proofs}

\begin{proof}[Proof of Lemma \ref{l:identjoint}]

	We have
	$$ 	\begin{pmatrix}
		cov(Z_{\theta, s,1},Z_{\theta, t,1}) \\ cov(Z_{\theta, s,1},Z_{\theta, t,2}) 	\end{pmatrix} = \begin{pmatrix}
		\lambda_{\theta,s,1} \lambda_{\theta,t,1} cov(\ln \theta_{s}  , \ln \theta_{t} ) \\ \lambda_{\theta,s,1} \lambda_{\theta,t,2} cov(\ln \theta_{s}  , \ln \theta_{t} )
	\end{pmatrix}
	$$
	and
	$$ 	\begin{pmatrix}
		cov(Z_{I, s,1},Z_{\theta, t,1}) \\ cov(Z_{I, s,1},Z_{\theta, t,2}) 	\end{pmatrix} = \begin{pmatrix}
		\lambda_{I,s,1} \lambda_{\theta,t,1} cov(\ln I_{s}  , \ln \theta_{t} ) \\ \lambda_{I,s,1} \lambda_{\theta,t,2} cov(\ln I_{s}  , \ln \theta_{t} )
	\end{pmatrix}
	$$
	Since $\lambda_{\theta,t,m} \neq 0$ and $\lambda_{I,t,m} \neq 0$ for all $m$ and $t$ and $cov(\ln I_{s}  , \ln \theta_{t} ) \neq 0$ or $cov(\ln \theta_{s}  , \ln \theta_{t} ) \neq 0$ for some $s$, it follows that  $ \lambda_{\theta,t,1}/\lambda_{\theta,t,2}$ is identified for all $t$. Analogously,  $ \lambda_{I,t,1}/\lambda_{I,t,2}$ is identified for all $t$. It now follows that for $m \neq m'$
	$$ \mu_{\theta,t,m} - \frac{\lambda_{\theta,t,m}}{\lambda_{\theta,t,m'}}\mu_{\theta,t,m'} = E\left[Z_{\theta,t,m} - \frac{\lambda_{\theta,t,m}}{\lambda_{\theta,t,m'}} Z_{\theta,t,m'} \right]$$
	is identified as well. Now write
	\begin{align*}
		Z_{\theta,t,m} &= \mu_{\theta,t,m} + \lambda_{\theta,t,m} \ln \theta_{t} + \eps_{\theta,t,m} \\
		\frac{\lambda_{\theta,t,m}}{\lambda_{\theta,t,m'}} Z_{\theta,t,m'} + \mu_{\theta,t,m} - \frac{\lambda_{\theta,t,m}}{\lambda_{\theta,t,m'}}\mu_{\theta,t,m'} &= \mu_{\theta,t,m}   + \lambda_{\theta,t,m} \ln \theta_{t} + \frac{\lambda_{\theta,t,m}}{\lambda_{\theta,t,m'}}\eps_{\theta,t,m'} 
	\end{align*}
	Lemma 1 of \citeN{EW:12} implies that the distributions of $\mu_{\theta,t,m} + \lambda_{\theta,t,m} \ln \theta_{t} $ and $\eps_{\theta,t,m}$ are identified conditional on $Y_1, \ldots Y_{T-1}$. Since $Q$ is another measure in period $T$, the distribution of $\eta_Q$ is identified as well. Similarly, the distributions of $\mu_{I,t,m} + \lambda_{I,t,m} \ln I_{t} $ and $\eps_{I,t,m}$ are identified conditional on $Y_1, \ldots Y_{T-1}$.

	Next, condition on $Y_1, \ldots Y_{T-1}$ and for any random variable $X$, let $\varphi_X$ be its conditional characteristic function. Then
	\begin{align*}
		&  \varphi_{\{Z_{\theta,t,m} \}_{t=0,\dots, T,m =1,2},  \{ Z_{I,t,m}   \}_{t=0,\dots, T-1,m =1,2},Q}(s) \\
		&   \qquad =  \varphi_{\{\mu_{\theta,t,m} + \lambda_{\theta,t,m} \ln \theta_{t}\}_{t=0,\dots, T,m =1,2},  \{ \mu_{I,t,m} + \lambda_{I,t,m} \ln I_{t} \}_{t=0,\dots, T-1,m =1,2}, \rho_0 + \rho_1 \ln \theta_{T}}(s) \\
		& \qquad \qquad  \times \varphi_{\{\eps_{\theta,t,m} \}_{t=0,\dots, T,m =1,2},  \{\eps_{I,t,m}   \}_{t=0,\dots, T-1,m =1,2},\eta}(s). 
	\end{align*}
	Since the measurement errors are independent and the marginal distributions are identified,  the joint distribution is identified. That is, 
	$ \varphi_{\{\eps_{\theta,t,m} \}_{t=0,\dots, T,m =1,2},  \{\eps _{I,t,m}   \}_{t=0,\dots, T-1,m =1,2},\eta}(s) $
	is identified. It follows that $\varphi_{\{\mu_{\theta,t,m} + \lambda_{\theta,t,m} \ln \theta_{t}\}_{t=0,\dots, T,m =1,2},  \{ \mu_{I,t,m} + \lambda_{I,t,m} \ln I_{t} \}_{t=0,\dots, T-1,m =1,2}, \rho_0 + \rho_1 \ln \theta_{T}}(s)$ is identified for all nonzeros of  $ \varphi_{\{\eps_{\theta,t,m} \}_{t=0,\dots, T,m =1,2},  \{\eps _{I,t,m}   \}_{t=0,\dots, T-1,m =1,2},\eta}(s) $.
	Since the zeros of this function are isolated and since characteristics functions are continuous, the characteristic function of $\{\mu_{\theta,t,m} + \lambda_{\theta,t,m} \ln \theta_{t}\}_{t=0,\dots, T,m =1,2},  \{ \mu_{I,t,m} + \lambda_{I,t,m} \ln I_{t} \}_{t=0,\dots, T-1,m =1,2}, \rho_0 + \rho_1 \ln \theta_{T}$ and therefore its joint distribution is identified.
\end{proof}	

\begin{proof}[Proof of Theorem \ref{th:identparams_tl}]
	
	I first show that $\{\tilde{\mu}_{\theta,t,m},\tilde{\lambda}_{\theta,t,m}\}_{t=0,\ldots,T,m=1,2}$, $\{\tilde{\mu}_{I,t,m},\tilde{\lambda}_{I,t,m}\}_{t=0,\ldots,T-1,m=1,2}$, $\{\tilde{a}_t,\tilde{\gamma}_{1t},\tilde{\gamma}_{2t}, \tilde{\gamma}_{3t}\}^{T-1}_{t=0}$, $\{ \tilde{\beta}_{0t}, \tilde{\beta}_{1t}, \tilde{\beta}_{2t}\}^{T-1}_{t=0}$, and $(\tilde{\rho}_0,\tilde{\rho}_1)$ are identified. By Lemma  \ref{l:identjoint} the joint distribution of $  ( \{  \ln \tilde{\theta}_{t}   \}_{t=0,\dots, T},\{   \ln \tilde{I}_{t} \}_{t=0,\dots, T-1}, Q)$ is point identified conditional on $(Y_0, \ldots, Y_{T-1})$. The proof of that lemma also implies that we can identify the distributions of ${\eps}_{\theta,t,m}$ and ${\eps}_{I,t,m}$ and thus, also $\tilde{\mu}_{\theta,t,m}$, $\tilde{\lambda}_{\theta,t,m}$, $\tilde{\mu}_{I,t,m}$, and $\tilde{\lambda}_{I,t,m}$ for all $m$ and $t$. From the identified joint distribution, we also know 
	$E[\ln \tilde{I}_t \mid   \tilde{\theta}_{t} , Y_{t}] =  \tilde{\beta}_{0t} + \tilde{\beta}_{1t} \ln \tilde{\theta}_{t}  + \tilde{\beta}_{2t} \ln Y_{t}  $
	and thus, we can identify $\tilde{\beta}_{0t}$, $\tilde{\beta}_{1t}$, and $\tilde{\beta}_{2t}$. Similarly, $E[Q \mid \ln \tilde{\theta}_{T} ]$ and therefore $\tilde{\rho}_0$ and $\tilde{\rho}_1$ are identified. 
	
	We can also identify $E[\ln \tilde{\theta}_{t+1} \mid \tilde{\theta}_{t},\tilde{I}_{t},Y_t]$ since
	\begin{align*}
		E[\ln \tilde{\theta}_{t+1} \mid \tilde{\theta}_{t},\tilde{I}_{t},Y_t] &= \tilde{a}_t + \tilde{\gamma}_{1t}\ln \tilde{\theta}_{t} + \tilde{\gamma}_{2t} \ln \tilde{I}_{t} + \tilde{\gamma}_{3t}\ln \tilde{\theta}_{t} \ln \tilde{I}_{t}  + E[\tilde{\eta}_{\theta,t} \mid \tilde{\theta}_{t},\tilde{I}_{t},Y_t] \\
		&= \tilde{a}_t + \tilde{\gamma}_{1t}\ln \tilde{\theta}_{t} + \tilde{\gamma}_{2t} \ln \tilde{I}_{t} + \tilde{\gamma}_{3t}\ln \tilde{\theta}_{t} \ln \tilde{I}_{t} + \lambda_{\theta,t+1,1}  E[\eta_{\theta,t} \mid \tilde{\theta}_{t},\tilde{ \eta}_{I,t},Y_t] \\
		&= \tilde{a}_t + \tilde{\gamma}_{1t}\ln \tilde{\theta}_{t} + \tilde{\gamma}_{2t} \ln \tilde{I}_{t} + \tilde{\gamma}_{3t}\ln \tilde{\theta}_{t} \ln \tilde{I}_{t}  + \lambda_{\theta,t+1,1}  E[\eta_{\theta,t} \mid  \tilde{ \eta}_{I,t} ]\\
		&= \tilde{a}_t + \tilde{\gamma}_{1t}\ln \tilde{\theta}_{t} + \tilde{\gamma}_{2t} \ln \tilde{I}_{t} + \tilde{\gamma}_{3t}\ln \tilde{\theta}_{t} \ln \tilde{I}_{t}   + (\lambda_{\theta,t+1,1}/\lambda_{I,t,1} ) \kappa_t  \tilde{ \eta}_{I,t}
	\end{align*} 
	The second line follows because conditioning on $\tilde{\theta}_{t},\tilde{I}_{t},Y_t$ or $\tilde{\theta}_{t},\tilde{\eta}_{I,t},Y_t$ is identical due to the linear relation in equation (\ref{eq:investment_rw}). The third and fourth line follow  from Assumption \ref{a:baseline}(h) and the relationship  $ \tilde{\eta}_{I,t} =  \lambda_{I,t,1} {\eta}_{I,t}$. Hence, we can identify $ \tilde{a}_t $, $\tilde{\gamma}_{1t}$, $\tilde{\gamma}_{2t}$, $\tilde{\gamma}_{3t}$, and $(\lambda_{\theta,t+1,1}/\lambda_{I,t,1} ) \kappa_t $ by regressing  $\ln \tilde{\theta}_{t+1}$ on $\ln \tilde{\theta}_{t}$, $ \ln \tilde{I}_{t}$, $\ln \tilde{I}_{t}\ln \tilde{\theta}_{t}$, and $\tilde{ \eta}_{I,t} = \ln \tilde{I}_t -   \tilde{\beta}_{0t} -  \tilde{\beta}_{1t} \ln \tilde{\theta}_{t}  -  \tilde{\beta}_{2t} \ln Y_{t} $. 
	
	Now suppose we have an alternative set of parameters, denoted by $\{\bar{\mu}_{\theta,t,m},\bar{\lambda}_{\theta,t,m}\}_{t=0,\ldots,T,m=1,2}$, $\{\bar{\mu}_{I,t,m},\bar{\lambda}_{I,t,m}\}_{t=0,\ldots,T-1,m=1,2}$, $\{\bar{a}_t,\bar{\gamma}_{1t},\bar{\gamma}_{2t}, \bar{\gamma}_{3t}\}^{T-1}_{t=0}$, $\{ \bar{\beta}_{0t}, \bar{\beta}_{1t}, \bar{\beta}_{2t}\}^{T-1}_{t=0}$, and $(\bar{\rho}_0,\bar{\rho}_1)$  that yields the same values of $\{\tilde{\mu}_{\theta,t,m},\tilde{\lambda}_{\theta,t,m}\}_{t=0,\ldots,T,m=1,2}$, $\{\tilde{\mu}_{I,t,m},\tilde{\lambda}_{I,t,m}\}_{t=0,\ldots,T-1,m=1,2}$, $\{ \tilde{\beta}_{0t}, \tilde{\beta}_{1t}, \tilde{\beta}_{2t}\}^{T-1}_{t=0}$,  $(\tilde{\rho}_0,\tilde{\rho}_1)$, and  $\{\tilde{a}_t,\tilde{\gamma}_{1t},\tilde{\gamma}_{2t}, \tilde{\gamma}_{3t}\}^{T-1}_{t=0}$. Define $\bar{\theta}_t$ and $\bar{I}_t$ such that 
	$$\bar{\mu}_{\theta,t,1} + \bar{\lambda}_{\theta,t,1} \ln \bar{\theta}_t = {\mu}_{\theta,t,1} + {\lambda}_{\theta,t,1} \ln {\theta}_t = \ln \tilde{\theta}_t$$
	$$\bar{\mu}_{I,t,1} + \bar{\lambda}_{I,t,1} \ln \bar{I}_t = {\mu}_{I,t,1} + {\lambda}_{I,t,1} \ln {I}_t = \ln \tilde{I}_t.$$
	For $m \neq 1$, we then have
	\begin{align*}
		\bar{\mu}_{\theta,t,m} + \bar{\lambda}_{\theta,t,m} \ln \bar{\theta}_t &= \bar{\mu}_{\theta,t,m} + \bar{\lambda}_{\theta,t,m}\left( \ln \tilde{\theta}_t - \bar{\mu}_{\theta,t,1}\right)/\bar{\lambda}_{\theta,t,1} \\
		&=\bar{\mu}_{\theta,t,m} - \frac{\bar{\lambda}_{\theta,t,m}}{\bar{\lambda}_{\theta,t,1}} + \frac{\bar{\lambda}_{\theta,t,m}}{\bar{\lambda}_{\theta,t,1}}\ln \tilde{\theta}_t \\
		&=\tilde{\mu}_{\theta,t,m}   + \tilde{\lambda}_{\theta,t,m} \ln \tilde{\theta}_t  
	\end{align*}
	and similarly
	$
	\bar{\mu}_{I,t,m} + \bar{\lambda}_{I,t,m} \ln \bar{I}_t  =\tilde{\mu}_{I,t,m}   + \tilde{\lambda}_{I,t,m} \ln \tilde{I}_t. 
	$
	Then the two models generate the same distribution of the measures. In addition, the parameters and $\bar{\theta}_t$ are consistent with the production technology
	because
	\begin{align*}
		\bar{\mu}_{\theta,t+1,1}  + \bar{\lambda}_{\theta,t+1,1} \ln \bar{\theta}_{t+1} &= \ln \tilde{\theta}_{t+1} \\
		&=  \tilde{a}_t + \tilde{\gamma}_{1t} \ln \tilde{\theta}_{t} +  \tilde{\gamma}_{2t} \ln \tilde{I}_t  +  \tilde{\gamma}_{3t} \ln \tilde{\theta}_{t} \ln \tilde{I}_t + \tilde{\eta}_{\theta,t}  \\
		&=  \tilde{a}_t + \tilde{\gamma}_{1t} (\bar{\mu}_{\theta,t,1} + \bar{\lambda}_{\theta,t,1} \ln \bar{\theta}_t) +  \tilde{\gamma}_{2t} \ln (\bar{\mu}_{I,t,1} + \bar{\lambda}_{I,t,1} \ln \bar{I}_t) \\
		& \quad +  \tilde{\gamma}_{3t} \ln (\bar{\mu}_{\theta,t,1} + \bar{\lambda}_{\theta,t,1} \ln \bar{\theta}_t) (\bar{\mu}_{I,t,1} + \bar{\lambda}_{I,t,1} \ln \bar{I}_t) + \tilde{\eta}_{\theta,t}. 
	\end{align*}
	Since
	$$\tilde{a}_t =  \bar{\lambda}_{\theta,t+1,1} \bar{a}_t + \bar{\mu}_{\theta,t+1,1} - \frac{\bar{\lambda}_{\theta,t+1,1}}{\bar{\lambda}_{\theta,t,1}} \bar{\mu}_{\theta,t,1} \bar{\gamma}_{1t} - \frac{\bar{\lambda}_{\theta,t+1,1}}{\bar{\lambda}_{I,t,1}} \mu_{I,t,1} \bar{\gamma}_{2t} +  \frac{\bar{\lambda}_{\theta,t+1,1}}{\bar{\lambda}_{\theta,t,1} \bar{\lambda}_{I,t,1}} \bar{\mu}_{I,t,1}  \bar{\mu}_{\theta,t,1}\bar{\gamma}_{3t} $$
	$$ \tilde{\gamma}_{1t}  = \frac{\bar{\lambda}_{\theta,t+1,1}}{\bar{\lambda}_{\theta,t,1}} \left( \bar{\gamma}_{1t} - \frac{\bar{\mu}_{I,t,1}}{\bar{\lambda}_{I,t,1}}\bar{\gamma}_{3t}  \right), \;\;   \tilde{\gamma}_{2t}  =  \frac{ \bar{\lambda}_{\theta,t+1,1}}{\bar{\lambda}_{I,t,1}} \left( \bar{\gamma}_{2t} - \frac{\bar{\mu}_{\theta,t,1}}{\bar{\lambda}_{\theta,t,1}}\bar{\gamma}_{3t} \right), \;\;  \tilde{\gamma}_{3t}  = \frac{\bar{\lambda}_{\theta,t+1,1}}{\bar{\lambda}_{\theta,t,1} \bar{\lambda}_{I,t,1}}\bar{\gamma}_{3t}$$
	and $ \tilde{\eta}_{\theta,t} =   \lambda_{\theta,t+1,1}{\eta}_{\theta,t}$, it is easy to show that with $\bar{\eta}_{\theta,t} = \frac{\lambda_{\theta,t+1,1}}{\bar{\lambda}_{\theta,t+1,1}} {\eta}_{\theta,t}$,
	\begin{align*}
		\ln \bar{\theta}_{t+1}  
		&=  \bar{a}_t + \bar{\gamma}_{1t}  \ln \bar{\theta}_t  +  \bar{\gamma}_{2t} \ln \bar{I}_t  +  \bar{\gamma}_{3t} \ln  \bar{\theta}_t \ln \bar{I}_t + \bar{\eta}_{\theta,t}.
	\end{align*}

	Analogously, one can show that 
	\begin{align*}
		\ln \bar{I}_t &=  \bar{\beta}_{0t} + \bar{\beta}_{1t} \ln \bar{\theta}_{t}  + \bar{\beta}_{2t} \ln Y_{t}   + \bar{\eta}_{I,t}  \\
		Q &=  \bar{\rho}_{0} + \bar{\rho}_{1} \ln \bar{\theta}_{T} + \bar{\eta}_Q  
	\end{align*}

	For the last part first notice that once  $\{\lambda_{\theta,t,1},\lambda_{I,t,1},\mu_{\theta,t,1},\mu_{I,t,1}\}_{t=1,\ldots, T}$ are fixed, the vector $\{\lambda_{\theta,t,m},\lambda_{I,t,m},\mu_{\theta,t,m},\mu_{I,t,m}\}_{t=1,\ldots, T}$ is uniquely determined for all $m \neq 1$ due to the restrictions the parameters in the identified set have to satisfy. From these restrictions, we can then uniquely determine $\gamma_{3t}$ and then also $\gamma_{1t}$ and $\gamma_{2t}$. Similarly, all the other parameters are uniquely determined. We then found the unique parameter vector for which $\{\lambda_{\theta,t,1},\lambda_{I,t,1},\mu_{\theta,t,1},\mu_{I,t,1}\}_{t=1,\ldots, T}$ are fixed and that yields the same values of  $\{\tilde{a}_t,\tilde{\gamma}_{1t},\tilde{\gamma}_{2t}, \tilde{\gamma}_{3t}\}^{T-1}_{t=0}$, $(\tilde{\rho}_0,\tilde{\rho}_1)$, $\{\tilde{\mu}_{\theta,t,m},\tilde{\lambda}_{\theta,t,m}\}_{t=0,\ldots,T,m=1,2}$, $\{\tilde{\mu}_{I,t,m},\tilde{\lambda}_{I,t,m}\}_{t=0,\ldots,T-1,m=1,2}$, and $\{ \tilde{\beta}_{0t}, \tilde{\beta}_{1t}, \tilde{\beta}_{2t}\}^{T-1}_{t=0}$.
\end{proof}

\begin{proof}[Proof of Corollary \ref{c:pointident}]
	The last part of Theorem \ref{th:identparams_tl} immediately implies that all parameters are point identified under Assumptions \ref{a:baseline}, \ref{a:normalization}, \ref{a:ageinvariant_technology_skills}(a), and \ref{a:ageinvariant_technology_investment}(a). Now suppose Assumptions \ref{a:baseline}, \ref{a:normalization}, \ref{a:ageinvariant_technology_skills}(b), and \ref{a:ageinvariant_technology_investment}(a) hold. Then, since $a_t = \mu_{I,t,1}= 0$, $\mu_{\theta,t+1,1} -  \mu_{\theta,t,1} \frac{\lambda_{\theta,t+1,1}}{\lambda_{\theta,t,1}}  \gamma_{1t}$ is identified. Moreover, $\frac{\lambda_{\theta,t+1,1}}{\lambda_{\theta,t,1}}  \gamma_{1t}$ is identified. We also know that $\mu_{\theta,0,1}  = 0$, which identifies $\mu_{\theta,1,1} $ and then we can recursively identify $\mu_{\theta,t,1} $ for all $t$. In addition, we can identify $\lambda_{\theta,t+1,1} \gamma_{2t} $ because $\lambda_{\theta,t+1,1} \left( \gamma_{2t} - \frac{\mu_{\theta,t,1}}{\lambda_{\theta,t,1}}\gamma_{3t} \right)$, $ \frac{\lambda_{\theta,t+1,1} }{\lambda_{\theta,t,1}}\gamma_{3t}$, and $\mu_{\theta,t,1}$ are identified. We can also identify 
	$$\frac{\lambda_{\theta,t+1,1}}{\lambda_{\theta,t,1}}\gamma_{1t} = \frac{\lambda_{\theta,t+1,1}}{\lambda_{\theta,t,1}}(1 -\gamma_{2t}-\gamma_{3t} ) =    \frac{\lambda_{\theta,t+1,1}}{\lambda_{\theta,t,1}} - \frac{\lambda_{\theta,t+1,1}}{\lambda_{\theta,t,1}}\gamma_{2t}- \frac{\lambda_{\theta,t+1,1}}{\lambda_{\theta,t,1}}\gamma_{3t}.   $$  
	Since $\lambda_{\theta,0,1}=1$, we can identify $\lambda_{\theta,1,1}$ and then, using the previous equation, $\lambda_{\theta,t,1}$ recursively for all $t$. Once we know $\mu_{\theta,t,1}$ and $\lambda_{\theta,t,1}$ for all $t$, identification of the production function parameters follows immediately. In addition, since $\frac{\lambda_{\theta,t,m}}{\lambda_{\theta,t,1}}$ and $\mu_{\theta,t,m} + \left(\lambda_{\theta,t,m}/\lambda_{\theta,t,1}\right) \mu_{\theta,t,1}$ are identified, we can identify $\lambda_{\theta,t,m}$ and $\mu_{\theta,t,m}$ for all $m$. It then also follows that $\beta_{0t}$, $\beta_{1t}$, $\beta_{2t}$, $\rho_{0}$, and $\rho_{1}$ are identified.   
	
	Now suppose Assumptions \ref{a:baseline}, \ref{a:normalization}, \ref{a:ageinvariant_technology_skills}(a), and \ref{a:ageinvariant_technology_investment}(b) hold. Then, using the argument as in the proof of Theorem \ref{th:identparams_tl}, $\{\lambda_{\theta,t,m}, \mu_{\theta,t,m},\}_{t=1,\ldots, T}$ is uniquely determined. Next, using the expressions for $\tilde{\beta}_{1t}$ and $\tilde{\beta}_{2t}$ and $\beta_{1t} + \beta_{2t} = 1$, we can identify $\beta_{1t}$, $\beta_{2t}$, and $\lambda_{I,t,1}$. With $\tilde{\beta}_{0t}$ and $\beta_{0t} = 0$, we can uniquely determine $\mu_{I,t,1}$. Once the coefficients in equation (\ref{eq:measurement_eq_ces_invest}) are identified for $m=1$, the second part of Theorem \ref{th:identparams_tl} implies identification of all other parameters.
	
	Finally suppose Assumptions \ref{a:baseline}, \ref{a:normalization}, \ref{a:ageinvariant_technology_skills}(b), and \ref{a:ageinvariant_technology_investment}(b) hold. Then, using $\beta_{0t} = 0$ and $\beta_{1t} + \beta_{2t}= 1$ we can write $\tilde{\beta}_{0t} = \mu_{I,t,1} - \tilde{\beta}_{1t} \mu_{\theta,t,1}$ and $\lambda_{\theta,t,1} \tilde{\beta}_{1t} + \tilde{\beta}_{2t} = \lambda_{I,t,1}  $. Now assume that $\mu_{\theta,t,1}$ and $\lambda_{\theta,t,1}$, are identified. Then $\mu_{I,t,1}$ and $\lambda_{I,t,1}$ are also identified. Notice that since  $\gamma_{1t} + \gamma_{2t} + \gamma_{2t} = 1$, we have 
	$$\frac{\tilde{\gamma}_{1t}}{\lambda_{I,t,1}}  + \frac{\tilde{\gamma}_{2t}}{ \lambda_{\theta,t,1}}   +\tilde{\gamma}_{3t}  = \frac{ \lambda_{\theta,t+1,1}}{{\lambda_{\theta,t,1}\lambda_{I,t,1}}} - \mu_{I,t,1}\frac{\lambda_{\theta,t,1}}{\lambda_{I,t,1}}\tilde{\gamma}_{3t} - \mu_{\theta,t,1}\frac{\lambda_{I,t,1}}{\lambda_{\theta,t,1}}\tilde{\gamma}_{3t}   $$
	and thus, $\lambda_{\theta,t+1,1}$ and $\lambda_{I,t+1,1}  = \lambda_{\theta,t+1,1} \tilde{\beta}_{1,t+1} + \tilde{\beta}_{2,t+1}$  are identified. Then $\gamma_{1t}$, $\gamma_{2t}$, and $\gamma_{3t}$ are also identified. Using the expression for $\tilde{a}_t$ together with $a_t = 0$ then identifies $\mu_{\theta,t+1,1}$ and $  \mu_{I,t+1,1}  = \tilde{\beta}_{0,t+1} + \tilde{\beta}_{1,t+1} \mu_{\theta,t+1,1}  $. Since $\mu_{\theta,0,1} = 0$ and $\lambda_{\theta,0,1} =1$, these arguments imply that $\mu_{\theta,t,1}$, $\mu_{I,t,1}$, $\lambda_{\theta,t,1}$, and  $\lambda_{I,t,1}$, are identified for all $t$.  Once the coefficients in the measurement error equation are identified for one of the measures, the second part of Theorem \ref{th:identparams_tl} implies that all other parameters are identified as well.
\end{proof}

\begin{proof}[Proof of Theorem \ref{th:identfunctions}]
	For the first part, notice that 
	\begin{eqnarray*}
		&& \hspace{-6mm} \tilde{s}_{1t}(\alpha_1,\alpha_2,\alpha_3) \equiv    \tilde{a}_t + \tilde{\gamma}_{1t} \ln Q_{\alpha_1}(\tilde{\theta}_{t}) +  \tilde{\gamma}_{2t} \ln Q_{\alpha_2}(\tilde{I}_{t})  +  \tilde{\gamma}_{3t} \ln Q_{\alpha_1}(\tilde{\theta}_{t})\ln Q_{\alpha_2}(\tilde{I}_{t})  + Q_{\alpha_3}(\tilde{\eta}_{\theta,t})     \\   
		&& =  \tilde{a}_t + \tilde{\gamma}_{1t} Q_{\alpha_1}(  \mu_{\theta,t,1} + \lambda_{\theta,t,1} \ln \theta_{t})  + \tilde{\gamma}_{2t}Q_{\alpha_2}(  \mu_{I,t,1} + \lambda_{I,t,1} \ln I_{t})  \\
		& & \quad \; +  \, \tilde{\gamma}_{3t} Q_{\alpha_1}(  \mu_{\theta,t,1} + \lambda_{\theta,t,1} \ln \theta_{t}) Q_{\alpha_2}(  \mu_{I,t,1} + \lambda_{I,t,1} \ln I_{t}) + Q_{\alpha_3}( \lambda_{\theta,t+1,1}  \eta_{\theta,t})  \\
		&&= \mu_{\theta,t+1,1} + \lambda_{\theta,t+1,1} \left(   {a}_t +  {\gamma}_{1t}  Q_{\alpha_1} \ln( \theta_{t})   +  {\gamma}_{2t}  Q_{\alpha_2} (\ln I_{t})   +  {\gamma}_{3t}  Q_{\alpha_1} \ln( \theta_{t})  Q_{\alpha_2} (\ln I_{t}) +   Q_{\alpha_3}(   \eta_{\theta,t})  \right)    \\
		&&= \mu_{\theta,t+1,1}+ \lambda_{\theta,t+1,1}	s_{1t}(\alpha_1,\alpha_2,\alpha_3).
	\end{eqnarray*}
	It follows that $ F_{\ln \tilde{\theta}_{t+1}}(\tilde{s}_{1t}(\alpha_1,\alpha_2,\alpha_3) ) = F_{\ln {\theta}_{t+1}}(s_{1t}(\alpha_1,\alpha_2,\alpha_3))$.
	Since the joint distribution of $ \left( \{  \ln \tilde{\theta}_{t}\}_{t=0,\dots, T},  \{ \ln \tilde{I}_{t} \}_{t=0,\dots, T-1} \right)$   and $\tilde{a}_t$, $\tilde{\gamma}_{1t}$, $\tilde{\gamma}_{2t}$, and $\tilde{\gamma}_{3t}$ are identified and since $\tilde{\eta}_{\theta,t}  = \ln \tilde{\theta}_{t+1} -\left( \tilde{a}_t + \tilde{\gamma}_{1t} \ln \tilde{\theta}_{t} +  \tilde{\gamma}_{2t} \ln \tilde{I}_t  +  \tilde{\gamma}_{3t} \ln \tilde{\theta}_{t} \ln \tilde{I}_t \right) $, it follows that 	 the left hand side is identified. Thus, we get point identification of $F_{\ln {\theta}_{t+1}}(s_{1t}(\alpha_1,\alpha_2,\alpha_3) )$.
	Moreover, it is easy to see that $ \tilde{\mu}_{\theta,t+1,m}+  \tilde{\lambda}_{\theta,t+1,m} \tilde{s}_{1t}(\alpha_1,\alpha_2,\alpha_3) +    Q_{\alpha_4}(\tilde{\eps}_{\theta,t+1,m}) = \mu_{\theta,t+1,1}+ \lambda_{\theta,t+1,1}	s_{1t}(\alpha_1,\alpha_2,\alpha_3)  +  Q_{\alpha_4}(\eps_{\theta,t+1,m}) $, $\frac{\partial \tilde{\lambda}_{\theta,t+1 ,m} \ln \tilde{\theta}_{t+1}}{\partial \tilde{\lambda}_{\theta,t ,m'} \ln \tilde{\theta}_t } \mid_{ \tilde{I}_t = Q_{\alpha_2}(\tilde{I}_t)  } = \frac{\partial \lambda_{\theta,t+1 ,m} \ln \theta_{t+1}}{\partial \lambda_{\theta,t ,m'} \ln \theta_t } \mid_{ I_t = Q_{\alpha_2}(I_t)  }$ and $\frac{\partial \tilde{\lambda}_{\theta,t+1 ,m} \ln \tilde{\theta}_{t+1}}{\partial \tilde{\lambda}_{I,t ,m'} \ln \tilde{I}_t } \mid_{ \tilde{\theta}_t = Q_{\alpha_1}(\tilde{\theta}_t)  } = \frac{\partial {\lambda}_{\theta,t+1 ,m} \ln {\theta}_{t+1}}{\partial {\lambda}_{I,t ,m'} \ln {I}_t } \mid_{ {\theta}_t = Q_{\alpha_1}({\theta}_t)  }$, which are also identified.

	For the second part notice that
	\begin{align*}
		\ln  \tilde{I}_t(y) & = \tilde{\beta}_{0t} + \tilde{\beta}_{1t} \ln Q_{\alpha_1}(\tilde{\theta}_t)  + \tilde{\beta}_{2t} \ln y    +  Q_{\alpha_3}(\tilde{\eta}_{I,t}) \\
		& = \tilde{\beta}_{0t} + \tilde{\beta}_{1t}  (  \mu_{\theta,t,1} + \lambda_{\theta,t,1} Q_{\alpha_1} \ln( \theta_{t}) )  + \tilde{\beta}_{2t} \ln y    +  Q_{\alpha_3}( \lambda_{I,t,1}{\eta}_{I,t}) \\
		& = \mu_{I,t,1} + \lambda_{I,t,1} \left( \beta_{0t} +  \beta_{1t}   Q_{\alpha_1} (\ln( \theta_{t}) )  +  {\beta}_{2t} \ln y    +  Q_{\alpha_3}({\eta}_{I,t}) \right) \\
		& = \mu_{I,t,1} + \lambda_{I,t,1} 	\ln  I_t(y) 
	\end{align*}
	is identified. Then using the same arguments as above, it follows that $$F_{\ln \tilde{\theta}_{t+1}}(\tilde{s}_{2t}(\alpha_1,\alpha_2,\alpha_3,y) ) =F_{\ln {\theta}_{t+1}}({s}_{2t}(\alpha_1,\alpha_2,\alpha_3,y) ) $$
	with $\tilde{s}_{2t}(\alpha_1,\alpha_2,\alpha_3,y) = \tilde{a}_t + \tilde{\gamma}_{1t} \ln Q_{\alpha_1}(\tilde{\theta}_{t}) +  \tilde{\gamma}_{2t} \ln    \tilde{I}_t(y)   +  \tilde{\gamma}_{3t} \ln Q_{\alpha_1}(\tilde{\theta}_{t})\ln  \tilde{I}_t(y)  + Q_{\alpha_2}(\tilde{\eta}_{\theta,t})  $ and $\tilde{\mu}_{\theta,t+1,m} +  \tilde{\lambda}_{\theta,t+1,m}\tilde{s}_{2t}(\alpha_1,\alpha_2,\alpha_3,y) +  Q_{\alpha_4}(\tilde{\eps}_{\theta,t+1,m}) = \mu_{\theta,t+1,m} +  \lambda_{\theta,t+1,m}s_{2t}(\alpha_1,\alpha_2,\alpha_3,y) +  Q_{\alpha_4}(\eps_{\theta,t+1,m})$ 
	are point identified for all $\alpha_1,\alpha_2,\alpha_3,\alpha_4 \in (0,1)$.

	For the third part we have
	\begin{align*}
		&  P\left(Q \leq q   \mid \theta_t= Q_{\alpha_1}(\theta_t), \{I_s = Q_{\alpha_{2s}}(I_s) \}^{T-1}_{s=t}, \{\eta_{\theta,s} = Q_{\alpha_{3s}}(\eta_{\theta,s})   \}^{T-1}_{s=t}  \right) \\
		& =  P\left(Q \leq q   \mid \tilde{\theta}_t= Q_{\alpha_1}(\tilde{\theta}_t), \{\tilde{I}_s = Q_{\alpha_{2s}}(\tilde{I}_s) \}^{T-1}_{s=t}, \{\tilde{\eta}_{\theta,s} = Q_{\alpha_{3s}}(\tilde{\eta}_{\theta,s})   \}^{T-1}_{s=t}  \right)
	\end{align*}
	and hence, the left hand side is identified.  
	
	The fourth part follows from identification of the joint distribution of $(Q,\tilde{\theta}_t, \{Y_s\}^{T-1}_{s=t})$.
	
	Finally, notice that 
	$\tilde{\gamma}_{1t} + \tilde{\gamma}_{3t} \ln \tilde{I}_t = \frac{ \lambda_{\theta,t+1,1}}{ \lambda_{\theta,t,1}}\left(  \gamma_{1t}  + \gamma_{3t} \ln I_t \right) $
	whose distribution is identified under Assumption \ref{a:ageinvariant_technology_skills}(a). 
	
\end{proof}

\begin{proof}[Proof of Theorem \ref{th:identparams_ces} ]
	
	The same arguments as in the proof of Theorem \ref{th:identparams_tl} implies that  $\tilde{\beta}_{0t}$, $\tilde{\beta}_{1t}$, $\tilde{\beta}_{2t}$, $\tilde{\rho}_0$ and $\tilde{\rho}_1$ are identified. Moreover, we can also identify $E[\ln \tilde{\theta}_{t+1} \mid \tilde{\theta}_{t},\tilde{I}_{t},Y_t]$ which we can write as 
	\begin{align*}
		E[\ln \tilde{\theta}_{t+1} \mid \tilde{\theta}_{t},\tilde{I}_{t},Y_t] &=    \frac{\lambda_{\theta,t+1,1} \psi_t}{ {\sigma}_{t}}   \ln \left (\tilde{\gamma}_{1t} \tilde{\theta}_{t}^{\frac{ {\sigma}_t}{\lambda_{\theta,t,1}}} +  \tilde{\gamma}_{2t} \tilde{I}_t^{\frac{ {\sigma}_t}{\lambda_{I,t,1}}}\right)  + E[\tilde{\eta}_{\theta,t} \mid \tilde{\theta}_{t},\tilde{I}_{t},Y_t] \\
		&=  \frac{\lambda_{\theta,t+1,1} \psi_t}{ {\sigma}_{t}}   \ln \left (\tilde{\gamma}_{1t} \tilde{\theta}_{t}^{\frac{ {\sigma}_t}{\lambda_{\theta,t,1}}} +  \tilde{\gamma}_{2t} \tilde{I}_t^{\frac{ {\sigma}_t}{\lambda_{I,t,1}}}\right)   + (\lambda_{\theta,t+1,1}/\lambda_{I,t,1} ) \kappa_t  \tilde{ \eta}_{I,t}
	\end{align*} 
	Hence
	\begin{align*}
		\frac{\frac{\partial E[\ln \tilde{\theta}_{t+1} \mid \tilde{\theta}_{t},\tilde{I}_{t},Y_t]}{\partial\tilde{\theta}_{t} } }{\frac{\partial E[\ln \tilde{\theta}_{t+1} \mid \tilde{\theta}_{t},\tilde{I}_{t},Y_t]}{\partial\tilde{I}_{t} } }&=    \frac{\lambda_{I,t,1}}{\lambda_{\theta,t,1}}  \frac{\tilde{\gamma}_{1t}}{\tilde{\gamma}_{2t}}  \tilde{\theta}_{t}^{\frac{ {\sigma}_t}{\lambda_{\theta,t,1}} -1} \tilde{I}_{t}^{ 1 -\frac{ {\sigma}_t}{\lambda_{I,t,1}}} 
	\end{align*} 
	which identifies $\frac{\lambda_{I,t,1}}{\lambda_{\theta,t,1}}  \frac{\tilde{\gamma}_{1t}}{\tilde{\gamma}_{2t}} $, $\frac{ {\sigma}_t}{\lambda_{\theta,t,1}}$ and   $\frac{ {\sigma}_t}{\lambda_{I,t,1}}$. Hence, $\frac{\tilde{\gamma}_{1t}}{\tilde{\gamma}_{2t}} $ is identified. In addition $(\lambda_{\theta,t+1,1}/\lambda_{I,t,1} ) \kappa_t $ is identified. Now write
	\begin{align*}
		E[\ln \tilde{\theta}_{t+1} \mid \tilde{\theta}_{t},\tilde{I}_{t},Y_t] &=    \frac{\lambda_{\theta,t+1,1} \psi_t}{ {\sigma}_{t}}   \ln \left(\frac{\tilde{\gamma}_{1t}}{ \tilde{\gamma}_{2t} } \tilde{\theta}_{t}^{\frac{ {\sigma}_t}{\lambda_{\theta,t,1}} } +    \tilde{I}_t^{\frac{ {\sigma}_t}{\lambda_{I,t,1}}}\right) +     \frac{\lambda_{\theta,t+1,1} \psi_t}{ {\sigma}_{t}}\ln\tilde{\gamma}_{2t}  + (\lambda_{\theta,t+1,1}/\lambda_{I,t,1} ) \kappa_t  \tilde{ \eta}_{I,t}
	\end{align*} 
	which is linear in $\ln \left(\frac{\tilde{\gamma}_{1t}}{ \tilde{\gamma}_{2t} } \tilde{\theta}_{t}^{\frac{ {\sigma}_t}{\lambda_{\theta,t,1}}} +    \tilde{I}_t^{\frac{ {\sigma}_t}{\lambda_{I,t,1}}}\right) $ and therefore $  \frac{\lambda_{\theta,t+1,1} \psi_t}{ {\sigma}_{t}}  $  and $ \frac{\lambda_{\theta,t+1,1}\psi_t}{ {\sigma}_{t}}\ln\tilde{\gamma}_{2t}$ are identified. Hence $\tilde{\gamma}_{1t}$ and $\tilde{\gamma}_{2t}$ are identified.

	Now suppose we have an alternative set of parameters, denoted by $\{\bar{\mu}_{\theta,t,m},\bar{\lambda}_{\theta,t,m}\}_{t=0,\ldots,T,m=1,2}$, $\{\bar{\mu}_{I,t,m},\bar{\lambda}_{I,t,m}\}_{t=0,\ldots,T-1,m=1,2}$, $\{\bar{a}_t,\bar{\gamma}_{1t},\bar{\gamma}_{2t}, \bar{\gamma}_{3t},\bar{\psi}_t\}^{T-1}_{t=0}$, $\{ \bar{\beta}_{0t}, \bar{\beta}_{1t}, \bar{\beta}_{2t}\}^{T-1}_{t=0}$, and $(\bar{\rho}_0,\bar{\rho}_1)$  that yields the same values of $\{\tilde{\mu}_{\theta,t,m},\tilde{\lambda}_{\theta,t,m}\}_{t=0,\ldots,T,m=1,2}$, $\{\tilde{\mu}_{I,t,m},\tilde{\lambda}_{I,t,m}\}_{t=0,\ldots,T-1,m=1,2}$, $\{ \tilde{\beta}_{0t}, \tilde{\beta}_{1t}, \tilde{\beta}_{2t}\}^{T-1}_{t=0}$, $\{\tilde{\gamma}_{1t},\tilde{\gamma}_{2t}, \frac{ {\sigma}_t}{\lambda_{\theta,t,1}}, \frac{ {\sigma}_t}{\lambda_{I,t,1}},\frac{\lambda_{\theta,t+1,1}\psi_t}{ {\sigma}_t}\}^{T-1}_{t=0}$, and $(\tilde{\rho}_0,\tilde{\rho}_1)$. Define $\bar{\theta}_t$ and $\bar{I}_t$ such that 
	\begin{align*}
		\bar{\mu}_{\theta,t,1} + \bar{\lambda}_{\theta,t,1} \ln \bar{\theta}_t &= {\mu}_{\theta,t,1} + {\lambda}_{\theta,t,1} \ln {\theta}_t = \ln \tilde{\theta}_t\\
		\bar{\mu}_{I,t,1} + \bar{\lambda}_{I,t,1} \ln \bar{I}_t &= {\mu}_{I,t,1} + {\lambda}_{I,t,1} \ln {I}_t = \ln \tilde{I}_t
	\end{align*} 
	For $m \neq 1$, the arguments of the proof of Theorem \ref{th:identparams_tl} imply that 
	\begin{align*}
		\bar{\mu}_{\theta,t,m} + \bar{\lambda}_{\theta,t,m} \ln \bar{\theta}_t  	&=\tilde{\mu}_{\theta,t,m}   + \tilde{\lambda}_{\theta,t,m} \ln \tilde{\theta}_t  \\
		\bar{\mu}_{I,t,m} + \bar{\lambda}_{I,t,m} \ln \bar{I}_t  &=\tilde{\mu}_{I,t,m}   + \tilde{\lambda}_{I,t,m} \ln \tilde{I}_t  
	\end{align*}
	Then the two models generate the same distribution of the measures. In addition, the parameters and $\bar{\theta}_t$ are consistent with the production technology 	because
	\begin{align*}
		\exp(\bar{\mu}_{\theta,t+1,1})     \bar{\theta}_{t+1}^{\bar{\lambda}_{\theta,t+1,1}}  	& = \tilde{\theta}_{t+1} \\
		&=      \left (\tilde{\gamma}_{1t} \tilde{\theta}_{t}^{\frac{ {\sigma}_t}{\lambda_{\theta,t,1}}} +  \tilde{\gamma}_{2t} \tilde{I}_t^{\frac{ {\sigma}_t}{\lambda_{I,t,1}}}\right)^{  \frac{\lambda_{\theta,t+1,1}\psi_t}{ {\sigma}_{t}}} \exp (\tilde{\eta}_{\theta,t} ) \\
		&=   \left (\tilde{\gamma}_{1t}\exp\left(\sigma_t\frac{\bar{\mu}_{\theta,t,1}}{\lambda_{\theta,t,1}} \right)    \bar{\theta}_{t}^{\frac{ \bar{\lambda}_{\theta,t,1} }{\lambda_{\theta,t,1}} {\sigma}_t} +  \tilde{\gamma}_{2t} \exp\left(\sigma_t\frac{\bar{\mu}_{I,t,1}}{\lambda_{I,t,1}} \right)    \bar{I}_t^{\frac{ \bar{\lambda}_{I,t,1}}{\lambda_{I,t,1}} {\sigma}_t}\right)^{  \frac{\lambda_{\theta,t+1,1}\psi_t}{ {\sigma}_{t}}} \exp (\tilde{\eta}_{\theta,t} )  
	\end{align*}
	and thus
	\begin{small}
		\begin{align*}
			\bar{\theta}_{t+1}   
			&= \exp\left(-\frac{\bar{\mu}_{\theta,t+1,1}}{\bar{\lambda}_{\theta,t+1,1}}\right)  \left (\tilde{\gamma}_{1t}\exp\left(\sigma_t\frac{\bar{\mu}_{\theta,t,1}}{\lambda_{\theta,t,1}} \right)    \bar{\theta}_{t}^{\frac{ \bar{\lambda}_{\theta,t,1} }{\lambda_{\theta,t,1}} {\sigma}_t} +  \tilde{\gamma}_{2t} \exp\left(\sigma_t\frac{\bar{\mu}_{I,t,1}}{\lambda_{I,t,1}} \right)    \bar{I}_t^{\frac{ \bar{\lambda}_{I,t,1}}{\lambda_{I,t,1}} {\sigma}_t}\right)^{  \frac{\lambda_{\theta,t+1,1}\psi_t}{  {\bar{\lambda}_{\theta,t+1,1}} {\sigma}_{t}}} \exp (\tilde{\eta}_{\theta,t}/{\bar{\lambda}_{\theta,t+1,1}} )  
		\end{align*}
	\end{small}
	Since $ \bar{\lambda}_{\theta,t,1}/\bar{\sigma}_t = \lambda_{\theta,t,1}/\sigma_t $, $ \bar{\lambda}_{I,t,1}/\bar{\sigma}_t = \lambda_{I,t,1}/\sigma_t $  and $ \bar{\lambda}_{\theta,t+1,1}\bar{\psi}_t/\bar{\sigma}_t = \lambda_{\theta,t+1,1}\psi_t/\sigma_t $ we get
	\begin{small}
		\begin{align*}
			\bar{\theta}_{t+1}   
			&=  \left (  \bar{\gamma}_{1t}           \bar{\theta}_{t}^{ \bar{\sigma}_t} +   \bar{\gamma}_{2t}    \bar{I}_t^{ \bar{\sigma}_{t}  }\right)^{  \frac{\bar{\psi}_t}{ \bar{\sigma}_{t} } } \exp (\bar{\eta}_{\theta,t})  
		\end{align*}
	\end{small}
	with $\bar{\eta}_{\theta,t} = \frac{\lambda_{\theta,t+1,1}}{\bar{\lambda}_{\theta,t+1,1}} {\eta}_{\theta,t}$. 
	
	Analogously, one can show that 
	\begin{align*}
		\ln \bar{I}_t &=  \bar{\beta}_{0t} + \bar{\beta}_{1t} \ln \bar{\theta}_{t}  + \bar{\beta}_{2t} \ln Y_{t}   + \bar{\eta}_{I,t}  \\
		Q &=  \bar{\rho}_{0} + \bar{\rho}_{1} \ln \bar{\theta}_{T} + \bar{\eta}_Q  .
	\end{align*}

	For the second part, notice that identification of $ \frac{ {\sigma}_t}{\lambda_{\theta,t,1}}$, $\frac{ {\sigma}_t}{\lambda_{I,t,1} }$, and $\frac{ {\sigma}_t}{\lambda_{\theta,t+1,1} \psi_t}$ implies identification of $ \frac{\lambda_{I,t,1}}{\lambda_{\theta,t,1}}$. Hence, once $\lambda_{\theta,t,1}$ is fixed, we can identify $\lambda_{I,t,1}$ and $\sigma_t$ for all $t= 1, \ldots, T-1$ and then also $\psi_t$. Using the expression for $\tilde{\gamma}_{1t}$ and $\tilde{\gamma}_{2t}$, it is easy to see that  ${\gamma}_{1t}$ and ${\gamma}_{2t}$ are identified, once $\mu_{\theta,t,1}$ and $\mu_{\theta,t,2}$ are fixed for all $t$. Identification of the remaining parameters follows from arguments as those in the proof of Theorem \ref{th:identparams_tl}.
\end{proof}

\begin{proof}[Proof of Corollary \ref{c:ces_ident} ]
	
	The first part of Theorem \ref{th:identparams_ces} implies that can identify  $ \sigma_t$, $\lambda_{\theta,t,1}$, $ \lambda_{I,t,1} $, $\psi_t$ for all $t$ under Assumptions \ref{a:baseline} and \ref{a:normalization_ces} and either Assumption \ref{a:add_restrictions_ces}(a) or  Assumption \ref{a:add_restrictions_ces}(b).
	
	Now suppose that in addition Assumptions \ref{a:ageinvariant_technology_skills_ces}(a)  and \ref{a:ageinvariant_technology_investment_ces}(a) hold.  The last part of Theorem \ref{th:identparams_ces} immediately implies that all parameters are then point identified.

	Next suppose that in addition to Assumptions \ref{a:baseline}, \ref{a:normalization_ces}, and either \ref{a:add_restrictions_ces}(a) or   \ref{a:add_restrictions_ces}(b)  Assumptions  \ref{a:ageinvariant_technology_skills}(a) and \ref{a:ageinvariant_technology_investment}(b) hold. Then  $\{  \mu_{\theta,t,m},\}_{t=1,\ldots, T}$ is uniquely determined. Moreover, we have  $\tilde{\beta}_{0t} = \mu_{I,t,1} - \tilde{\beta}_{1t}  \mu_{\theta,t,1}$, implying that  $\{  \mu_{I,t,m},\}_{t=1,\ldots, T-1}$ is identified. Once the coefficients in the measurement error equation are identified for one of the measures, the second part of Theorem \ref{th:identparams_ces} implies that all other parameters are identified as well.

	Next suppose that in addition to Assumptions \ref{a:baseline}, \ref{a:normalization_ces}, and either \ref{a:add_restrictions_ces}(a) or   \ref{a:add_restrictions_ces}(b)  Assumptions \ref{a:ageinvariant_technology_skills_ces}(b)  and \ref{a:ageinvariant_technology_investment_ces}(a) hold. Notice that 
	$$  \tilde{\gamma}_{1t}  =  \gamma_{1t}   \exp\left( \sigma_{t} \left( \frac{ \mu_{\theta,t+1,1} }{ \lambda_{\theta,t+1,1} \psi_t } - \frac{\mu_{\theta,t,1}}{\lambda_{\theta,t,1}}  \right) \right) \quad  \text{and} \quad \tilde{\gamma}_{2t}  =  \gamma_{2t} \exp\left( \sigma_{t} \left( \frac{ \mu_{\theta,t+1,1} }{ \lambda_{\theta,t+1,1} \psi_t}   \right) \right)  $$
	and
	\begin{align*}
		1 &= \gamma_{1t} + \gamma_{2t}   =  \exp\left(  - \sigma_{t} \frac{ \mu_{\theta,t+1,1} }{ \lambda_{\theta,t+1,1}\psi_t }\right)  \left( \exp\left( \sigma_{t} \frac{\mu_{\theta,t,1}}{\lambda_{\theta,t,1} }  \right)  \tilde{\gamma}_{1t} + \tilde{\gamma}_{2t}\right)
	\end{align*}
	Since  $\mu_{\theta,0,1} = 0$ and the right hand side is strictly monotone in $\mu_{\theta,1,1}$, we can identify $\mu_{\theta,1,1}$ and then recursively $\mu_{\theta,t,1}$ for all $t$.  Once the parameters in equation (\ref{eq:measurement_eq_ces}) are identified for $m=1$, the second part of Theorem \ref{th:identparams_ces} implies that all other parameters are identified as well.
	
	Finally, suppose that in addition to Assumptions \ref{a:baseline}, \ref{a:normalization_ces}, and either \ref{a:add_restrictions_ces}(a) or   \ref{a:add_restrictions_ces}(b)  Assumptions \ref{a:ageinvariant_technology_skills_ces}(b) and \ref{a:ageinvariant_technology_investment_ces}(b) hold. Notice that $\tilde{\beta}_{0t} = \mu_{I,t,1} - \tilde{\beta}_{1t}  \mu_{\theta,t,1}$
	and 
	\begin{align*}
		1   =  \exp\left(  - \sigma_{t} \frac{ \mu_{\theta,t+1,1} }{ \lambda_{\theta,t+1,1}\psi_t }\right)  \left( \exp\left( \sigma_{t} \frac{\mu_{\theta,t,1}}{\lambda_{\theta,t,1} }  \right)  \tilde{\gamma}_{1t} +    \exp\left( \sigma_{t} \frac{\mu_{I,t,1}}{\lambda_{I,t,1}}  \right) \tilde{\gamma}_{2t}  \right)
	\end{align*}
	Using $\mu_{\theta,0,1} = 0$ and the first equation, we can determine $\mu_{0,I,1}$. Given $\mu_{\theta,0,1} = 0$ and $\mu_{0,I,1}$ and the second equation, we can determine $\mu_{\theta,1,1}$. Then using recursion, we can identify $\mu_{\theta,t,1}$ and  $\mu_{I,t,1}$ for all $t$.  Once the coefficients in equation (\ref{eq:measurement_eq_ces}) are identified for $m=1$, the second part of Theorem \ref{th:identparams_ces} implies that all other parameters are identified as well.
\end{proof}

\begin{proof}[Proof of Theorem \ref{th:identfunctionsces}]
	The first four parts proof is analogous to the proof of Theorem \ref{th:identfunctions}. For the last part notice that  can identify   $\lambda_{\theta,t+1,1}/\lambda_{\theta,t,1} $ and $\lambda_{\theta,t,1}/\lambda_{I,t,1} $ for all $t$ under Assumption \ref{a:baseline} and either Assumption \ref{a:add_restrictions_ces}(a) or  Assumption \ref{a:add_restrictions_ces}(b).	Moreover,
	\begin{align*}
		\frac{\partial \ln \theta_{t+1} }{\partial \ln \theta_{t} } &= \frac{{\lambda}_{\theta,t,1}}{{\lambda}_{\theta,t+1,1}}\frac{\partial \ln \tilde{\theta}_{t+1} }{\partial \ln  \tilde{\theta}_{t} } =\frac{{\lambda}_{\theta,t,1}}{{\lambda}_{\theta,t+1,1}}\frac{\partial }{\partial \ln  \tilde{\theta}_{t} } \ln  \left( \tilde{\gamma}_{1t}    \tilde{\theta}_t^{\frac{\sigma_{t} }{\lambda_{\theta,t,1}}} + \tilde{\gamma}_{2t}  \tilde{I}_t^{\frac{\sigma_{t} }{\lambda_{I,t,1}}}  \right)^{ \frac{\lambda_{\theta,t+1,1}\psi_t}{\sigma_{t}} }  \\
		\frac{\partial \ln \theta_{t+1} }{\partial \ln I_{t} } &= \frac{{\lambda}_{I,t,1}}{{\lambda}_{\theta,t+1,1}}\frac{\partial \ln \tilde{\theta}_{t+1} }{\partial \ln  \tilde{I}_{t} } = \frac{{\lambda}_{I,t,1}}{{\lambda}_{\theta,t ,1}}  \frac{{\lambda}_{\theta,t,1}}{{\lambda}_{\theta,t+1,1}}\frac{\partial }{\partial \ln  \tilde{I}_{t} }  \left( \tilde{\gamma}_{1t}    \tilde{\theta}_t^{\frac{\sigma_{t} }{\lambda_{\theta,t,1}}} + \tilde{\gamma}_{2t}  \tilde{I}_t^{\frac{\sigma_{t} }{\lambda_{I,t,1}}}  \right)^{ \frac{\lambda_{\theta,t+1,1}\psi_t}{\sigma_{t}} }  	
	\end{align*}
	whose distributions are then identified.
\end{proof}

\begin{proof}[Proof of Theorem \ref{th:obseq}]
	
	Let $\bar{\mu}_{\theta,0,1} = 0$, $\bar{\lambda}_{\theta,0,1}= 1$ and $\ln\bar{\theta}_{0}  =  \mu_{\theta,0,1} + \lambda_{\theta,0,1} \ln \theta_{0}$. Let $\bar{\mu}_{\theta,t,1}$ and $\bar{\lambda}_{\theta,t,1}$ and  $\ln \bar{\theta}_{t}$ be such that
	$\bar{\mu}_{\theta,t,1} + \bar{\lambda}_{\theta,t,1} \ln\bar{\theta}_{t} = \mu_{\theta,t,1} + \lambda_{\theta,t,1} \ln \theta_{t}$.
	These values are not unique for $t>0$ and will be determined by the other assumptions. Now let   $\bar{\lambda}_{\theta,t,m} = \bar{\lambda}_{\theta,t,1} \frac{{\lambda}_{\theta,t,m}}{{\lambda}_{\theta,t,1}} $ and  $\bar{\mu}_{\theta,t,m} = \mu_{\theta,t,m} - \frac{\lambda_{\theta,t,m}}{\lambda_{\theta,t,1}}\mu_{\theta,t,1} + \frac{ {\lambda}_{\theta,t,m}}{ {\lambda}_{\theta,t,1}}\bar{\mu}_{\theta,t,1}      $ 
	in which case
	$\bar{\mu}_{\theta,t,m} + \bar{\lambda}_{\theta,t,m} \ln\bar{\theta}_{t} = \mu_{\theta,t,m} + \lambda_{\theta,t,m} \ln \theta_{t}$
	for all $m$.
	
	Next suppose that $\bar{\mu}_{I,t,1} = 0$ and $\bar{\lambda}_{I,t,1} = 1$ for all $t$ and define
	$$   \ln\bar{I}_{t} = \frac{1}{\bar{\lambda}_{I,t,1} }\left(\mu_{I,t,1} - \bar{\mu}_{I,t,1} + \lambda_{I,t,1} \ln I_{t}\right).$$
	and $\bar{\lambda}_{I,t,m} = \bar{\lambda}_{I,t,1} \frac{{\lambda}_{I,t,m}}{{\lambda}_{I,t,1}} $ and  $\bar{\mu}_{I,t,m} = \mu_{I,t,m} - \frac{\lambda_{I,t,m}}{\lambda_{I,t,1}}\mu_{I,t,1} + \frac{ {\lambda}_{I,t,m}}{ {\lambda}_{I,t,1}}\bar{\mu}_{I,t,1}      $ 
	in which case
	$$\bar{\mu}_{I,t,m} + \bar{\lambda}_{I,t,m} \ln\bar{I}_{t} = \mu_{I,t,m} + \lambda_{I,t,m} \ln I_{t}  $$
	for all $m$. Moreover, it is easy to see that there are $(\bar{\beta}_{0t}, \bar{\beta}_{1t},\bar{\beta}_{2t})$ and $\bar{\eta}_{I,t}$ such that
	$$\ln \bar{I}_t  =  \bar{\beta}_{0t} + \bar{\beta}_{1t} \ln\bar{ \theta}_{t}  + \bar{\beta}_{2t} \ln Y_{t}   + \bar{\eta}_{I,t} $$
	
	Instead of assuming $\bar{\mu}_{I,t,1} = 0$ and $\bar{\lambda}_{I,t,1} = 1$, first write
	\begin{align*}
		\ln I_t  &=  \beta_{0t} + \beta_{1t} \left(\frac{\bar{\mu}_{\theta,t,1} - \mu_{\theta,t,1}}{\lambda_{\theta,t,1}} + \frac{\bar{\lambda}_{\theta,t,1}}{\lambda_{\theta,t,1}} \ln\bar{\theta}_{t}\right)  + \beta_{2t} \ln Y_{t}   + \eta_{I,t} \\
		&=  \beta_{0t} + \beta_{1t} \left(\frac{\bar{\mu}_{\theta,t,1} - \mu_{\theta,t,1}}{\lambda_{\theta,t,1}}\right)   +\beta_{1t}  \frac{\bar{\lambda}_{\theta,t,1}}{\lambda_{\theta,t,1}} \ln\bar{\theta}_{t}   + \beta_{2t} \ln Y_{t}   + \eta_{I,t} 
	\end{align*}
	which implies that
	$
	\ln \bar{I}_t
	=  \bar{\beta}_1   \ln\bar{\theta}_{t}   +  \bar{\beta}_2   \ln Y_{t}   + \bar{\eta}_{I,t}  $
	with 
	$$\ln \bar{I}_t = \frac{\ln I_t -  \left(\beta_{0t} + \beta_{1t} \left(\frac{\bar{\mu}_{\theta,t,1} - \mu_{\theta,t,1}}{\lambda_{\theta,t,1}}\right) \right)  }{  \beta_{1t}  \frac{\bar{\lambda}_{\theta,t,1}}{\lambda_{\theta,t,1}} +  \beta_{2t}},$$
	$$\bar{\beta}_{1t}	= \frac{ \beta_{1t}  \frac{\bar{\lambda}_{\theta,t,1}}{\lambda_{\theta,t,1}}}{  \beta_{1t}  \frac{\bar{\lambda}_{\theta,t,1}}{\lambda_{\theta,t,1}} +  \beta_{2t}} \qquad \text{ and } \qquad \bar{\beta}_{2t}	=  	\frac{\beta_{2t}}{  \beta_{1t}  \frac{\bar{\lambda}_{\theta,t,1}}{\lambda_{\theta,t,1}} +  \beta_{2t}} $$
	It is now easy to see that there exist $\bar{\mu}_{I,t,1}$ and $\bar{\lambda}_{I,t,1}$ such that
	$$\bar{\mu}_{I,t,1} + \bar{\lambda}_{I,t,1} \ln\bar{I}_{t} = \mu_{I,t,1} + \lambda_{I,t,1} \ln I_{t}.$$
	Then define and $\bar{\lambda}_{I,t,m} = \bar{\lambda}_{I,t,1} \frac{{\lambda}_{I,t,m}}{{\lambda}_{I,t,1}} $ and  $\bar{\mu}_{I,t,m} = \mu_{I,t,m} - \frac{\lambda_{I,t,m}}{\lambda_{I,t,1}}\mu_{I,t,1} + \frac{ {\lambda}_{I,t,m}}{ {\lambda}_{I,t,1}}\bar{\mu}_{I,t,1}      $ 
	in which case for all $m$
	$$\bar{\mu}_{I,t,m} + \bar{\lambda}_{I,t,m} \ln\bar{I}_{t} = \mu_{I,t,m} + \lambda_{I,t,m} \ln I_{t} .$$

	In both cases, there are parameters that are consistent with the second, third, and forth equation of the model. In addition, we now have known constants $\bar{\mu}_{I,t,1}$ and $\bar{\lambda}_{I,t,1}$ and a random variable $\ln \bar{I}_{t}$ such that 
	$\bar{\mu}_{I,t,1} + \bar{\lambda}_{I,t,1} \ln\bar{\theta}_{t} = \mu_{I,t,1} + \lambda_{I,t,1} \ln I_{t}$.
	
	We next show that there are production function parameters consistent with Assumption \ref{a:ageinvariant_technology_skills}(b). To do so, write 
	\begin{align*}
		\ln \theta_{t+1}  &= a_t + \gamma_{1t} \ln \theta_{t} +  \gamma_{2t} \ln I_t +  \gamma_{3t} \ln \theta_{t}  \ln I_t + \eta_{\theta,t} \\
		& = a_t + \gamma_{1t} \left(\frac{\bar{\mu}_{\theta,t,1} - \mu_{\theta,t,1}}{\lambda_{\theta,t,1}} + \frac{\bar{\lambda}_{\theta,t,1}}{\lambda_{\theta,t,1}} \ln\bar{\theta}_{t}\right)  +  \gamma_{2t} \left(\frac{\bar{\mu}_{I,t,1} - \mu_{I,t,1}}{\lambda_{I,t,1}} + \frac{\bar{\lambda}_{I,t,1}}{\lambda_{I,t,1}} \ln\bar{I}_{t}\right) \\
		& \qquad  +  \gamma_{3t} \left(\frac{\bar{\mu}_{\theta,t,1} - \mu_{\theta,t,1}}{\lambda_{\theta,t,1}} + \frac{\bar{\lambda}_{\theta,t,1}}{\lambda_{\theta,t,1}} \ln\bar{\theta}_{t}\right)  \left(\frac{\bar{\mu}_{I,t,1} - \mu_{I,t,1}}{\lambda_{I,t,1}} + \frac{\bar{\lambda}_{I,t,1}}{\lambda_{I,t,1}} \ln\bar{I}_{t}\right) + \eta_{\theta,t}		\\
		& = a_t + \gamma_{1t} \frac{\bar{\mu}_{\theta,t,1} - \mu_{\theta,t,1}}{\lambda_{\theta,t,1}}  +  \gamma_{2t}  \frac{\bar{\mu}_{I,t,1} -  \mu_{I,t,1}}{\lambda_{I,t,1}} + \gamma_{3t}\frac{\bar{\mu}_{\theta,t,1} - \mu_{\theta,t,1}}{\lambda_{\theta,t,1}}\frac{\bar{\mu}_{I,t,1} - \mu_{I,t,1}}{\lambda_{I,t,1}}  \\
		& \qquad  + \left(\gamma_{1t}   \frac{\bar{\lambda}_{\theta,t,1}}{\lambda_{\theta,t,1}}    + \gamma_{3t}  \frac{\bar{\mu}_{I,t,1} - \mu_{I,t,1}}{\lambda_{I,t,1}}  \frac{\bar{\lambda}_{\theta,t,1}}{\lambda_{\theta,t,1}} \right)\ln\bar{\theta}_{t}    + \left(\gamma_{2t}  \frac{\bar{\lambda}_{I,t,1}}{\lambda_{I,t,1}}  + \gamma_{3t} \frac{\bar{\mu}_{\theta,t,1} - \mu_{\theta,t,1}}{\lambda_{\theta,t,1}}\frac{\bar{\lambda}_{I,t,1}}{\lambda_{I,t,1}}\right) \ln\bar{I}_{t} \\
		& \qquad  +  \gamma_{3t}  \frac{\bar{\lambda}_{\theta,t,1}}{\lambda_{\theta,t,1}}     \frac{\bar{\lambda}_{I,t,1}}{\lambda_{I,t,1}} \ln\bar{\theta}_{t}  \ln\bar{I}_{t}  + \eta_{\theta,t}	 
	\end{align*}
	which we can write as
	\begin{align*}
		\ln\bar{\theta}_{t+1}  
		&   = \bar{\gamma}_{1t}  \ln\bar{\theta}_{t}    + \bar{\gamma}_{2t}   \ln\bar{I}_{t}   + \bar{\gamma}_{3t}  \ln\bar{\theta}_{t}  \ln\bar{I}_{t}  + \bar{\eta}_{\theta,t}	 
	\end{align*}
	where
	$$ \ln\bar{\theta}_{t+1}  = \frac{ \ln \theta_{t+1}  
		- \left(a_t + \gamma_{1t} \frac{\bar{\mu}_{\theta,t,1} - \mu_{\theta,t,1}}{\lambda_{\theta,t,1}} + \gamma_{2t}  \frac{\bar{\mu}_{I,t,1} -  \mu_{I,t,1}}{\lambda_{I,t,1}} + \gamma_{3t}\frac{\bar{\mu}_{\theta,t,1} - \mu_{\theta,t,1}}{\lambda_{\theta,t,1}}\frac{\bar{\mu}_{I,t,1} - \mu_{I,t,1}}{\lambda_{I,t,1}} \right)}{ \gamma_{1t}   \frac{\bar{\lambda}_{\theta,t,1}}{\lambda_{\theta,t,1}}    + \gamma_{3t}  \frac{\bar{\mu}_{I,t,1} - \mu_{I,t,1}}{\lambda_{I,t,1}}  \frac{\bar{\lambda}_{\theta,t,1}}{\lambda_{\theta,t,1}}   +  \gamma_{2t}  \frac{\bar{\lambda}_{I,t,1}}{\lambda_{I,t,1}}  + \gamma_{3t} \frac{\bar{\mu}_{\theta,t,1} - \mu_{\theta,t,1}}{\lambda_{\theta,t,1}}\frac{\bar{\lambda}_{I,t,1}}{\lambda_{I,t,1}}  + + \gamma_{3t}  \frac{\bar{\lambda}_{\theta,t,1}}{\lambda_{\theta,t,1}}     \frac{\bar{\lambda}_{I,t,1}}{\lambda_{I,t,1}} }$$
	$$\bar{\gamma}_{1t} = \frac{ \gamma_{1t}   \frac{\bar{\lambda}_{\theta,t,1}}{\lambda_{\theta,t,1}}    + \gamma_{3t}  \frac{\bar{\mu}_{I,t,1} - \mu_{I,t,1}}{\lambda_{I,t,1}}  \frac{\bar{\lambda}_{\theta,t,1}}{\lambda_{\theta,t,1}}  }{ \gamma_{1t}   \frac{\bar{\lambda}_{\theta,t,1}}{\lambda_{\theta,t,1}}    + \gamma_{3t}  \frac{\bar{\mu}_{I,t,1} - \mu_{I,t,1}}{\lambda_{I,t,1}}  \frac{\bar{\lambda}_{\theta,t,1}}{\lambda_{\theta,t,1}}   +  \gamma_{2t}  \frac{\bar{\lambda}_{I,t,1}}{\lambda_{I,t,1}}  + \gamma_{3t} \frac{\bar{\mu}_{\theta,t,1} - \mu_{\theta,t,1}}{\lambda_{\theta,t,1}}\frac{\bar{\lambda}_{I,t,1}}{\lambda_{I,t,1}}  + + \gamma_{3t}  \frac{\bar{\lambda}_{\theta,t,1}}{\lambda_{\theta,t,1}}     \frac{\bar{\lambda}_{I,t,1}}{\lambda_{I,t,1}} }$$
	$$\bar{\gamma}_{2t} = \frac{ \gamma_{2t}  \frac{\bar{\lambda}_{I,t,1}}{\lambda_{I,t,1}}  + \gamma_{3t} \frac{\bar{\mu}_{\theta,t,1} - \mu_{\theta,t,1}}{\lambda_{\theta,t,1}}\frac{\bar{\lambda}_{I,t,1}}{\lambda_{I,t,1}} }{ \gamma_{1t}   \frac{\bar{\lambda}_{\theta,t,1}}{\lambda_{\theta,t,1}}    + \gamma_{3t}  \frac{\bar{\mu}_{I,t,1} - \mu_{I,t,1}}{\lambda_{I,t,1}}  \frac{\bar{\lambda}_{\theta,t,1}}{\lambda_{\theta,t,1}}   +  \gamma_{2t}  \frac{\bar{\lambda}_{I,t,1}}{\lambda_{I,t,1}}  + \gamma_{3t} \frac{\bar{\mu}_{\theta,t,1} - \mu_{\theta,t,1}}{\lambda_{\theta,t,1}}\frac{\bar{\lambda}_{I,t,1}}{\lambda_{I,t,1}}  + + \gamma_{3t}  \frac{\bar{\lambda}_{\theta,t,1}}{\lambda_{\theta,t,1}}     \frac{\bar{\lambda}_{I,t,1}}{\lambda_{I,t,1}} }$$
	$$\bar{\gamma}_{3t} = \frac{ \gamma_{3t}  \frac{\bar{\lambda}_{\theta,t,1}}{\lambda_{\theta,t,1}}     \frac{\bar{\lambda}_{I,t,1}}{\lambda_{I,t,1}}}{ \gamma_{1t}   \frac{\bar{\lambda}_{\theta,t,1}}{\lambda_{\theta,t,1}}    + \gamma_{3t}  \frac{\bar{\mu}_{I,t,1} - \mu_{I,t,1}}{\lambda_{I,t,1}}  \frac{\bar{\lambda}_{\theta,t,1}}{\lambda_{\theta,t,1}}   +  \gamma_{2t}  \frac{\bar{\lambda}_{I,t,1}}{\lambda_{I,t,1}}  + \gamma_{3t} \frac{\bar{\mu}_{\theta,t,1} - \mu_{\theta,t,1}}{\lambda_{\theta,t,1}}\frac{\bar{\lambda}_{I,t,1}}{\lambda_{I,t,1}}  + + \gamma_{3t}  \frac{\bar{\lambda}_{\theta,t,1}}{\lambda_{\theta,t,1}}     \frac{\bar{\lambda}_{I,t,1}}{\lambda_{I,t,1}} }$$
	It is now easy to see that there exist $\bar{\mu}_{\theta,t+1,1}$ and $\bar{\lambda}_{\theta,t+1,1}$ such that
	$$\bar{\mu}_{\theta,t+1,1} + \bar{\lambda}_{\theta,t+1,1} \ln\bar{\theta}_{t+1} = \mu_{\theta,t+1,1} + \lambda_{\theta,t+1,1} \ln \theta_{t+1}.$$
	
	We can now use the arguments recursively and show always exist sets of parameters for the first four equations that are consistent with the data and satisfy Assumptions \ref{a:baseline}, \ref{a:normalization}, \ref{a:ageinvariant_technology_skills}(b), and either \ref{a:ageinvariant_technology_investment}(a) or \ref{a:ageinvariant_technology_investment}(b). It also immediately follows that the exist $\bar{\rho}_0$ and $\bar{\rho}_1$ such that 
	$$ \bar{\rho}_{0} + \bar{\rho}_{1} \ln \theta_{T} = \rho_{0} + \rho_{1} \ln \theta_{T}  $$
	
	The arguments in Section \ref{s:identification_tl} imply that there always exist sets of parameters  that are consistent with the data and satisfy  that Assumptions \ref{a:baseline}, \ref{a:normalization}, either \ref{a:ageinvariant_technology_skills}(a) and  \ref{a:ageinvariant_technology_investment}(a).
	
	Finally, suppose Assumptions \ref{a:baseline}, \ref{a:normalization}, either \ref{a:ageinvariant_technology_skills}(a) and  \ref{a:ageinvariant_technology_investment}(b) hold. Let 
	$  \ln\bar{\theta}_{t} = \mu_{\theta,t,1} + \lambda_{\theta,t,1} \ln \theta_{t}$. Then we can write $
	\ln \bar{I}_t
	=  \bar{\beta}_1   \ln\bar{\theta}_{t}   +  \bar{\beta}_2   \ln Y_{t}   + \bar{\eta}_{I,t}  
	$
	with 
	$$\ln \bar{I}_t = \frac{\ln I_t -  \left(\beta_{0t} + \beta_{1t} \left(\frac{  \mu_{\theta,t,1}}{\lambda_{\theta,t,1}}\right) \right)  }{  \beta_{1t}  \frac{1}{\lambda_{\theta,t,1}} +  \beta_{2t}}, \quad  \bar{\beta}_{1t}	= \frac{ \beta_{1t}  \frac{1}{\lambda_{\theta,t,1}}}{  \beta_{1t}  \frac{1}{\lambda_{\theta,t,1}} +  \beta_{2t}} \quad \text{ and } \quad \bar{\beta}_{2t}	=  	\frac{\beta_{2t}}{  \beta_{1t}  \frac{1}{\lambda_{\theta,t,1}} +  \beta_{2t}}. $$
	It is now easy to see that there exist $\bar{\mu}_{I,t,1}$ and $\bar{\lambda}_{I,t,1}$ such that
	$$\bar{\mu}_{I,t,1} + \bar{\lambda}_{I,t,1} \ln\bar{I}_{t} = \mu_{I,t,1} + \lambda_{I,t,1} \ln I_{t}.$$
	Using the previous arguments, we can then write 
	\begin{align*}
		&\ln \bar{\theta}_{t+1}  \\
		& = \mu_{\theta,t+1,1} + \lambda_{\theta,t+1,1} \left(a_t + \gamma_{1t} \frac{\bar{\mu}_{\theta,t,1} - \mu_{\theta,t,1}}{\lambda_{\theta,t,1}}  +  \gamma_{2t}  \frac{\bar{\mu}_{I,t,1} -  \mu_{I,t,1}}{\lambda_{I,t,1}} + \gamma_{3t}\frac{\bar{\mu}_{\theta,t,1} - \mu_{\theta,t,1}}{\lambda_{\theta,t,1}}\frac{\bar{\mu}_{I,t,1} - \mu_{I,t,1}}{\lambda_{I,t,1}} \right) \\
		& \quad  +  \lambda_{\theta,t+1,1}\left(\gamma_{1t}   \frac{\bar{\lambda}_{\theta,t,1}}{\lambda_{\theta,t,1}}    + \gamma_{3t}  \frac{\bar{\mu}_{I,t,1} - \mu_{I,t,1}}{\lambda_{I,t,1}}  \frac{\bar{\lambda}_{\theta,t,1}}{\lambda_{\theta,t,1}} \right)\ln\bar{\theta}_{t}    +  \lambda_{\theta,t+1,1}\left(\gamma_{2t}  \frac{\bar{\lambda}_{I,t,1}}{\lambda_{I,t,1}}  + \gamma_{3t} \frac{\bar{\mu}_{\theta,t,1} - \mu_{\theta,t,1}}{\lambda_{\theta,t,1}}\frac{\bar{\lambda}_{I,t,1}}{\lambda_{I,t,1}}\right) \ln\bar{I}_{t} \\
		& \quad  +   \lambda_{\theta,t+1,1} \gamma_{3t}  \frac{\bar{\lambda}_{\theta,t,1}}{\lambda_{\theta,t,1}}     \frac{\bar{\lambda}_{I,t,1}}{\lambda_{I,t,1}} \ln\bar{\theta}_{t}  \ln\bar{I}_{t}  +  \lambda_{\theta,t+1,1}\eta_{\theta,t}	 
	\end{align*}
	Setting  $\bar{\mu}_{\theta,t,1} = 0$ and $\bar{\lambda}_{\theta,t,1} = 1$, it is easy to that there are $\bar{a}_t$, $ \bar{\gamma}_{1t}$, $ \bar{\gamma}_{2t}$, and $ \bar{\gamma}_{3t}$ such that
	such that 
	$\ln \bar{\theta}_{t+1}   = \bar{a}_t + \bar{\gamma}_{1t} \ln \bar{\theta}_{t} +  \bar{\gamma}_{2t} \ln \bar{I}_t +  \bar{\gamma}_{3t} \ln \bar{\theta}_{t}  \ln \bar{I}_t + \bar{\eta}_{\theta,t}$.
\end{proof}

\begin{proof}[Proof of Theorem \ref{th:obseq_ces}]

	As in the proof of Theorem \ref{th:obseq}, there are constants $\bar{\mu}_{\theta,t,1}$ and $\bar{\lambda}_{\theta,t,1}$ and a random variable $\ln \bar{\theta}_{t}$ such that 
	$\bar{\mu}_{\theta,t,1} + \bar{\lambda}_{\theta,t,1} \ln\bar{\theta}_{t} = \mu_{\theta,t,1} + \lambda_{\theta,t,1} \ln \theta_{t}$
	and $\bar{\mu}_{\theta,0,1} = 0$ and $\bar{\lambda}_{\theta,0,1} = 1$. Then there exist $\bar{\mu}_{\theta,t,m}$ and $\bar{\lambda}_{\theta,t,m}$ such that  for all $m$
	$$\bar{\mu}_{\theta,t,m} + \bar{\lambda}_{\theta,t,m} \ln\bar{\theta}_{t} = \mu_{\theta,t,m} + \lambda_{\theta,t,m} \ln \theta_{t}$$

	Next consider the production function and write  
	\begin{align*}
		\theta_{t+1} &=  \left(\gamma_{1t} \theta_{t}^{\sigma_t} +   \gamma_{2t} I_t^{\sigma_t}\right)^{\frac{\psi_t}{\sigma_t}}\exp(\eta_{\theta,t})  \\
		&=  \left(\gamma_{1t}\exp\left( {\sigma_t} \frac{\bar{\mu}_{\theta,t,1} - \mu_{\theta,t,1}}{\lambda_{\theta,t,1}} \right)    \bar{\theta}_{t}^{\sigma_t \frac{\bar{\lambda}_{\theta,t,1}}{\lambda_{\theta,t,1}} } +   \gamma_{2t} I_t^{\sigma_t}\right)^{\frac{\psi_t}{\sigma_t}}\exp(\eta_{\theta,t})  \\
		&=  \left(\gamma_{1t}\exp\left( {\sigma_t} \frac{\bar{\mu}_{\theta,t,1} - \mu_{\theta,t,1}}{\lambda_{\theta,t,1}} \right)    \bar{\theta}_{t}^{\sigma_t \frac{\bar{\lambda}_{\theta,t,1}}{\lambda_{\theta,t,1}} } +   \gamma_{2t} \left(I_t^{\frac{\lambda_{\theta,t,1}}{\bar{\lambda}_{\theta,t,1}}}\right)^{\sigma_t \frac{\bar{\lambda}_{\theta,t,1}}{\lambda_{\theta,t,1}} }\right)^{\frac{\psi_t}{\sigma_t}}\exp(\eta_{\theta,t})  
	\end{align*}
	and with $\bar{\sigma}_t = \sigma_t \frac{\bar{\lambda}_{\theta,t,1}}{\lambda_{\theta,t,1}} $  we have
	\begin{align*}
		\theta_{t+1}^{\frac{\lambda_{\theta,t,1}}{\bar{\lambda}_{\theta,t,1}}}
		&=  \left(\gamma_{1t}\exp\left( {\sigma_t} \frac{\bar{\mu}_{\theta,t,1} - \mu_{\theta,t,1}}{\lambda_{\theta,t,1}} \right)    \bar{\theta}_{t}^{\bar{\sigma}_t } +   \gamma_{2t} \left(I_t^{\frac{\lambda_{\theta,t,1}}{\bar{\lambda}_{\theta,t,1}}}\right)^{\bar{\sigma}_t}\right)^{\frac{\psi_t}{\bar{\sigma}_t } }\exp\left(\frac{\lambda_{\theta,t,1}}{\bar{\lambda}_{\theta,t,1}}\eta_{\theta,t}\right)  
	\end{align*}
	We also know that we need to satisfy the relationship
	$$\exp\left(\frac{\bar{\mu}_{I,t,1} - \mu_{I,t,1}}{\lambda_{I,t,1}} + \frac{\bar{\lambda}_{I,t,1}}{\lambda_{I,t,1}} \ln\bar{I}_{t} \right) =  I_{t}$$
	Hence, 
	\begin{align*}
		\theta_{t+1}^{\frac{\lambda_{\theta,t,1}}{\bar{\lambda}_{\theta,t,1}}} &= 
		\left(\gamma_{1t}\exp\left( {\sigma_t} \frac{\bar{\mu}_{\theta,t,1} - \mu_{\theta,t,1}}{\lambda_{\theta,t,1}} \right)    \bar{\theta}_{t}^{\bar{\sigma}_t } +   \gamma_{2t} \exp\left(\bar{\sigma}_t\frac{\lambda_{\theta,t,1}}{\bar{\lambda}_{\theta,t,1}}\frac{\bar{\mu}_{I,t,1} - \mu_{I,t,1}}{\lambda_{I,t,1}}\right)  \bar{I}_{t}^{\bar{\sigma}_t\frac{\lambda_{\theta,t,1}}{\bar{\lambda}_{\theta,t,1}}\frac{\bar{\lambda}_{I,t,1}}{\lambda_{I,t,1}}}\right)^{\frac{\psi_t}{\bar{\sigma}_t } }\exp\left(\frac{\lambda_{\theta,t,1}}{\bar{\lambda}_{\theta,t,1}}\eta_{\theta,t}\right)  
	\end{align*}
	Since 
	$$\theta_{t+1}^{\frac{\lambda_{\theta,t,1}}{\bar{\lambda}_{\theta,t,1}}} = \exp\left(\frac{\lambda_{\theta,t,1}}{\bar{\lambda}_{\theta,t,1}}\frac{\bar{\mu}_{\theta,t+1,1} - \mu_{\theta,t+1,1}}{\lambda_{\theta,t+1,1}}\right) \bar{\theta}_{t+1}^{\frac{\lambda_{\theta,t,1}}{\bar{\lambda}_{\theta,t,1}}\frac{\bar{\lambda}_{\theta,t+1,1}}{\lambda_{\theta,t+1,1}}} $$
	we then write
	\begin{align*}
		\bar{\theta}_{t+1}&= 
		\left(\frac{\gamma_{1t}\exp\left( {\sigma_t} \frac{\bar{\mu}_{\theta,t,1} - \mu_{\theta,t,1}}{\lambda_{\theta,t,1}} \right)}{\exp\left(\bar{\sigma}_t\frac{\lambda_{\theta,t,1}}{\bar{\lambda}_{\theta,t,1}}\frac{\bar{\mu}_{\theta,t+1,1} - \mu_{\theta,t+1,1}}{\lambda_{\theta,t+1,1}}\right)}    \bar{\theta}_{t}^{\bar{\sigma}_t } +  \frac{ \gamma_{2t} \exp\left(\bar{\sigma}_t\frac{\lambda_{\theta,t,1}}{\bar{\lambda}_{\theta,t,1}}\frac{\bar{\mu}_{I,t,1} - \mu_{I,t,1}}{\lambda_{I,t,1}}\right) }{\exp\left(\bar{\sigma}_t\frac{\lambda_{\theta,t,1}}{\bar{\lambda}_{\theta,t,1}}\frac{\bar{\mu}_{\theta,t+1,1} - \mu_{\theta,t+1,1}}{\lambda_{\theta,t+1,1}}\right)} \bar{I}_{t}^{\bar{\sigma}_t\frac{\lambda_{\theta,t,1}}{\bar{\lambda}_{\theta,t,1}}\frac{\bar{\lambda}_{I,t,1}}{\lambda_{I,t,1}}}\right)^{\frac{\bar{\psi}_{t}}{\bar{\sigma}_t }  }\exp\left(\frac{\lambda_{\theta,t,1}}{\bar{\lambda}_{\theta,t,1}}\eta_{\theta,t}\right)  \\
		&= 
		\left(  \bar{\gamma}_{1t} \bar{\theta}_{t}^{\bar{\sigma}_t } +  \bar{\gamma}_{2t} \bar{I}_{t}^{\bar{\sigma}_t\frac{\lambda_{\theta,t,1}}{\bar{\lambda}_{\theta,t,1}}\frac{\bar{\lambda}_{I,t,1}}{\lambda_{I,t,1}}}\right)^{\frac{\bar{\psi}_{t}}{\bar{\sigma}_t }  }\exp\left(\frac{\lambda_{\theta,t,1}}{\bar{\lambda}_{\theta,t,1}}\eta_{\theta,t}\right)  
	\end{align*}
	where
	$$\bar{\gamma}_{1t} = \frac{\gamma_{1t}\exp\left( {\sigma_t} \frac{\bar{\mu}_{\theta,t,1} - \mu_{\theta,t,1}}{\lambda_{\theta,t,1}} \right)}{\exp\left(\bar{\sigma}_t\frac{\lambda_{\theta,t,1}}{\bar{\lambda}_{\theta,t,1}}\frac{\bar{\mu}_{\theta,t+1,1} - \mu_{\theta,t+1,1}}{\lambda_{\theta,t+1,1}}\right)} \quad \text{ and } \quad  \bar{\gamma}_{2t} = \frac{ \gamma_{2t} \exp\left(\bar{\sigma}_t\frac{\lambda_{\theta,t,1}}{\bar{\lambda}_{\theta,t,1}}\frac{\bar{\mu}_{I,t,1} - \mu_{I,t,1}}{\lambda_{I,t,1}}\right) }{\exp\left(\bar{\sigma}_t\frac{\lambda_{\theta,t,1}}{\bar{\lambda}_{\theta,t,1}}\frac{\bar{\mu}_{\theta,t+1,1} - \mu_{\theta,t+1,1}}{\lambda_{\theta,t+1,1}}\right)}$$
	and $\bar{\psi}_{t} = \psi_{t} {\frac{\bar{\lambda}_{\theta,t,1}}{\lambda_{\theta,t,1}}\frac{\lambda_{\theta,t+1,1}}{\bar{\lambda}_{\theta,t+1,1}}} =  \psi_{t} {\frac{\bar{\sigma}_{t}}{{\sigma}_{t}}\frac{{\sigma}_{t+1}}{ \bar{\sigma}_{t+1}  } }$.
	
	We now show that under different combinations of assumptions, there are parameters consistent with the model. First notice that it has to hold that
	$\bar{\lambda}_{I,t,1} = \frac{\bar{\lambda}_{\theta,t,1}}{\lambda_{\theta,t,1}} \lambda_{I,t,1}$. Now write
	\begin{align*}
		\left(\frac{\bar{\mu}_{I,t,1} - \mu_{I,t,1}}{\lambda_{I,t,1}} + \frac{\bar{\lambda}_{I,t,1}}{\lambda_{I,t,1}} \ln\bar{I}_{t}\right)   &=  \beta_{0t} + \beta_{1t} \left(\frac{\bar{\mu}_{\theta,t,1} - \mu_{\theta,t,1}}{\lambda_{\theta,t,1}} + \frac{\bar{\lambda}_{\theta,t,1}}{\lambda_{\theta,t,1}} \ln\bar{\theta}_{t}\right)  + \beta_{2t} \ln Y_{t}   + \eta_{I,t}  
	\end{align*}
	which implies that
	\begin{align*}
		\ln\bar{I}_{t}    &=  \beta_{0t}\frac{\lambda_{I,t,1}}{\bar{\lambda}_{I,t,1}} - \frac{\bar{\mu}_{I,t,1} - \mu_{I,t,1}}{\lambda_{I,t,1}}\frac{\lambda_{I,t,1}}{\bar{\lambda}_{I,t,1}}  + \beta_{1t}\frac{\lambda_{I,t,1}}{\bar{\lambda}_{I,t,1}} \frac{\bar{\mu}_{\theta,t,1} - \mu_{\theta,t,1}}{\lambda_{\theta,t,1}} + \beta_{1t} \ln\bar{\theta}_{t}  + \beta_{2t} \frac{\lambda_{I,t,1}}{\bar{\lambda}_{I,t,1}}\ln Y_{t}   + \eta_{I,t}  
	\end{align*}
	It follows that $\bar{\beta}_1 = \beta_{1}$ and $\bar{\beta}_2 = \beta_{2t} \frac{\lambda_{I,t,1}}{\bar{\lambda}_{I,t,1}}$ are determined and
	$$\bar{\beta}_{0t} = \beta_{0t}\frac{\lambda_{I,t,1}}{\bar{\lambda}_{I,t,1}} - \frac{\bar{\mu}_{I,t,1} - \mu_{I,t,1}}{\lambda_{I,t,1}}\frac{\lambda_{I,t,1}}{\bar{\lambda}_{I,t,1}}  + \beta_{1t}\frac{\lambda_{I,t,1}}{\bar{\lambda}_{I,t,1}} \frac{\bar{\mu}_{\theta,t,1} - \mu_{\theta,t,1}}{\lambda_{\theta,t,1}}$$
	If $\bar{\lambda}_{\theta,t,1} = \bar{\lambda}_{\theta,t+1,1} = 1$, set  $\bar{\psi}_{t} = \psi_{t} \frac{\lambda_{\theta,t+1,1}}{\lambda_{\theta,t,1}} $ and $\bar{\sigma}_t = \sigma_t \frac{1}{\lambda_{\theta,t,1}} $. If instead $\bar{\psi}_{t} = 1$, take  $\bar{\lambda}_{\theta,t+1,1} = \psi_{t} {\frac{\bar{\lambda}_{\theta,t,1}}{\lambda_{\theta,t,1}}{\lambda_{\theta,t+1,1}}}$  and  $\bar{\sigma}_t = \sigma_t \frac{\bar{\lambda}_{\theta,t,1}}{\lambda_{\theta,t,1}} $. Either way, $\bar{\sigma}_t $, $\bar{\psi}_{t}$, and $\bar{\lambda}_{\theta,t,1}$  are uniquely determined.
	
	Now suppose that $ \bar{\mu}_{\theta,t,1} = \bar{\mu}_{I,t,1}= 0$. Then   $\bar{\beta}_{0t}$ is uniquely determined and so is
	$$\bar{\gamma}_{1t} = \frac{\gamma_{1t}\exp\left( {\sigma_t} \frac{\bar{\mu}_{\theta,t,1} - \mu_{\theta,t,1}}{\lambda_{\theta,t,1}} \right)}{\exp\left(\bar{\sigma}_t\frac{\lambda_{\theta,t,1}}{\bar{\lambda}_{\theta,t,1}}\frac{\bar{\mu}_{\theta,t+1,1} - \mu_{\theta,t+1,1}}{\lambda_{\theta,t+1,1}}\right)}  \quad \text{and} \quad \bar{\gamma}_{2t} =  \frac{ \gamma_{2t} \exp\left(\bar{\sigma}_t\frac{\lambda_{\theta,t,1}}{\bar{\lambda}_{\theta,t,1}}\frac{\bar{\mu}_{I,t,1} - \mu_{I,t,1}}{\lambda_{I,t,1}}\right) }{\exp\left(\bar{\sigma}_t\frac{\lambda_{\theta,t,1}}{\bar{\lambda}_{\theta,t,1}}\frac{\bar{\mu}_{\theta,t+1,1} - \mu_{\theta,t+1,1}}{\lambda_{\theta,t+1,1}}\right)}. $$
	Hence, we found parameters that are consistent with the first four equations of the model. 
	
	Next suppose that $ \bar{\mu}_{\theta,t,1}=0$  and $ \bar{\beta}_{0t}= 0$. Then also $\bar{\mu}_{\theta,t+1,1}  = 0$. Again let
	$$\bar{\gamma}_{1t} = \frac{\gamma_{1t}\exp\left( {\sigma_t} \frac{\bar{\mu}_{\theta,t,1} - \mu_{\theta,t,1}}{\lambda_{\theta,t,1}} \right)}{\exp\left(\bar{\sigma}_t\frac{\lambda_{\theta,t,1}}{\bar{\lambda}_{\theta,t,1}}\frac{\bar{\mu}_{\theta,t+1,1} - \mu_{\theta,t+1,1}}{\lambda_{\theta,t+1,1}}\right)} $$ 
	Then we can find $ \bar{\mu}_{I,t,1}$ such that $ \bar{\beta}_{0t}= 0$. Finally let 
	$$ \bar{\gamma}_{2t} =  \frac{ \gamma_{2t} \exp\left(\bar{\sigma}_t\frac{\lambda_{\theta,t,1}}{\bar{\lambda}_{\theta,t,1}}\frac{\bar{\mu}_{I,t,1} - \mu_{I,t,1}}{\lambda_{I,t,1}}\right) }{\exp\left(\bar{\sigma}_t\frac{\lambda_{\theta,t,1}}{\bar{\lambda}_{\theta,t,1}}\frac{\bar{\mu}_{\theta,t+1,1} - \mu_{\theta,t+1,1}}{\lambda_{\theta,t+1,1}}\right)}. $$
	Hence, we found parameters that are consistent with the first four equations of the model.
	
	Next suppose that $ \bar{\mu}_{I,t,1}= 0$ and $\gamma_{1t}+\gamma_{2t}=1$. Then  $\bar{\beta}_{0t} =\beta_{0t}\frac{\lambda_{I,t,1}}{\bar{\lambda}_{I,t,1}}$ is uniquely determined. It then has to hold that
	$$\bar{\gamma}_{1t} = \frac{\gamma_{1t}\exp\left( {\sigma_t} \frac{\bar{\mu}_{\theta,t,1} - \mu_{\theta,t,1}}{\lambda_{\theta,t,1}} \right)}{\exp\left(\bar{\sigma}_t\frac{\lambda_{\theta,t,1}}{\bar{\lambda}_{\theta,t,1}}\frac{\bar{\mu}_{\theta,t+1,1} - \mu_{\theta,t+1,1}}{\lambda_{\theta,t+1,1}}\right)}  \quad \text{and} \quad \bar{\gamma}_{2t} =  \frac{ \gamma_{2t} \exp\left(\bar{\sigma}_t\frac{\lambda_{\theta,t,1}}{\bar{\lambda}_{\theta,t,1}}\frac{\bar{\mu}_{I,t,1} - \mu_{I,t,1}}{\lambda_{I,t,1}}\right) }{\exp\left(\bar{\sigma}_t\frac{\lambda_{\theta,t,1}}{\bar{\lambda}_{\theta,t,1}}\frac{\bar{\mu}_{\theta,t+1,1} - \mu_{\theta,t+1,1}}{\lambda_{\theta,t+1,1}}\right)}. $$
	and $\bar{\gamma}_{1t} +\bar{\gamma}_{2t} =1$. Since the denominator of the fractions is strictly monotone in $\bar{\mu}_{\theta,t+1,1}$ and has range $(0,\infty)$, there exists a unique value of  $\bar{\mu}_{\theta,t+1,1}$  such that $\bar{\gamma}_{1t} +\bar{\gamma}_{2t} =1$. We then found  parameters that are consistent with the first four equations of the model.

	Finally, suppose that $\bar{\beta}_{0t}= 0$ and $\gamma_{1t}+\gamma_{2t}=1$.  Then we can find a unique value $\bar{\mu}_{I,t,1}= 0$ such that $\bar{\beta}_{0t}= 0$. Given this value, it has to hold that
	$$\bar{\gamma}_{1t} = \frac{\gamma_{1t}\exp\left( {\sigma_t} \frac{\bar{\mu}_{\theta,t,1} - \mu_{\theta,t,1}}{\lambda_{\theta,t,1}} \right)}{\exp\left(\bar{\sigma}_t\frac{\lambda_{\theta,t,1}}{\bar{\lambda}_{\theta,t,1}}\frac{\bar{\mu}_{\theta,t+1,1} - \mu_{\theta,t+1,1}}{\lambda_{\theta,t+1,1}}\right)}  \quad \text{and} \quad \bar{\gamma}_{2t} =  \frac{ \gamma_{2t} \exp\left(\bar{\sigma}_t\frac{\lambda_{\theta,t,1}}{\bar{\lambda}_{\theta,t,1}}\frac{\bar{\mu}_{I,t,1} - \mu_{I,t,1}}{\lambda_{I,t,1}}\right) }{\exp\left(\bar{\sigma}_t\frac{\lambda_{\theta,t,1}}{\bar{\lambda}_{\theta,t,1}}\frac{\bar{\mu}_{\theta,t+1,1} - \mu_{\theta,t+1,1}}{\lambda_{\theta,t+1,1}}\right)}. $$
	and $\bar{\gamma}_{1t} +\bar{\gamma}_{2t} =1$. Again, we there exists a unique value of  $\bar{\mu}_{\theta,t+1,1}$  such that $\bar{\gamma}_{1t} +\bar{\gamma}_{2t} =1$. We then found  parameters that are consistent with the first four equations of the model. 
	
	In all four cases, it also immediately follows that the exist $\bar{\rho}_0$ and $\bar{\rho}_1$ such that 
	$ \bar{\rho}_{0} + \bar{\rho}_{1} \ln \theta_{T} = \rho_{0} + \rho_{1} \ln \theta_{T}  $.
\end{proof}

\begin{proof}[Proof of Theorem \ref{th:indentgen}]

	The joint distribution of 
	$$\{\{Z_{\theta,t,m}\}_{t=0,\ldots,T, m= 1,2,3}, \{Z_{I,t,m}\}_{t=0,\ldots,T -1, m= 1,2,3}, Q,  \{\tilde{g}_{\theta,t}(\theta_t)\}^{T}_{t=0}, \{\tilde{g}_{I,t}(I_t)\}^{T-1}_{t=0}  \}$$ 
	is identified by Theorem 2 of \shortciteN{CHS:10} up to unknown and strictly increasing functions $\tilde{g}_{\theta,t}$ and $\tilde{g}_{I,t}$ conditional on $\{Y_{1}, \ldots, Y_{T-1}\}$.  Moreover, let 
	$\tilde{\eta}_{I,t} = F_{\eta_{I,t}}(\eta_{I,t}) \sim U[0,1]$.	Then we can write 
	$I_t =  h_t(\theta_{t},Y_{t}, F^{-1}_{\eta_{I,t}}(\tilde{\eta}_{I,t}))$
	and thus,
	\begin{align*}
		\tilde{g}_{I,t}(I_t) &=  	\tilde{g}_{I,t}(h_t(\tilde{g}_{\theta,t}^{-1}(\tilde{g}_{\theta,t}(\theta_t)),Y_{t}, F^{-1}_{\eta_{I,t}}(\tilde{\eta}_{I,t}) ))
	\end{align*}
	or
	$
	\tilde{g}_{I,t}(I_t) = \tilde{h}_t( \tilde{g}_{\theta,t}(\theta_t) ,Y_{t},\tilde{\eta}_{I,t})
	$.		Now notice that
	$$Q_{\alpha}(\tilde{g}_{I,t}(I_t) \mid  \tilde{g}_{\theta,t}(\theta_t),Y_{t} ) = \tilde{h}_t( \tilde{g}_{\theta,t}(\theta_t) ,Y_{t},\alpha)$$
	which implies that $\tilde{h}_t$ is identified. We can now write
	$\tilde{\eta}_{I,t} =  \tilde{h}^{-1}_t(\tilde{g}_{I,t}(I_t),\tilde{g}_{\theta,t}(\theta_t) ,Y_{t} )$
	and therefore the joint distribution of $(\tilde{g}_{I,t}(I_t),\tilde{g}_{\theta,t}(\theta_t) ,Y_{t},\tilde{\eta}_{I,t})$ is point identified.

	Now write
	$$\theta_{t+1} = f_t(\theta_{t},I_{t},\eta_{I,t}, \varsigma_{\theta,t} )    \Leftrightarrow   \tilde{g}_{\theta,t+1}(\theta_{t+1}) = \tilde{g}_{\theta,t+1}(f_t( \tilde{g}_{\theta,t}^{-1}(\tilde{g}_{ \theta,t}(\theta_{t})) , \tilde{g}_{I,t}^{-1}(\tilde{g}_{I,t}(I_{t})),F^{-1}_{\eta_{I,t}}(\tilde{\eta}_{I,t}),   \varsigma_{\theta,t} )  .$$
	We can therefore identify 
	\begin{eqnarray*} 
		&& Q_{{\alpha_4}}(\tilde{g}_{\theta,t+1}(\theta_{t+1}) \mid \tilde{g}_{\theta,t}(\theta_{t}) = Q_{\alpha_1}(\tilde{g}_{\theta,t}(\theta_{t})), \tilde{g}_{I,t}(I_{t}) = Q_{\alpha_2}(\tilde{g}_{I,t}(I_{t})), \tilde{\eta}_{I,t} = \alpha_3  ) \\
		&&= \tilde{g}_{\theta,t+1}(f_t( \tilde{g}_{\theta,t}^{-1}(Q_{\alpha_1}(\tilde{g}_{\theta,t}(\theta_{t}) ) , \tilde{g}_{I,t}^{-1}( Q_{\alpha_2}(\tilde{g}_{I,t}(I_{t}) ) , F^{-1}_{\eta_{I,t}}(\alpha_3 ),    Q_{\alpha_4}(\varsigma_{\theta,t}) )) \\
		&&= \tilde{g}_{\theta,t+1}(f_t( Q_{\alpha_1}(\theta_{t})  ,   Q_{\alpha_2} (I_{t}  ) ,  Q_{\alpha_3}(\eta_{I,t}),    Q_{\alpha_4}(\varsigma_{\theta,t}) )) 
	\end{eqnarray*}
	and 
	\begin{eqnarray*}
		&&F_{ \tilde{g}_{\theta,t+1}(\theta_{t+1}) }( \tilde{g}_{\theta,t+1}(f_t( Q_{\alpha_1}(\theta_{t})  ,   Q_{\alpha_2} (I_{t}  ) ,  Q_{\alpha_3}(\eta_{I,t}),    Q_{\alpha_4}(\varsigma_{\theta,t}) ))  )  \\ 
		&&\qquad  = F_{ \theta_{t+1} }(  f_t( Q_{\alpha_1}(\theta_{t})  ,   Q_{\alpha_2} (I_{t}  ) ,  Q_{\alpha_3}(\eta_{I,t}),    Q_{\alpha_4}(\varsigma_{\theta,t}) ) )   
	\end{eqnarray*}
	Similarly, we can write $Z_{\theta,t,m} = g_{\theta,t,m}(\tilde{g}_{\theta,t}^{-1}(\tilde{g}_{\theta,t}({\theta}_{t})),F^{-1}_{\eps_{\theta,t,m}}(\tilde{\eps}_{\theta,t,m})) = \tilde{g}_{\theta,t,m}(\tilde{g}_{\theta,t}({\theta}_{t}),\tilde{\eps}_{\theta,t,m})$, where $\tilde{\eps}_{\theta,t,m} \sim U[0,1]$, and the joint distribution of $(Z_{\theta,t,m},\tilde{g}_{\theta,t}({\theta}_{t}),\tilde{\eps}_{\theta,t,m})$ and the function  $\tilde{g}_{\theta,t,m}$ is identified for all $t$. Hence, we can identify
	\begin{eqnarray*} 
		&& Q_{{\alpha_5}}(Z_{\theta,t+1,m}  \mid \tilde{g}_{\theta,t+1}(\theta_{t+1}) ) =\tilde{g}_{\theta,t+1}(f_t( Q_{\alpha_1}(\theta_{t})  ,   Q_{\alpha_2} (I_{t}  ) ,  Q_{\alpha_3}(\eta_{I,t}),    Q_{\alpha_4}(\varsigma_{\theta,t}) ))  ) \\
		&&= 
		g_{\theta,t+1,m}(f_t( Q_{\alpha_1}(\theta_{t})  ,   Q_{\alpha_2} (I_{t}  ) ,  Q_{\alpha_3}(\eta_{I,t}),    Q_{\alpha_4}(\varsigma_{\theta,t}),Q_{\alpha_5}(\eps_{\theta,t+1,m})) 
	\end{eqnarray*}

	For the second part notice that
	\begin{align*}
		\tilde{g}_{I,t}(I_t(y))  &=  	\tilde{g}_{I,t}(h_t(\tilde{g}_{\theta,t+1}^{-1}(Q_{\alpha_1}(\tilde{g}_{\theta,t}(\theta_t))),y_{t}, F^{-1}_{\eta_{I,t}}(\alpha_3) )) =  	 \tilde{h}_t(  Q_{\alpha_1}(\tilde{g}_{\theta,t}(\theta_t)) ,y_{t},  \alpha_3  ) 
	\end{align*}
	is identified by the previous arguments.  	We can therefore identify 
	\begin{eqnarray*} 
		&& Q_{\alpha_4}(\tilde{g}_{\theta,t+1}(\theta_{t+1}) \mid \tilde{g}_{\theta,t}(\theta_{t}) = Q_{\alpha_1}(\tilde{g}_{\theta,t}(\theta_{t})), \tilde{g}_{I,t}(I_{t}) = \tilde{g}_{I,t}(I_t(y)) , \tilde{\eta}_{I,t} = \alpha_3  ) \\
		&&= \tilde{g}_{\theta,t+1}(f_t( \tilde{g}_{\theta,t}^{-1}(Q_{\alpha_1}(\tilde{g}_{\theta,t}(\theta_{t}) ) , \tilde{g}_{I,t}^{-1}( \tilde{g}_{I,t}(I_t(Y))  ) , F^{-1}_{\eta_{I,t}}(\alpha_3 ),    Q_{\alpha_4}(\varsigma_{\theta,t}) )) \\
		&&= \tilde{g}_{\theta,t+1}(f_t( Q_{\alpha_1}(\theta_{t})  ,  I_t(y)   , Q_{\alpha_2}(\eta_{I,t}),    Q_{\alpha_4}(\varsigma_{\theta,t}) )) 
	\end{eqnarray*}
	and
	\begin{align*}
		& F_{\tilde{g}_{\theta,t+1}(\theta_{t+1})  }  ( \tilde{g}_{\theta,t+1}(f_t( Q_{\alpha_1}(\theta_{t})  ,  I_t(y)   , Q_{\alpha_2}(\eta_{I,t}),    Q_{\alpha_4}(\varsigma_{\theta,t}) ))  ) \\
		& \qquad = F_{ \theta_{t+1}   }  ( f_t( Q_{\alpha_1}(\theta_{t})  ,  I_t(y)   , Q_{\alpha_2}(\eta_{I,t}),    Q_{\alpha_4}(\varsigma_{\theta,t}) ))      
	\end{align*}
	Moreover, 
	\begin{eqnarray*} 
		&& Q_{{\alpha_5}}(Z_{\theta,t+1,m} \mid \tilde{g}_{\theta,t+1}(\theta_{t+1}) )=\tilde{g}_{\theta,t+1}(f_t( Q_{\alpha_1}(\theta_{t})  ,  I_t(y)   , Q_{\alpha_2}(\eta_{I,t}),    Q_{\alpha_4}(\varsigma_{\theta,t}) ))   ) \\
		&&= 
		g_{\theta,t+1,m}(f_t( Q_{\alpha_1}(\theta_{t})  ,   Q_{\alpha_2} (I_{t}  ) ,  Q_{\alpha_3}(\eta_{I,t}),     I_t(y)),Q_{\alpha_5}(\eps_{\theta,t+1,m})).
	\end{eqnarray*}
	
	For the third part notice that 
	$$\tilde{g}_{\theta,t+1}(\theta_{t+1}) = \tilde{g}_{\theta,t+1}(f_t( \tilde{g}_{\theta,t}^{-1}(\tilde{g}_{\theta,t}(\theta_{t})) , \tilde{g}_{I,t}^{-1}(\tilde{g}_{I,t}(I_{t})),F^{-1}_{\eta_{I,t}}(\tilde{\eta}_{I,t}),   F_{\varsigma_{\theta,t}}^{-1}(\tilde{\varsigma}_{\theta,t}) ))  .$$
	with $\tilde{\varsigma}_{\theta,t} = F_{\varsigma_{\theta,t}}( {\varsigma}_{\theta,t}) \sim U[0,1]$. We can therefore write $\tilde{\varsigma}_{\theta,t}$ as an identified function of the random vector $(\tilde{g}_{\theta,t+1}(\theta_{t+1}), \tilde{g}_{\theta,t}(\theta_{t}), \tilde{g}_{I,t}(I_{t}), \tilde{\eta}_{I,t} )  $. Hence, the joint distribution of 
	$$(Q, \tilde{g}_{\theta,t}(\theta_{t}), \{\tilde{g}_{I,t}(I_{t})    \}^{T-1}_{s=t}, \{\tilde{\eta}_{I,s}\}^{T-1}_{s=t} , \{\tilde{\varsigma}_{\theta,s}   \}^{T-1}_{s=t}  )$$ 
	is identified. Finally,
	\begin{small}
		\begin{align*}
			&  P\left(Q \leq q   \mid \theta_t= Q_{\alpha_1}(\theta_t), \{I_s = Q_{\alpha_{2s}}(I_s) \}^{T-1}_{s=t}, \{\eta_{I,s} = Q_{\alpha_{3s}}(\eta_{I,s}) \}^{T-1}_{s=t} , \{\varsigma_{\theta,s} = Q_{\alpha_{4s}}(\varsigma_{\theta,s})   \}^{T-1}_{s=t}  \right) \\
			& =  P\left(Q \leq q   \mid \tilde{\theta}_t= Q_{\alpha_1}(\tilde{\theta}_t), \{\tilde{g}_{I,t}(I_{t}) = Q_{\alpha_{2s}}(\tilde{g}_{I,t}(I_{t}) ) \}^{T-1}_{s=t}, \{\tilde{\eta}_{I,s} = Q_{\alpha_{3s}}(\tilde{\eta}_{I,s}) \}^{T-1}_{s=t} , \{\tilde{\varsigma}_{\theta,s} = Q_{\alpha_{4s}}(\tilde{\varsigma}_{\theta,s})   \}^{T-1}_{s=t} \right).
		\end{align*}
	\end{small}
	
	The fourth part follows from identification of the distribution of $(Q,\tilde{g}_{\theta,t}(\theta_t), \{Y_s\}^{T-1}_{s=t})$.\end{proof}

\setstretch{1}

\bibliography{references}

\end{document}